\pdfoutput=1

\pdfoutput=1

\documentclass[11pt]{scrartcl}
\usepackage[a4paper, total={16cm, 24cm}]{geometry}
\usepackage{microtype} 
\usepackage{authblk}

\usepackage{xcolor}
\usepackage{graphicx}
\usepackage{pgfplots}
\pgfplotsset{compat=1.17}

\usepackage{amsmath, amssymb, amsthm, mathtools, nicefrac, siunitx}

\usepackage{thmtools}
\usepackage{thm-restate}

\newtheorem{theorem}{Theorem}[section]
\newtheorem{claim}[theorem]{Claim}
\newtheorem{example}[theorem]{Example}
\newtheorem{observation}[theorem]{Observation}
\newtheorem{proposition}[theorem]{Proposition}

\theoremstyle{definition}
\newtheorem{definition}[theorem]{Definition}

\usepackage[pagebackref]{hyperref}
\hypersetup{
	pdfencoding=auto,
	psdextra,
	colorlinks=true,
	citecolor=green!50!black,
	linkcolor=red!60!black,
}

\usepackage[nameinlink]{cleveref}

\usepackage{natbib}
\usepackage{algorithm}
\usepackage[noend]{algpseudocode}

\usepackage[inline]{enumitem}
\usepackage{booktabs}
\usepackage{subcaption}

\newcommand{\restatehere}[1]{%
	\marginline{\vspace{0.6cm}\footnotesize \hyperlink{original#1}{\hypertarget{restated#1}{[Main]}}}%
	\csname #1\endcsname*%
}

\usepackage{tabularx}

\Crefname{theorem}{theorem}{theorems}
\Crefname{theorem}{Theorem}{Theorems}

\Crefname{claim}{claim}{claims}
\Crefname{claim}{Claim}{Claims}

\Crefname{example}{example}{examples}
\Crefname{example}{Example}{Examples}

\Crefname{observation}{observation}{observations}
\Crefname{observation}{Observation}{Observations}

\Crefname{proposition}{proposition}{propositions}
\Crefname{proposition}{Proposition}{Propositions}

\Crefname{lemma}{lemma}{lemmas}
\Crefname{lemma}{Lemma}{Lemmas}

\Crefname{corollary}{corollary}{corollaries}
\Crefname{corollary}{Corollary}{Corollaries}

\Crefname{definition}{definition}{definitions}
\Crefname{definition}{Definition}{Definitions}

\Crefname{remark}{remark}{remarks}
\Crefname{remark}{Remark}{Remarks}

\usepackage{footnote}
\makesavenoteenv{algorithm}

\usepackage{caption}
\usepackage{subcaption}

\usepackage[inline]{enumitem}

\usepackage{url}            
\usepackage{booktabs}       
\usepackage{microtype}      
\usepackage[utf8]{inputenc} 

\newcommand{\NN}{\mathbb{N}} 
\newcommand{\RR}{\mathbb{R}} 
\newcommand{\RRR}{\mathbb{R}_{\geq 0}}

\newcommand{\bigO}{\mathcal{O}} 

\newcommand{\EFOEX}{\textsc{EF1-init Existence}}
\newcommand{\EFEX}{\textsc{EF-init Existence}}

\newcommand{\problemdef}[3]{
  \begin{center}
    \begin{minipage}{0.95\textwidth}
      \noindent
      \textsc{#1}

      \vspace{2pt}
      \setlength{\tabcolsep}{3pt}
      \begin{tabularx}{\textwidth}{@{}lX@{}}
        \textbf{Input:}     & #2 \\
        \textbf{Question:}   & #3
      \end{tabularx}
    \end{minipage}
  \end{center}
}

\newenvironment{claimproof}[1][\proofname]
{%
  \proof[#1]%
}
{%
  \endproof%
}

\DeclarePairedDelimiter\floor{\lfloor}{\rfloor}

\makeatletter
\newcommand{\alglineref}[1]{Line~\ref{#1}}
\makeatother

\makeatletter
  \newcommand{\LongState}[2][0pt]{%
    \State \hspace{#1}\parbox[t]{\dimexpr\linewidth-\ALG@thistlm}{\hangindent=0.25em \raggedright #2}%
    \vspace{0.3em}%
  }
  \newcommand{\LongStatex}[2][0pt]{%
    \Statex \hspace{#1}\parbox[t]{\dimexpr\linewidth-\ALG@thistlm-#1}{\hangindent=0.25em \raggedright #2}%
    \vspace{0.3em}%
  }
  \newcommand{\LongElsIfState}[1]{%

  \State\parbox[t]{\dimexpr\linewidth-\ALG@thistlm}{\hangindent=0.25em \raggedright \textbf{else if} #1 \textbf{then}}%
    \vspace{0.3em}%
  }
  \makeatother

\title{Fair Allocation with Initial Utilities\footnote{An extended abstract of the paper is scheduled to appear in the proceedings of the  \emph{25th International Conference on Autonomous Agents and Multiagent Systems
(AAMAS 2026)}. }}

\date{\vspace{-1.5cm}}

\author[1]{Niclas Boehmer}
\author[1]{Luca Kreisel}
\affil[1]{Hasso Plattner Institute, University of Potsdam, Germany} 
\affil[ ]{\texttt{Niclas.Boehmer@hpi.de,Luca.Kreisel@hpi.de}} 

\begin{document}

\maketitle

\bigskip
{\footnotesize\tableofcontents}

\newpage 
\begin{abstract}
	\begin{center}
		\textbf{\textsf{Abstract}} \smallskip
	\end{center}
The problem of allocating indivisible resources to agents arises in a wide range of domains, including treatment distribution and social support programs. An important goal in algorithm design for this problem is fairness, where the focus in previous work has been on ensuring that the computed allocation provides equal treatment to everyone. However, this perspective disregards that agents may start from unequal initial positions, which is crucial to consider in settings where fairness is understood as \emph{equality of outcome}. In such settings, the goal is to create an equal final outcome for everyone by leveling initial inequalities through the allocated resources. To close this gap, focusing on agents with additive utilities, we extend the classic model by assigning each agent an initial utility and study the existence and computational complexity of several new fairness notions following the principle of equality of outcome. Among others, we show that complete allocations satisfying a direct analog of envy-freeness up to one resource (EF1) may fail to exist and are computationally hard to find, forming a contrast to the classic setting without initial utilities. We propose a new, always satisfiable fairness notion, called minimum-EF1-init and design a polynomial-time algorithm based on an extended round-robin procedure to compute complete allocations satisfying this notion.	
	\end{abstract}

\section{Introduction}

Allocating resources to agents is a central problem in algorithmic decision-making, with applications ranging from task and house allocation to treatment distribution and scheduling \citep{DBLP:conf/icml/0002B023,moulin2019fair,DBLP:journals/sigecom/GoldmanP14,bastani2021efficient,moulin2004fair,DBLP:conf/aaai/AleksandrovW20,emanuel2023shared}. Thus, the problem has been widely studied across artificial intelligence, machine learning, multi-agent systems, operations research, and computational social choice. Much of this literature focuses on the design and analysis of algorithms that compute allocations satisfying formal fairness criteria, for instance, envy-freeness \citep{DBLP:journals/ai/AmanatidisABFLMVW23,DBLP:reference/choice/BouveretCM16,DBLP:journals/jair/LiuLSW24}.

Both outside and inside computer science, the discourse on fairness distinguishes between two paradigms:
On the one hand, \emph{equality of opportunity} (also known as \emph{formal equality} in law) demands equal treatment for everyone regardless of individual circumstances \citep{sep-equal-opportunity,DBLP:conf/nips/HardtPNS16}.
On the other hand, \emph{equality of outcome} (also known as \emph{substantive equality} in law) acknowledges the existence of initial disparities and seeks to create an equal final outcome for everyone by leveling them \citep{barnard2000substantive,moulin2004fair}. These two principles stand in an inherent conflict with each other, as countering initial disparities oftentimes necessitates the prioritization of those starting from an inferior position, contradicting the principle of equal treatment.\footnote{This conflict recently surfaced in an executive order of US President Trump \citep{trump2025equality}. }

In fair allocation, an overwhelming majority of works adhere to equality of opportunity: They assess fairness by how the allocation distributes resources---irrespective of agents’ starting positions \citep{DBLP:journals/ai/AmanatidisABFLMVW23,DBLP:reference/choice/BouveretCM16,DBLP:journals/jair/LiuLSW24}.
For example, under the classic notion of envy-freeness, an allocation is considered fair if no agent prefers the bundle allocated to another agent to their own.
However, this ignores initial inequalities prevalent in many application areas of fair allocation.
For example, in medical resource allocation programs, patients have different initial probabilities of recovery; educational interventions serve students with varying levels of prior knowledge or preparedness; and more generally, support programs often address individuals from diverse backgrounds and starting positions~\citep{csdh2008,dong2020does,emanuel2023shared}.

Motivated by the prevalence of such disparities, this paper initiates the study of equality of outcome in algorithmic fair allocation. In many real-world allocation problems, decision makers aim for equal post-allocation outcomes instead of equal treatment \citep{LANE201711,scottishgov2021fairer}. One example is the Fairer Scotland Duty \citep{scottishgov2021fairer},
which places a statutory responsibility on certain public bodies to consider how their decisions can help reduce inequalities of outcome.
Notably, existing, intensively studied formal models of fairness reflecting equality of opportunity fail to capture this goal of equality of outcome, a gap that we address in this work. To give another concrete motivating example, consider the following real-world case: the Indian NGO ARMMAN runs large-scale mobile health programs for new mothers, enrolling millions across the country \citep{armman2025}. To boost program engagement, ARMMAN can make a limited number of weekly calls to participants, a sequential resource allocation problem \citep{DBLP:journals/aim/VermaSMVGMHTJTT23}. Crucially, ARMMAN seeks to prioritize calls to mothers with lower initial engagement, often from historically marginalized communities: Rather than allocating calls proportionally across demographics, their goal is to ensure that engagement levels between different demographic groups are equalized post-intervention \citep{DBLP:conf/uai/VermaZSBTT24}, thus following the principle of equality of outcome.

\subsection{Our Contributions}
We initiate the formal study of fair allocation for agents with initial disparities, focusing on fairness notions aligned with the principle of equality of outcome. We do so by equipping each agent $i$ with an agent-specific \emph{initial utility} value $b_i\in \mathbb{R}_{\geq 0}$ and measuring fairness in terms of the utility the agent derives from their bundle plus $b_i$. Throughout, we restrict attention to additive utilities, as is a standard and well-established choice in the literature, particularly when introducing new settings or fairness notions~\citep{DBLP:journals/ai/AmanatidisABFLMVW23,DBLP:journals/sigecom/AzizLMW22,DBLP:conf/aamas/BarmanG0JN19,DBLP:journals/aamas/BouveretL16,DBLP:journals/teco/ChakrabortyISZ21,DBLP:journals/jacm/KurokawaPW18}, as done in our work.

In \Cref{first-attempt}, we present direct adaptations of classical fairness notions -- envy-freeness (EF) and envy-freeness up to one resource (EF1) -- to our setting, which we term EF-init and EF1-init, respectively. Concretely, for EF-init (EF1-init), we require for every pair of agents $i$ and $j$ that agent $i$'s initial utility plus $i$'s utility for their own bundle is at least as large as agent $j$'s initial utility plus agent $i$'s utility for $j$'s bundle (after removing a resource).
Our results reveal that introducing initial utilities fundamentally changes the nature of the problem: even in simple cases with identical resources, complete EF1-init allocations may fail to exist, and deciding whether such an allocation exists is NP-complete. This stands in stark contrast to the classical setting, where complete EF1 allocations are always guaranteed and efficiently computable.

Nevertheless, we show that for a constant number of agents (under unary encoding of utility values), the existence of a complete EF-init and EF1-init allocations can be decided efficiently using dynamic programming. Furthermore, for the special case of identical resources, we present a polynomial-time algorithm for deciding the existence of EF-init allocations.
From a practical standpoint, the allocation of identical resources is a natural and frequently encountered scenario, for example when allocating treatments or interventions (e.g., the allocation task faced by ARMMAN described above).
From a theoretical perspective, our result highlights the added complexity introduced by initial utilities: while in the classical setting without initial utilities any envy-free allocation simply assigns the same number of resources to each agent, in our model, ensuring fairness requires a technically involved dynamic programming approach and a meticulous analysis of the interplay between initial utilities and allocated bundles.

The non-existence of complete EF1-init allocations limits their practical usefulness and highlights the need for fundamentally new approaches. In \Cref{sec:satisfiable}, to address this limitation, we search for an envy-based fairness criterion that is always satisfiable. We present minimum-EF1-init, a new envy-based fairness notion that is guaranteed to exist.
The core idea is to relax EF1-init when evaluating whether an agent $i$  with higher initial utility envies an agent $j$ with lower initial utility.
Instead of comparing the sums of the initial utilities and $i$'s values for the bundles, we take a different approach: We first allow $j$ to level the initial utility difference between $i$ and $j$ with a subset $X^*$ of $j$'s resources which, from $j$'s perspective, equalizes this difference. Then, we compare $i$'s bundle to $j$'s bundle without $X^*$ under EF1.
We show that minimum-EF1-init coincides with EF1-init in settings where resources' usefulness is diminishing in initial utility. 
We develop an adaptation of the classic round-robin algorithm for the setting with initial utilities that computes an allocation satisfying minimum-EF1-init. In our version, agents still select their most preferred available resource in rounds, but participation in each round is restricted: Initially, only agents with the lowest initial utility are present, and additional agents are added to the picking order as soon as all present agents have reached their initial utility.
While our algorithm is simple to state, making it an appealing distribution algorithm in and of itself, the proof that it guarantees  minimum-EF1-init is more involved, requiring a careful analysis of the position where newly added agents are inserted into the picking order.

\subsection{Related Work}\label{sub:rel}

There is a rich body of work studying the fair allocation of indivisible resources, considering various preference models and fairness concepts \citep{DBLP:journals/ai/AmanatidisABFLMVW23,DBLP:reference/choice/BouveretCM16,DBLP:journals/jair/LiuLSW24}. Our work aligns with two major strands in this literature.
First, alongside share-based notions such as the maximin share, envy-based concepts--as considered here--have arguably received the most attention \citep{DBLP:journals/ai/AmanatidisABFLMVW23,DBLP:conf/sigecom/LiptonMMS04,DBLP:conf/ec/CaragiannisGH19}.
Second, additive utility functions constitute the standard preference model in this domain due to their relative simplicity and expressive power \citep{DBLP:journals/teco/CaragiannisKMPS19,DBLP:journals/ai/AmanatidisABFLMVW23,DBLP:reference/choice/BouveretCM16}.

While fair allocation has been studied extensively, there are, to the best of our knowledge, only two works that extend fair allocation models to account for generally differing starting positions prior to the allocation process  \citep{DBLP:conf/aaai/HV0V25,DBLP:conf/aaai/DeligkasEGGI25}.
Both focus on the following problem: given a partial initial allocation, can it be extended to a complete allocation satisfying a specified, traditional fairness criterion?
They analyze the (parameterized) computational complexity of this extension problem under various preference models and fairness notions, highlighting its general intractability.
Our work relates to theirs in that our model can be viewed as a special case of theirs, where all agents assign identical utilities to the initially allocated resources and each agent holds at most one resource in the initial allocation. As a result, some of their positive complexity results for restricted cases carry over to our setting (see Appendix~\ref{app:RW} for a detailed discussion).
However, our study fundamentally differs in two key aspects: First, instead of focusing on computational aspects of existing fairness axioms, we design new fairness notions (following the principle of equality of outcome) tailored to the setting with initial utilities. Second, by putting a focus on the case of initial utilities, we are able to obtain stronger axiomatic guarantees more relevant to our setting. Our negative results can be seen as strengthened versions of analogous results in the setting considered by  \citet{DBLP:conf/aaai/HV0V25} and \citet{DBLP:conf/aaai/DeligkasEGGI25}. In a related direction, \citet{ijcai2025p417} investigate a setting where the set of resources available for completion does not need to be fully allocated and can potentially be unbounded.

Moreover, our work connects to the study of fair allocation with subsidies \citep{DBLP:conf/sagt/HalpernS19}, in which an additional divisible resource can be used to make an allocation of indivisible resources envy-free, and to lines of work on fair allocation with agents with different entitlements  \citep{DBLP:journals/jair/FarhadiGHLPSSY19,DBLP:journals/teco/ChakrabortyISZ21}  and budget constraints \citep{DBLP:journals/corr/abs-2012-03766,DBLP:conf/aaai/Barman0SS23}.
More broadly, our work fits into the study of completion problems in computational social choice, such as possible and necessary winner problems in voting \citep{DBLP:journals/jair/XiaC11} and stable matching \citep{DBLP:journals/tcs/DiasFFS03,DBLP:journals/tcs/0001BFGHMR22}.

\section{Preliminaries}
Let $[\ell]\coloneqq\{1, \dots, \ell\}$ for $\ell \in \mathbb{N}$. We define $\RRR \coloneqq \{x \in \RR \mid x\geq 0\}$. To highlight that an inequality follows from a previous result, definition, or equation, we add a reference in the subscript, e.g., by writing $\leq_{(*)}$ for an inequality that follows from a previously introduced equation $(*)$.

We consider the problem of allocating a set $R$ of $m$ \emph{indivisible resources} among a set $A \coloneqq [n]$ of $n \geq 2$ \emph{agents}.  We refer to subsets $X\subseteq R$ of resources as \emph{bundles} and
denote by $2^R$ the set of all bundles.
Each agent $i\in A$ has a \emph{utility function} $u_i : 2^R \to \RRR$, where $u_i(\emptyset)=0$.\footnote{To rule out trivial edge cases, we assume that for every agent $i \in A$, there exists some resource $r \in R$ such that $u_i(\{r\})>0$, and for every resource $r \in R$, there exists some agent $i \in A$ such that $u_i(\{r\}) > 0$.}\textsuperscript{,}\footnote{Regarding the encoding of utility functions, all our hardness results hold even if utility functions are encoded in unary. If not stated otherwise, our positive results hold for both unary- and binary-encoded utility functions.}
We assume throughout that utility functions are \emph{additive}, i.e., $u_i(X)= \sum_{r \in X} u_i(\{r\})$ for any $i\in A$ and $X \subseteq R$.
In certain cases, we consider \emph{identical resources}, where
$u_i(\{r\})=u_i(\{r'\})$ for all $i \in A$ and resources $r,r' \in R$. An \emph{allocation} $\mathcal{X}$ is a tuple of $n$ disjoint bundles $(X_1, \dots, X_n)$ such that $X_i \cap X_j = \emptyset$ for all distinct $i,j \in A$, where $X_i$ is \emph{allocated/assigned} to agent $i\in A$.
An allocation $\mathcal{X}$ is \emph{complete} if $\bigcup_{i\in A} X_i = R$, which can be viewed as a very moderate efficiency criterion.
We study the problem of finding fair and complete allocations.
Two popular fairness notions for indivisible resources are \emph{envy-freeness} (EF) and its relaxation, \emph{envy-freeness up to one resource} (EF1). An allocation $\mathcal{X}$ is EF if $u_i(X_i)\geq u_i(X_j)$ for each $i,j\in A$. It is EF1 if, for every pair of agents $i,j \in A$, there is a resource $r\in X_j$ such that $ u_i(X_i) \geq  u_i(X_j\setminus \{r\})$ or $X_j=\emptyset$. For a variant of envy-freeness, we say that an agent $i\in A$ envies another agent $j\in A$ \emph{under} this variant, if the pair $i,j \in A$ violates the conditions of the variant, e.g., $i$ envies $j$ under EF if $u_i(X_i)<u_i(X_j)$.

\paragraph{Initial Utilities}
We study the new problem of fairly allocating indivisible resources in the presence of \emph{initial utilities}. That is, we assume that every agent $i\in A$ has an initial utility $b_i\in \RRR$.
Thus, after the allocation $\mathcal{X}$, agent $i$ attains utility $b_i + u_i(X_i)$.
Conceptually, we assume that initial utilities are known to the allocation algorithm and comparable across agents (we address this assumption in more detail in the discussion section), i.e., unlike with bundles to which two agents might assign different utilities, the agents agree on their initial utility values. The classical setting is recovered by taking $b_i = 0$ for all $i \in A$.
We group agents by initial utility as follows:
\begin{definition}\label{def:levels}
  The agents are partitioned into $t\in [n]$ \emph{levels} $L_1, \dots, L_t$, where each level $L_h$ with $h \in [t]$  contains all agents with the same initial utility. The levels are indexed such that $b_i < b_j$ for any $i \in L_h$, $j \in L_{h'}$, and $h < h'$.
\end{definition}

  \section{A First Attempt at Adapting Envy Notions to Initial Utilities} \label{first-attempt} 

  We present first adaptations of the classical fairness notions of \emph{envy-freeness}  and its relaxation, \emph{envy-freeness up to one resource}, to the setting with initial utilities. Following the equality of outcome principle, we are interested in measuring the fairness of an allocation $\mathcal{X}$ in terms of the total utility of agents after resources have been allocated, i.e., their initial utility plus the utility they derive from $\mathcal{X}$. Following this rationale, we adapt EF and EF1 as follows:
  \begin{definition}[EF-init]\label{def:EFinit}
    An allocation $\mathcal{X}$ is \emph{EF-init}, if for every pair of agents $i,j \in A$ either $X_j = \emptyset$\footnote{This condition implies that an agent $i\in A$ with $b_i+u_i(X_i)<b_j$ does not envy an agent $j$ under EF-init or EF1-init if $X_j = \emptyset$, as otherwise achieving an envy-free allocation would be impossible in settings with large initial utility disparities.} or it holds that
    $b_i + u_i(X_i) \geq b_j + u_i(X_j)$.
  \end{definition}
  \begin{definition}[EF1-init]\label{def:EF1init}
    An allocation $\mathcal{X}$ is \emph{EF1-init}, if for every pair of agents $i,j \in A$ either $X_j = \emptyset$ or there exists a resource $r\in X_j$ such that
    $
    b_i + u_i(X_i) \geq b_j + u_i(X_j\setminus \{r\})
    $.
  \end{definition}
  In the classical setting without initial utilities, a complete EF1 allocation always exists, while a complete EF allocation may not exist, even with two agents and a single resource. In contrast, with initial utilities, even complete EF1-init allocations fail to exist:
  \begin{observation}\label{prop:EF1_init:nonex}
    There exists an instance with two agents and identical resources in which no complete allocation satisfies EF1-init.
  \end{observation}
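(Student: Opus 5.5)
We prove the observation by an explicit instance: two agents, two identical resources, and an initial utility gap of intermediate size. The intuition is that the poorer agent needs both resources to catch up, while the richer agent still envies when she gets nothing, so no complete split works.

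\textbf{The instance.} Let $A=\{1,2\}$ and $R=\{r_1,r_2\}$. Set
$$u_1(\{r\})=1,\qquad u_2(\{r\})=3 \qquad\text{for both resources } r\in R,$$
so the resources are identical. Let $b_1=0$ and $b_2=2$.

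\textbf{Reducing to three cases.} Because the resources are identical, every complete allocation is determined, up to symmetry, by the number $k\in\{0,1,2\}$ of resources that agent~$1$ receives. Agent~$2$ then receives the remaining $2-k$. For each $k$ we exhibit a pair violating \Cref{def:EF1init}.

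\textbf{Case $k=2$.} Here $X_2=\emptyset$ and $X_1=R$. Consider agent~$2$ looking at agent~$1$. Removing any single resource $r\in X_1$ gives
$$b_2+u_2(X_2)=2 < 3 = b_1+u_2(X_1\setminus\{r\}).$$
So agent~$2$ envies agent~$1$ under EF1-init.

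\textbf{Case $k=1$.} Consider agent~$1$ looking at agent~$2$. Removing the unique resource in $X_2$ gives
$$b_1+u_1(X_1)=1 < 2 = b_2+u_1(\emptyset).$$
So agent~$1$ envies agent~$2$.

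\textbf{Case $k=0$.} Here $X_2=R$. Consider agent~$1$ looking at agent~$2$. Removing any one resource from $X_2$ gives
$$b_1+u_1(X_1)=0 < 3 = b_2+1.$$
So agent~$1$ envies agent~$2$.

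In every case the allocation is not EF1-init, so no complete EF1-init allocation exists for this instance.

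\textbf{Where the difficulty lies.} The only real work is choosing the parameters; the verification above is routine. The gap $b_2$ must be small enough that agent~$2$ still envies when left empty-handed, which requires $b_2 < u_2(\{r\})$. It must also be large enough that one resource does not let agent~$1$ catch up, which requires $b_2 > u_1(\{r\})$. Choosing $u_2(\{r\}) > b_2 > u_1(\{r\})$ satisfies both.
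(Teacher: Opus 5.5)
Your proof is correct and follows the same route as the paper: an explicit two-agent instance with identical resources, in which the initially worse-off agent also values each resource less, checked by exhausting all splits (the paper uses four resources with initial utilities $1$ vs.\ $10$ and per-resource values $3$ vs.\ $10$). One difference is that in the paper's instance the two agents' constraints on the bundle-size difference are incompatible for every $m\ge 2$, whereas yours force the difference to be exactly one, so your nonexistence depends on $m=2$ being even (with three resources, giving agent~$1$ two and agent~$2$ one is EF1-init), which suffices for the observation but shows less directly the conflict the paper later uses to motivate min-EF1-init.
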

  \begin{proof}
    Consider four identical resources $R$ and two agents: agent $j$ with $b_j = 1$ and $u_j(\{r\}) = 3$ for all $r\in R$, and agent $i$ with $b_i = 10$ and $u_i(\{r\}) = 10$ for all $r \in R$. In any complete allocation where $i$ receives at least one resource, $j$ has to get at least three resources, as otherwise $j$ envies $i$ under EF1-init. However, in allocation $\mathcal{X}$ where $j$ receives three and $i$ receives  one resource, $i$ envies $j$  under EF1-init, as $
    b_i + u_i(X_i) = 10+10<1 + 2\cdot 10 = b_j +u_i(X_j\setminus \{r\})$
    for any $r \in X_j$.
  \end{proof}

  Given the non-existence of complete EF-init and EF1-init allocations, we analyze the complexity of determining whether a given instance with initial utilities admits such an allocation. We refer to these computational problems as \EFEX{}  and \EFOEX{}, respectively.

  \paragraph{\EFOEX{}}
  We begin with EF1-init allocations, noting that, in the classical setting, EF1 allocations can be found via a simple round-robin algorithm in polynomial time. However, with initial utilities, the problem becomes computationally hard:

  \begin{restatable}{theorem}{EFEXHard}\label{prop:EFEX_hard}
    \EFOEX{} is NP-complete.
  \end{restatable}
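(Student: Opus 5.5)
Membership in NP is immediate: a complete allocation is a polynomial-size certificate, and EF1-init can be checked for all $O(n^2)$ ordered pairs. For each pair $(i,j)$ it suffices to remove from $X_j$ the resource that $i$ values most. For NP-hardness I would reduce from a strongly NP-hard number problem, since the hardness should hold under unary encoding. I would try \textsc{3-Partition} first: given $3k$ integers $a_1,\dots,a_{3k}$ summing to $kT$, each strictly between $T/4$ and $T/2$, decide whether they can be split into $k$ triples each summing to $T$. \Cref{prop:EF1_init:nonex} suggests the mechanism. An agent with low initial utility must get enough resources to reach a richer agent, but not so many that the richer agent's EF1-init condition breaks. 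This squeezes the feasible bundle values into a narrow window.

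\textbf{Construction.} I would introduce $k$ ``bin'' agents with initial utility $0$, one resource $r_\ell$ per integer $a_\ell$, and one or more ``guard'' agents with high initial utility $B$ chosen close to $T$. Each bin agent values $r_\ell$ at $a_\ell$. Each guard values every item at a common large value $M$, or proportionally to $a_\ell$. The intended effect of the guards is the following.
\begin{itemize}
\item No bin agent may end with total value much below $T$, because it would envy-up-to-one a guard holding some item.
\item The guards must receive some items, for instance padding items that the bin agents value at $0$, so that their EF1-init constraint toward a bin agent is active.
\item Hence each bin agent must reach roughly $T$.
\end{itemize}
Since the item values sum to $kT$, every bin agent must get exactly $T$, which encodes a 3-partition. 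The size bounds $T/4 < a_\ell < T/2$ force triples. They also make the ``up to one resource'' slack controllable, because removing one item changes a bundle's value by a bounded amount.

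\textbf{Calibration.} The key design task is to choose the initial utilities and guard valuations so that the EF1-init slack of one item does not let a bin agent stop short of $T$. One option is to make guard bundles consist of many tiny-valued padding items, so that removing one of them barely helps. Another is to scale all $a_\ell$ by a large factor and add a small offset. I would set each guard's utility for padding items to $1$ and give the guards initial utility $B$ together with $p$ padding items, so that a bin agent's condition reads $u(X_{\text{bin}}) \ge B + (p-1)\cdot(\text{its value for padding})$. Giving bin agents a value of $1$ per padding item makes this $u \ge B+p-1$. Setting $B+p-1 = T$ and checking the guard's envy toward bin agents then completes the construction.

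\textbf{Correctness.} I would verify both directions.
\begin{itemize}
\item A 3-partition yields a complete EF1-init allocation. Here I would check every pair, including bin-to-bin pairs, where all agents share initial utility $0$ and have additive values.
\item Any complete EF1-init allocation forces bin bundles of value exactly $T$. In particular, padding items must go to the guards.
\end{itemize}

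\textbf{Main obstacle.} I expect the hardest part to be ruling out deviant allocations in the reverse direction. Examples are bin agents grabbing padding items, guards taking number items, and bin-to-bin envy constraints failing. To handle these I would make bin agents value other bins' number items at $0$ or use distinct item types, accepting more item types if needed. The identical-resources strengthening is not required here, so I would avoid it.
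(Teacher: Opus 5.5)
Your NP-membership argument is fine. The reverse direction, however, has a genuine gap exactly where you flag it as delicate: nothing in your construction forces a guard to receive any item. Because EF1-init exempts comparisons against an empty bundle, the bin lower bound $u(X_{\text{bin}})\ge B+p-1$ only exists if the allocation \emph{chooses} to give the guard all its padding. And once bins value padding positively, padding becomes slack: it can be shifted from the guard to a bin that falls short of $T$.

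Concretely, take your committed calibration: bins and guard value padding at $1$, the guard values $r_\ell$ at $a_\ell$, and $B+p-1=T$. Consider the no-instance $6,6,6,6,7,9$ with $k=2$, $T=20$; a triple containing $9$ would need two further numbers summing to $11$, which is impossible. Give bin $1$ the items $9,6,6$ (value $21$). Give bin $2$ the items $7,6,6$ plus one padding item (value $20$). Give the guard the remaining $p-1$ padding items. Now check every pair:
\begin{itemize}
\item Bins toward the guard: for $p=1$ the guard is empty, so these comparisons are vacuous. For $p\ge 2$ a bin only needs $B+p-2=T-1=19$, which both bins meet.
\item Guard toward the bins: the guard's left-hand side is $B+p-1=20$, and $20\ge 21-9$ and $20\ge 20-7$.
\item Bin to bin: EF1 holds, since $21\ge 20-7$ and $20\ge 21-9$.
\end{itemize}
So a complete EF1-init allocation exists on a no-instance. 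The other option you list, a common large value $M$ for number items, is not available either, because the forward direction needs $B+p\ge 2M$. Your proposed remedy of letting bins value other bins' number items at $0$ also fails. Number items are not pre-assigned to bins, and the counting step (``values sum to $kT$, so every bin gets exactly $T$'') needs all bins to value every $r_\ell$ at $a_\ell$.

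The missing idea is to make padding worthless to bins and let the guard's \emph{own} envy force it to be non-empty. For instance:
\begin{itemize}
\item bins value padding at $0$;
\item a single guard $g$ with $b_g=T$ values each padding item at $T+1$ and every $r_\ell$ at $0$;
\item there are $k+1$ padding items.
\end{itemize}
If $X_g=\emptyset$, some bin $h$ holds two padding items. Then $u_g(X_h\setminus\{r\})\ge T+1>T=b_g+u_g(X_g)-b_h$ for every $r\in X_h$, so $g$ envies $h$. Hence $X_g\neq\emptyset$, and bin $h$'s condition toward $g$ gives $u_h(X_h)\ge T+u_h(X_g\setminus\{r\})\ge T$. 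Since the bins' values sum to at most $kT$, every bin holds number items of total exactly $T$, which is a triple by the size bounds. The forward direction only needs $T\ge T+0$, $T\ge T-a_\ell$, and a trivial check for $g$.

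With this fix, your route is a valid alternative to the paper's. The paper reduces from \textsc{Equitable Coloring} and uses the high-initial-utility agents the other way around. Its edge agents are forced to stay \emph{empty}, since every color agent's total is below $|V|<b_e$, and their envy toward color agents encodes properness. EF1 among the equal-level color agents with unit values encodes equitability. Your repaired 3-Partition reduction is shorter and directly gives strong NP-hardness. The paper obtains the same via polynomially bounded utilities.
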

  \newcommand{\EQC}{\textsc{Equitable Coloring}}
    Our reduction is from \EQC{}. To define \EQC{}, we first introduce some necessary definitions.
    \begin{definition}
      For some $\ell \in \NN$, an \emph{$\ell$-coloring} of a graph $G=(V,E)$ assigns a color $c(v)\in [\ell]$ to every vertex $v\in V$. We call an $\ell$-coloring \emph{proper} if it holds for every edge $e=\{u,v\}$ that $c(u)\neq c(v)$. The \emph{color class} $V(h)\subseteq V$ of color $h \in [\ell]$ is the subset of vertices colored with color $h$, formally, $V(h) \coloneqq \{ v \in V \mid c(v) = h \}$.
      We say that an $\ell$-coloring is \emph{equitable} if it is proper and for any two colors $h,h' \in [\ell]$, it holds that $|V(h)| \leq |V(h')|+1$.
    \end{definition}
    For some given $\ell \in \NN$, the \EQC{} problem asks whether a given graph admits an equitable $\ell$-coloring.
    \problemdef{\EQC{}}
    {A graph $G=(V,E)$ and a number $\ell \in \NN$. }
    {Is there an equitable $\ell$-coloring of $G$?}
    \newcommand{\GC}{\textsc{Graph Coloring}}

    The NP-hardness of \EQC{} can be shown by a straightforward reduction (outlined below) from \GC{}, which is defined as follows and is known to be NP-hard (see, e.g., \citet{gareyJ79}).
    \problemdef{\GC{}}
    {A graph $G=(V,E)$ and a number $\ell \in \NN$. }
    {Is there a proper $\ell$-coloring of $G$?}
    We shortly sketch the reduction from \GC{} to \EQC{}. Given a graph $G=(V,E)$ and $\ell \in \NN$, we construct a graph $G'$ from $G$ by adding $(\ell-1) \cdot |V|$ isolated vertices. Clearly, $G$ admits a proper $\ell$-coloring if and only if $G'$ admits an equitable $\ell$-coloring:
    Any proper $\ell$-coloring of $G$ can be extended to an equitable $\ell$-coloring of $G'$ by coloring the added isolated vertices such that every color class has size $|V|$.
    Conversely, any equitable $\ell$-coloring of $G'$ needs to induce a proper  $\ell$-coloring when restricted to $G$.

    Now, we are ready to prove that \EFOEX{} is NP-complete. Our reduction is inspired by a reduction due to \citet{hosseiniSVWX20}\footnote{The reduction can be found in the full version of their paper \citep{DBLP:journals/corr/abs-1907-02583}.}, who reduce from a variant of \EQC{} where all color classes are required to be of the same size to prove NP-hardness of deciding the existence of an envy-free allocation (without initial utilities) for binary utilities\footnote{Agents either approve or disapprove a resource. The utility for a bundle $X\subseteq R$ is the number of approved resources in $X$.}.
    \begin{proof}[Proof of \Cref{prop:EFEX_hard}]
      First, we observe that \EFOEX{} is in NP: Given an allocation $\mathcal{X}$, we can check for any pair of agents $i,j \in A$ whether agent $i$ envies $j$ under EF1-init.
      To show NP-hardness, we reduce from \EQC{}. Given a graph $G=(V,E)$ and a number $\ell \in \NN$, we construct an instance for \EFOEX{} with agents $A$ and resources $R$ as follows:
      \begin{itemize}
        \item The set of agents $A$ is the union of two sets $A_E$ and $A_C$. The set $A_E$ contains an \emph{edge agent} $e$ for each edge $e\in E$. $A_C$ contains a \emph{color agent} $h$ for every color $h \in [\ell]$.
        \item For every vertex $v\in V$, the set $R$ contains a \emph{vertex resource} $v$.
        \item The utility functions and the initial utilities of the agents are defined as follows. All color agents from $A_C$ have an initial utility of $0$ and get a utility of $1$ for every resource, that is, $u_i(X)=|X|$ for any $i\in A_C$ and $X\subseteq R$. All edge agents $e \in A_E$ have an initial utility of $|V|+1$.
          An edge agent $e$ for edge $e=\{u,v\}\in E$ gets a utility of $|V|+2$ for the two vertex resources corresponding to edge $e$'s endpoints $u$ and $v$. All other resources do not increase the utility of $e$. Formally, $u_e(X)=|e\cap X|\cdot (|V|+2)$ for any $X\subseteq R$.
      \end{itemize}
      This construction can clearly be computed in polynomial-time.
      It remains to prove that $G$ admits an equitable $\ell$-coloring if and only if the constructed instance has a complete EF1-init allocation.
      \paragraph{($\Rightarrow$):}
      Given an equitable $\ell$-coloring of $G$, we construct an allocation $\mathcal{X}$ by assigning the vertex resource $v$ to the color agent $c(v)$.
      As every vertex is colored, the constructed allocation $\mathcal{X}$ is complete. Furthermore, note that only color agents receive resources.

      Next, we show that $\mathcal{X}$ is EF1-init.  Consider any two  agents $i,j \in A$.
      If $j \in A_E$, then, as $X_j=\emptyset$, agent $i$ does not envy agent $j$ under EF1-init.
      Next, suppose that $i \in A_E$ and $j \in A_C$. If $X_j=\emptyset$, then agent $i$ does not envy agent $j$ under EF1-init. Otherwise, we choose a resource $r\in X_j$ as follows. First, assume that $X_j \cap i \neq \emptyset$, that is, one of the endpoints of the edge corresponding to $i$ is assigned color $j$. Then, we choose a resource $r\in  X_j \cap i$, otherwise, we choose an arbitrary resource $r\in X_j$. Note that it needs to hold that $|X_j \cap i|\leq 1$, as the coloring would otherwise not be proper. Thus, we have that $(X_j\setminus \{r\}) \cap i = \emptyset$ and thus $
      b_j + u_i(X_j\setminus \{r\})=0.
      $
      Therefore, agent $i$ does not envy agent $j$ under EF1-init.

      Finally, assume that $i,j \in A_C$. Since the coloring is equitable, we have that $|X_j|\leq |X_i|+1$.
      Therefore, either $X_j=\emptyset$ or it holds for any $r\in X_j$ that
      \[
        b_i+ u_i(X_i)=u_i(X_i)=|X_i|\geq |X_j|-1 \geq b_j + u_i(X_j\setminus \{r\}),
      \]
      so agent $i$ does not envy agent $j$ under EF1-init.
      As we have exhausted all possible cases, it follows that $\mathcal{X}$ is EF1-init.
      \paragraph{($\Leftarrow$):}
      Let $\mathcal{X}$ be any complete EF1-init allocation for the constructed instance.
      We first prove that no edge agent can have a resource in $\mathcal{X}$. Suppose for the sake of contradiction that there is an edge agent $j \in A_E$ with $X_j\neq \emptyset$. Pick any color agent $i \in A_C$. Note that for all $r\in X_j$, it holds that
      \[
        b_i+ u_i(X_i)<|V|<|V|+1 = b_j \leq  b_j + u_i(X_j \setminus \{r\}).
      \]
      Thus, agent $i$ envies agent $j$ under EF1-init. Therefore, only color agents can be allocated resources in $\mathcal{X}$.

      Next, we construct an $\ell$-coloring in $G$ from the allocation $\mathcal{X}$ by coloring every vertex $v\in V$ with the color corresponding to the color agent $h \in A_C$ such that $v\in X_h$. As no edge agent can have a resource and $\mathcal{X}$ is complete, every vertex is assigned a color.
      First, we argue that the coloring is proper. Consider any edge $e=\{u,v\}\in E$. Suppose that $u$ and $v$ have the same color $h$. Then, $e \subseteq X_h$. However, this implies that the edge agent for $e$ envies the color agent $h$ under EF1-init, as $X_h\neq \emptyset$, and for all $r\in X_h$,
      \[
        b_e + u_e(X_e) = |V| + 1 < |V|+2 \leq b_h + u_e(X_h \setminus \{r\}).
      \]
      This contradicts that $\mathcal{X}$ is EF1-init, therefore the constructed coloring is proper.

      It remains to show that the coloring is also equitable. Assume for the sake of contradiction that there are two colors $h,k \in [\ell]$ such that $|V(h)|>|V(k)|+1$. However, this implies that the color agent $k$ envies color agent $h$ under EF1-init. We have that  $X_h\neq \emptyset$, and for all $r\in X_h$, it holds that
      \[
        b_k + u_k(X_k) = |X_k| < |X_h|-1 = b_h + u_k(X_h\setminus \{r\}).
      \]
      Thus, the constructed coloring is an equitable $\ell$-coloring of $G$, which completes the proof.
    \end{proof}
  One can check that our hardness proof also applies to the special case where, for each $i\in A$, we have $b_i\in \{ 0, x\}$ for some $x \in \RRR$.

  \paragraph{\EFOEX{}/\EFEX{} for Few Agents}
  On the positive side, following a dynamic programming approach, deciding the existence of a complete EF1-init or EF-init allocation becomes tractable when the number of agents is constant:
  \begin{restatable}{proposition}{EFEXPoly}
    \label{prop:EFEX_poly}
    For a unary encoding of the utility functions and a constant number of agents, \EFOEX{} and \EFEX{} are polynomial-time solvable.
  \end{restatable}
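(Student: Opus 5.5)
We process the resources $r_1,\dots,r_m$ in a fixed order and run a dynamic program over the sequence of partial allocations. Let $U \coloneqq \max_{i\in A} u_i(R)$; under unary encoding $U$ is polynomial in the input size. A complete allocation $\mathcal{X}$ is EF-init if and only if every ordered pair $(i,j)$ with $X_j\neq\emptyset$ satisfies $b_i + u_i(X_i) \geq b_j + u_i(X_j)$. So the quantities that decide EF-init are the values $u_i(X_j)$ for all $i,j\in A$, together with the information whether $X_j$ is empty. For EF1-init we additionally need, for every ordered pair $(i,j)$, the value $\max_{r\in X_j} u_i(\{r\})$: by additivity the best resource to remove from $X_j$ from $i$'s point of view is one of maximum value to $i$. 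The condition therefore reads $b_i + u_i(X_i) \geq b_j + u_i(X_j) - \max_{r\in X_j}u_i(\{r\})$.

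We take as DP state after processing $r_1,\dots,r_k$ the tuple consisting of
\begin{enumerate*}[label=(\roman*)]
\item the $n^2$ values $u_i(X_j)\in\{0,\dots,U\}$,
\item for EF1-init, the $n^2$ values $\max_{r\in X_j}u_i(\{r\})\in\{0,\dots,U\}$, and
\item one bit per agent $j$ indicating whether $X_j\neq\emptyset$.
\end{enumerate*}
The number of states is at most $(U+1)^{2n^2}\cdot 2^n$, which is polynomial for constant $n$ and unary utilities. The transition for resource $r_{k+1}$ tries each of the $n$ agents $j$ as its recipient. It adds $u_i(\{r_{k+1}\})$ to each entry $u_i(X_j)$, updates each maximum entry to $\max(\cdot, u_i(\{r_{k+1}\}))$, and sets the nonemptiness bit of $j$. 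Every entry of the new state depends only on the old state and $r_{k+1}$, so the set of reachable states after step $k+1$ can be computed from the reachable set after step $k$ in polynomial time. We start from the all-zero state, compute the reachable states after all $m$ resources, and accept if and only if some final state satisfies the (EF1-)init inequality for every ordered pair $(i,j)$ with nonempty $X_j$. The inequality only involves $b_i$, $b_j$ and the stored numbers, so this check is possible. Arbitrary real $b_i$ cause no problem, since they enter only this final comparison.

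For correctness we show, by induction on $k$, that a state is reachable after step $k$ exactly when it is the summary of some allocation of $\{r_1,\dots,r_k\}$. Every complete allocation therefore yields a reachable final state, and every reachable final state comes from a complete allocation. The acceptance test is exactly the EF-init (respectively EF1-init) condition applied to that allocation.

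The main subtlety is making the EF1-init condition depend only on a bounded summary. This is handled by the observation that, for additive utilities, it suffices to remove the single resource in $X_j$ that $i$ values most, so storing the maximum suffices. We would also double-check the polynomial bound $(U+1)^{O(n^2)}$ on the state space. This bound is precisely where unary encoding and constant $n$ are used. To reconstruct an allocation rather than only decide existence, we would store back-pointers from each state to a predecessor.
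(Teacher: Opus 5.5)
Your proposal is correct and takes essentially the same route as the paper. Both run a dynamic program over the resources whose state records all $n^2$ values $u_i(X_j)$ and, for EF1-init, the $n^2$ values of $i$'s most-valued resource in $X_j$, giving polynomially many states for constant $n$ and unary utilities. Your explicit nonemptiness bits are a small improvement over the paper's test ``$v_{i',j}=0$ for all $i'$'', which detects $X_j=\emptyset$ only because of the standing assumption that some agent values every resource positively.
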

  \begin{proof}
      Both problems can be solved by a similar dynamic programming approach for a constant number of agents $n=|A|$. We first prove the statement for EF-init, and then show how the proof can be adapted for EF1-init. We fix  an arbitrary ordering of the resources, i.e., $R=\{r_1, r_2, \dots, r_m\}$. Since we assume a unary encoding of the utility functions, without loss of generality, we assume that $u_i(X)\in \NN$ for all $i \in A$ and $X\subseteq R$, and define $s\coloneqq\max_{i\in A} u_i(R)$.

      The dynamic program for EF-init is based on
      a boolean  table $T[\ell,(v_{i,j})_{i,j=1}^n]$, where $\ell \in [m]$ and $v_{i,j} \in \{0,\dots,s\}$ for all $i,j \in A$.
      An entry should be set to "true" if there is an allocation $\mathcal{X}$ of the resources $\{r_1, \dots, r_\ell\}$, where it holds that $u_i(X_j)=v_{i,j}$ for all $i,j\in A$.
      Note that as $n$ is constant and $s$ is polynomial in the input size since we assume a unary encoding, it follows that the size of the table in $\bigO(m\cdot s^{n^2})$ is polynomial in the size of the input.
      We initialize the entry for $\ell=0$ and $v_{i,j}=0$ for all $i,j\in A$ to "true" and all other entries to "false".
      The table entries are filled in ascending order for $\ell\in \{1, \dots, m\}$ as follows. To fill an entry for given indices $\ell \in  \{1, \dots, m\}$ and $v_{i,j} \in [s]$ with $i,j \in A$, check if there exists an agent $i^* \in A$, such that the entry for $\ell'=\ell-1$, $v'_{j,i^*}=v_{j,i^*}-u_j(\{r_\ell\})$ for all $j\in A$, and $v'_{i,j}=v_{i,j}$ for all $i\in A$ and $j\neq i^*$ is set to "true". Intuitively, this means that the $\ell$-th resource is given to agent $i^*$, and it remains to check if there is an (already computed) entry that represents that the first $\ell-1$ resources can be distributed accordingly. Clearly, each entry can be computed in polynomial time.
      After computing the table, to decide \EFEX{}, check if there is an entry for $\ell=m$, where for all $i,j\in A$, it holds that $b_i + v_{i,i}\geq b_j + v_{i,j}$ or $v_{i', j}=0$ for all $i' \in A$ (note that this implies that $X_j=\emptyset$ in the corresponding allocation $\mathcal{X}$).

      To adapt the approach for EF1-init, we introduce additional dimensions $p_{i,j}$ for each pair of agents $i,j\in A$ used to store the value of the resource in $j$'s bundle that $i$ likes the most, yielding a table $T[\ell,(v_{i,j})_{i,j=1}^n, (p_{i,j})_{i,j=1}^n]$, where $\ell \in [m]$ and $v_{i,j}, p_{i,j} \in \{0,\dots,s\}$ for all $i,j \in A$.
      We initialize the entry for $\ell=0$ and $v_{i,j}=p_{i,j}=0$ for all $i,j\in A$ to "true" and all other entries to "false".
      Again, table entries are filled in ascending order for $\ell\in \{1, \dots, m\}$. Analogously as before, to fill an entry for given indices $\ell \in  \{1, \dots, m\}$ and $v_{i,j},p_{i,j}  \in [s]$ with $i,j \in A$, check that there is an agent $i^* \in A$ such that it holds that $p_{j,i^*}\geq u_j(\{r_\ell\})$ for all $j\in A$, and such that there is a "true"-entry for $i^*$ with the following indices:
      \begin{itemize}
        \item  $\ell'=\ell-1$,
        \item $v'_{j,i^*}=v_{j,i^*}-u_j(\{r_\ell\})$ for all $j\in A$, and $v'_{i,j}=v_{i,j}$ for all $i\in A$ and $j\neq i^*$,
        \item  $p'_{j,i^*}=p_{j,i^*}$ if $p_{j,i^*}> u_j(\{r_\ell\})$ and otherwise $p'_{j,i^*}\in [p_{j,i^*}]$ for all $j\in A$,
        \item $p'_{i,j}=p_{i,j}$ for all $i\in A$ and $j\neq i^*$.
      \end{itemize}
      Again, it is clear that the table size and the time to compute each entry is polynomial in the input size.
      To decide \EFOEX{}, after filling the table, check if there is an entry for $\ell=m$, where for all $i,j\in A$, it holds that $b_i + v_{i,i}\geq b_j + v_{i,j} - p_{i,j}$ or $v_{i', j}=0$ for all $i' \in A$.
    \end{proof}

  \paragraph{\EFEX{}}
  The complexity of deciding the existence of a (complete) envy-free allocation (without initial utilities) is well understood, and the problem is known to be (at least) NP-hard, even for restrictive settings such as if there are only two agents with identical preferences or agents have 0/1 utilities for each resource  \citep{DBLP:journals/jair/BouveretL08}. By setting $b_i = 0$ for all $i \in A$, these results directly carry over to \EFEX{}. We therefore focus on the special case where all resources are identical, obtaining the following result:

  \begin{restatable}{theorem}{EFInitIdentical}
    \label{thm:EFInit_identical}
    \EFEX{} for identical resources can be decided in  $\mathcal{O}(n^2 \cdot m^3)$ time and $\mathcal{O}(n \cdot m^2)$ space.
  \end{restatable}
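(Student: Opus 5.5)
\emph{Proof plan.} Since resources are identical, write $a_i \coloneqq u_i(\{r\})$; an allocation is then described up to relabeling by counts $k_1,\dots,k_n\in\{0,\dots,m\}$ with $\sum_i k_i=m$, and the allocation is complete exactly when the counts sum to $m$. For $k_j>0$, the EF-init condition between $i$ and $j$ becomes
\[
b_i + a_i k_i \;\geq\; b_j + a_i k_j .
\]
So the task is to decide whether there is an integer vector $(k_i)$ with $\sum_i k_i=m$ satisfying these pairwise linear constraints for all $i$ and all $j$ in the support $S\coloneqq\{j : k_j>0\}$. The plan has three steps: derive structural properties of any EF-init count vector, use them to fix an order in which to process the agents, and run a dynamic program over that order whose table has $\mathcal{O}(n\cdot m^2)$ entries, each filled in $\mathcal{O}(n\cdot m)$ time.

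\textbf{Step 1: structural lemma.} For $i,j\in S$ with $b_j\ge b_i$, applying the constraint of $i$ towards $j$ gives $a_i(k_j-k_i)\le b_i-b_j\le 0$. If $a_i>0$, this yields $k_j\le k_i$, with equality only if $b_i=b_j$. Hence, restricted to $S$, the counts are non-increasing along the levels $L_1,\dots,L_t$ of \Cref{def:levels}, and agents in the same level receive the same count. If $a_i=0$, the constraint only requires $b_i\ge b_j$, which is a condition on the support alone. For agents $i\notin S$, the constraints read $b_i\ge b_j+a_ik_j$ for every $j\in S$. These depend only on the largest value of $b_j+a_ik_j$ over $S$, so it suffices to know, for each relevant $i$, a single maximizing quantity.

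\textbf{Step 2: dynamic program.} Process the agents sorted by initial utility, within each level all at once. The table will be $T[h,\ell,k]$, where $h$ is the number of agents processed so far, $\ell\le m$ is the number of resources used so far, and $k\le m$ is the count given to the most recently processed supported agent. By monotonicity, $k$ upper-bounds the counts of all later supported agents. An entry is true if the first $h$ agents can receive $\ell$ resources with all pairwise constraints among them satisfied, and the last positive count equals $k$. This gives $\mathcal{O}(n\cdot m^2)$ space. A transition chooses the count $k'\le k$ of the next agent (or $0$) and checks the new constraints. The constraints of the new agent towards earlier agents, and of earlier agents towards it, have to be certified from the stored state.

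\textbf{Step 3: certifying constraints from the state.} The constraint of the new agent towards earlier, lower-$b$ agents, $k_j\le k_i+(b_i-b_j)/a_i$, is tightest for the earlier agent maximizing $b_j+a_ik_j$. My intended approach is to show that, given monotonicity, this maximum can be recomputed from $k$ together with a guessed anchor agent, namely the first agent of the support and its count, which is fixed in an outer loop. Scanning the at most $n$ candidate agents per transition then gives the $\mathcal{O}(n^2m^3)$ time bound. Constraints of earlier agents $i$ towards later agents $j$ have the form $a_ik_j+b_j\le a_ik_i+b_i$ with $k_j\le k$. I plan to check these by maintaining the minimum slack $\min_i (b_i+a_ik_i-b_j-a_ik_j)$; because $k_j$ enters linearly, it suffices to test the constraint at the maximal admissible $k_j$.

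\textbf{Main obstacle.} The hardest part is exactly this compression of all pairwise constraints into polynomially many states. The worry is that the binding partner differs between agents because the slopes $a_i$ differ. If a single stored count is not enough, my fallback is to enlarge the state by one further count, which must then be paid for by tightening the transition cost to keep the claimed bounds.
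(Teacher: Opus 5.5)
There is a genuine gap, and it sits exactly where you flag it. Your table $T[h,\ell,k]$ is in fact the right one: the paper's table $D[a][b][c]$ also records only the number of allocated resources, the last supported level, and the common count in that level. However, nothing in Steps 1--3 shows that this state is enough to certify all pairwise constraints.

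The anchor idea is unjustified. Knowing the first and last supported counts does not in general determine $\max_j (b_j + a_i k_j)$ over earlier agents, since an intermediate level with larger $b_j$ and a count only slightly below the first can dominate. Likewise, the ``minimum slack'' $\min_i (b_i + a_i k_i - a_i k_j) - b_j$ is not a scalar you can carry along. It is a lower envelope of lines in $k_j$ with the differing slopes $-a_i$, and it depends on all earlier counts. The fallback of adding a further count is left unspecified.

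The missing idea is to use \emph{both} directions of EF-init in Step 1; you only used one. For $i,j\in S$ with $b_i<b_j$, the two constraints give
\[
\frac{b_j-b_i}{a_i} \;\le\; k_i-k_j \;\le\; \frac{b_j-b_i}{a_j},
\]
so necessarily $a_j\le a_i$. Hence any pair with $b_i<b_j$ and $a_i<a_j$ forces $k_j=0$. The support is also downward closed in the levels, because an agent with smaller $b$ and count $0$ envies any agent with larger $b$ and positive count; so your transitions should not allow a $0$ followed by a positive count. Together these facts confine the support to the levels before the first level containing the upper agent of such a pair, and on those levels the slopes are non-increasing in $b$.

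This is precisely what dissolves your worry about differing slopes, because the constraints now chain (the paper's transitivity lemma). For example, for $b_i<b_p<b_j$,
\[
k_i-k_j=(k_i-k_p)+(k_p-k_j)\ge \frac{b_p-b_i}{a_i}+\frac{b_j-b_p}{a_p}\ge\frac{b_j-b_i}{a_i},
\]
using $a_p\le a_i$; the other direction is symmetric, using $a_j\le a_p$.

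Consequently, only consecutive supported levels need to be checked. Within them, only the minimum-slope agent of the lower level and the maximum-slope agent of the upper level matter. For current count $c$ this gives an interval $\bigl[c+\lceil\frac{b_j-b_i}{a_i}\rceil,\,c+\lfloor\frac{b_j-b_i}{a_j}\rfloor\bigr]$ of admissible counts for the previous level.

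Unsupported agents are handled in two ways:
\begin{enumerate}
\item Agents at or beyond the first such level never receive anything. They impose a static cap $\min_i \lfloor (b_i-b_j)/a_i\rfloor$ on each level's count, which can be precomputed.
\item Agents between the last supported level and that cutoff only need to be checked against the last supported level, again by transitivity.
\end{enumerate}
With these facts your table needs neither an anchor nor an extra count, and each entry is filled in $\mathcal{O}(m)$ time.
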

  We first sketch the idea of the proof.
  To solve this problem, we examine the partitioning of agents by their initial utility into $t$ levels, as introduced in \Cref{def:levels}. Note that in any EF-init allocation, all agents in a level need to get the same number of resources, as all resources are identical. The following observation is crucial to prove the above result. If there are two agents $i,j\in A$ with $b_i<b_j$, where agent $i$ has a strictly smaller utility for each resource than $j$, then agent $j$ cannot get any resources in an EF-init allocation: Otherwise, agent $i$ has to receive some number $x$ of resources more than agent $j$ to equalize the difference $b_j -b_i$ in initial utilities. However, as $j$ has a strictly higher utility for each of these resources, this implies that $j$ envies $i$ under EF-init, as the utility that $j$ assigns to the $x$ additional resources is strictly greater than $b_j - b_i$.
  We call such a pair of agents $i,j \in A$ a \emph{violating pair}.

  Importantly, this observation provides the following useful way to structure the levels. Let $h^*\in [t]$ be the minimum level such that there is an agent $j \in L_{h^*}$  that is part of a violating pair ($j$ is the agent with higher initial utility). If there is no violating pair in the whole instance, set $h^*:=t+1$.
  It follows that all agents in levels $h^*$ and above cannot get any resources.
  Moreover, using the structure implied by the fact that there is no violating pair in the first $t^*\coloneqq h^*-1$ levels, we can show that within these levels, envy-freeness between pairs of agents behaves ``transitively'' with respect to the ordering of the levels.
  \begin{restatable}{lemma}{EFInitTransitive}\label{lemma:EFInit_transitive}
    Let $h,h',h'' \in [t^*]$ with $h'<h<h''$ and consider agents $i^* \in L_h$, $i \in L_{h'}$, and $j \in L_{h''}$.
    If in some allocation $\mathcal{X}$ there is no envy between agents $i$ and $i^*$, and between $i^*$ and $j$ under EF-init, then there is no envy between $i$ and $j$ under EF-init.
  \end{restatable}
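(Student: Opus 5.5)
Throughout, write $a_p \coloneqq u_p(\{r\})$ for the common per-resource utility of agent $p$ and $x_p \coloneqq |X_p|$, so every EF-init condition is a linear inequality in the $x$'s. The first step is to record the structural consequence of $h^*$: all three levels lie in $[t^*]$, so no pair among $i,i^*,j$ is violating. Together with $b_i<b_{i^*}<b_j$, this gives $a_i \ge a_{i^*} \ge a_j$. The hypotheses then read:
\[
a_i(x_i-x_{i^*})\ge b_{i^*}-b_i \text{ (if } x_{i^*}>0\text{)},\qquad a_{i^*}(x_i-x_{i^*})\le b_{i^*}-b_i \text{ (if } x_i>0\text{)},
\]
\[
a_{i^*}(x_{i^*}-x_j)\ge b_j-b_{i^*} \text{ (if } x_j>0\text{)},\qquad a_j(x_{i^*}-x_j)\le b_j-b_{i^*} \text{ (if } x_{i^*}>0\text{)}.
\]
I then verify the two directions separately.

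\emph{$i$ does not envy $j$.} I may assume $x_j>0$. The third inequality gives $a_{i^*}(x_{i^*}-x_j)\ge b_j-b_{i^*}>0$. Hence $x_{i^*}>x_j\ge 1$, which activates the first inequality. Splitting $x_i-x_j=(x_i-x_{i^*})+(x_{i^*}-x_j)$, the first summand contributes at least $b_{i^*}-b_i$ after multiplying by $a_i$. For the second summand, since $x_{i^*}-x_j>0$ and $a_i\ge a_{i^*}$, we get $a_i(x_{i^*}-x_j)\ge a_{i^*}(x_{i^*}-x_j)\ge b_j-b_{i^*}$. Summing yields $a_i(x_i-x_j)\ge b_j-b_i$, which is exactly the EF-init condition for $i$ toward $j$.

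\emph{$j$ does not envy $i$.} I may assume $x_i>0$, so the second inequality holds, and I need $a_j(x_i-x_j)\le b_j-b_i$. If $x_{i^*}=0$, then $a_jx_i\le a_{i^*}x_i\le b_{i^*}-b_i<b_j-b_i$, and subtracting $a_jx_j\ge 0$ only helps. If $x_{i^*}>0$, I again split $x_i-x_j=(x_i-x_{i^*})+(x_{i^*}-x_j)$. The second part is bounded by $b_j-b_{i^*}$ via the fourth inequality. For the first part, if $x_i\ge x_{i^*}$, then $a_j(x_i-x_{i^*})\le a_{i^*}(x_i-x_{i^*})\le b_{i^*}-b_i$. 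Otherwise that part is nonpositive and hence at most $b_{i^*}-b_i$. Summing gives the claim.

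The only delicate point is the bookkeeping of the ``$X=\emptyset$'' escape clauses. In each direction one must check that the needed hypothesis is actually active, and handle $x_{i^*}=0$ separately. The key observation resolving this in the first direction is that envy-freeness of $i^*$ toward a nonempty $X_j$ with $b_j>b_{i^*}$ forces $x_{i^*}>x_j$. Beyond that, the sign conditions needed to replace $a_i$ by $a_{i^*}$, or $a_{i^*}$ by $a_j$, are exactly what the absence of violating pairs provides.
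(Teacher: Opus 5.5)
Your proof is correct and follows essentially the same route as the paper: both split $|X_i|-|X_j|$ through $|X_{i^*}|$ and use the absence of violating pairs to get $v_i \ge v_{i^*} \ge v_j$. Both derive the two directions from the hypothesis pairs $i\to i^*$, $i^*\to j$ and $j\to i^*$, $i^*\to i$ respectively, and handle the empty-bundle cases in the same way; the only cosmetic difference is that you keep the inequalities in multiplied form and track the signs of the count differences, whereas the paper divides by the per-resource values and uses the positivity of the initial-utility gaps to swap $v_{i^*}$ for $v_i$ (resp.\ $v_j$).
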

  This enables a dynamic programming approach to decide \EFEX{} for identical resources. 

  In the following, we provide the full proof of \Cref{thm:EFInit_identical}.
  For agent $i\in A$,  let $v_i \coloneqq u_i(\{r\})$ for any $r\in R$ be the \emph{value} that agent $i$ derives from a single resource. We begin by formalizing the following observations, which we informally remarked above.
    \begin{observation}\label{EFinit:ordered}
      Let $h\in [t]$. Let $\mathcal{X}$ be an allocation where there is no envy between any pair of agents $i,j \in \bigcup_{\ell \in [h]} L_\ell$ in the first $h$ levels.  Then, the following holds.
      \begin{enumerate}
        \item If for some $h' \in [h]$, there is an agent $i \in L_{h'}$ with $|X_i|\geq 1$, then $|X_j|\geq 1$ for all $j\in L_{h''}$ with $h''\leq h'$.
        \item For any level $L_{h'}$ with $h'\in [h]$, it holds that any two agents $i,j\in L_{h'}$ are allocated the same number of resources $|X_i|=|X_j|$.
        \item Let $i,j \in \bigcup_{\ell \in [h]} L_\ell$ be two agents in the first $h$ levels. If $b_i<b_j$ and $v_i<v_j$, we have that $|X_j|=0$.
      \end{enumerate}
    \end{observation}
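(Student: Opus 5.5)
We prove the three items of \Cref{EFinit:ordered} separately. Each one follows directly from writing out the EF-init condition for a suitable pair of agents, using that all resources are identical, so $u_i(X_j)=|X_j|\cdot v_i$. Throughout, every pair of agents we use lies in the first $h$ levels, so the hypothesis guarantees that neither agent envies the other under EF-init.

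\emph{Item 1.} Let $i\in L_{h'}$ with $|X_i|\ge 1$, and let $j\in L_{h''}$ with $h''\le h'$.
If $h''=h'$, the claim follows from item 2, so we prove item 2 first and then assume $h''<h'$, which gives $b_j<b_i$.
Suppose for contradiction that $X_j=\emptyset$. Since $X_i\neq\emptyset$, the EF-init condition for $j$ towards $i$ applies and demands $b_j\ge b_i+|X_i|v_j$. The right-hand side is at least $b_i$ because $v_j\ge 0$, so $b_j\ge b_i>b_j$, a contradiction.

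\emph{Item 2.} Let $i,j\in L_{h'}$, so $b_i=b_j$, and suppose $|X_i|<|X_j|$. Then $X_j\neq\emptyset$, so EF-init for $i$ towards $j$ requires $|X_i|v_i\ge |X_j|v_i$. This forces $v_i=0$.

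We need $v_i>0$ to get a contradiction. The paper's footnote only guarantees that each agent values some resource positively. Since all resources are identical, this means each agent values every resource positively, so $v_i>0$. The contradiction shows $|X_i|\ge|X_j|$, and symmetry gives equality.

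\emph{Item 3.} Let $b_i<b_j$ and $v_i<v_j$, and suppose for contradiction that $|X_j|\ge1$. The argument follows the violating-pair sketch given before the statement.
\begin{itemize}
\item EF-init for $i$ towards $j$ gives $b_i+|X_i|v_i\ge b_j+|X_j|v_i$. Hence $(|X_i|-|X_j|)v_i\ge b_j-b_i>0$.
\item Since $v_i>0$, it follows that $d\coloneqq|X_i|-|X_j|>0$. In particular $X_i\neq\emptyset$, and $d\,v_i\ge b_j-b_i$.
\item Because $X_i\neq\emptyset$, EF-init for $j$ towards $i$ applies and gives $b_j+|X_j|v_j\ge b_i+|X_i|v_j$, i.e.\ $b_j-b_i\ge d\,v_j$.
\item Since $d>0$ and $v_j>v_i$, we get $d\,v_j>d\,v_i\ge b_j-b_i$.
\end{itemize}
The last two lines give $b_j-b_i\ge d\,v_j>b_j-b_i$, a contradiction. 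Hence $|X_j|=0$.

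The only delicate step is justifying $v_i>0$, which items 2 and 3 both need. It comes from the non-triviality footnote together with the identical-resources assumption. Everything else is routine manipulation of the EF-init inequalities.
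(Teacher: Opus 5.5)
Your proof is correct and follows the paper's route. For item 3 the paper first notes $X_i\neq\emptyset$ and then derives the same two inequalities, $|X_i|-|X_j|\geq (b_j-b_i)/v_i$ and $(b_j-b_i)/v_j\geq |X_i|-|X_j|$; it dismisses items 1 and 2 as easy, and your explicit arguments for them are right, including using the non-triviality footnote to get $v_i>0$ in item 2 (which the paper also uses implicitly when it divides by $v_i$; in item 3 it is not even needed, since $(|X_i|-|X_j|)v_i>0$ with $v_i\geq 0$ already forces both factors to be positive).
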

    \begin{proof}
      It is easy to see that the first two properties need to hold. To see why the third property holds, consider two agents $i,j \in \bigcup_{\ell \in [h]} L_\ell$ with $b_i<b_j$ and $v_i<v_j$. Now, assume for the sake of contradiction that $|X_j|>0$. We have that $|X_i|>0$, since otherwise agent $i$  envies agent $j$ under EF-init, as $i$ has a lower initial utility. Since agent $i$ does not envy $j$ under EF-init in $\mathcal{X}$, we have that
      \begin{align*}
        & b_i + u_i(X_i) \geq b_j + u_i(X_j)\\
        \iff& b_i + v_i\cdot |X_i| \geq b_j + v_i\cdot |X_j|\\
        \iff& |X_i| -|X_j| \geq \frac{b_j -b_i}{v_i}.
      \end{align*}
      However, since agent $j$ does not envy $i$ under EF-init, we need to have that
      \begin{align*}
        & b_j + u_j(X_j) \geq b_i + u_j(X_i)\\
        \iff& b_j + v_j\cdot |X_j| \geq b_i + v_j\cdot |X_i|\\
        \iff& \frac{b_j-b_i}{v_j}\geq |X_i|-|X_j|.
      \end{align*}
      Note that this is a contradiction: Since $b_j-b_i>0$ and $v_i<v_j$, we have that
      \[
        |X_i|-|X_j|\leq\frac{b_j-b_i}{v_j}<\frac{b_j-b_i}{v_i}\leq |X_i|-|X_j|.
      \]
      Therefore, it needs to hold that $|X_j|=0$.
    \end{proof}
    Note that for $h=t$, the allocation $\mathcal{X}$ from \Cref{EFinit:ordered} is EF-init. Then, the properties hold for any level and all agents.

    Let $h^*\in [t]$ be the minimal level such that there is a violating pair of agents $i,j\in A$ with $b_i<b_j$ and $v_i<v_j$ where $j\in L_{h^*}$ (if such a pair exists). Then, by  \Cref{EFinit:ordered}, it needs to hold for any EF-init allocation $\mathcal{X}$ and agent $i' \in L_{h'}$ with $h'\geq h^*$ that $|X_{i'}|=0$. That is, no agent outside the first $h^*-1$ levels gets a resource in any EF-init allocation. If $h^*=1$, we know that the given instance for \EFEX{} is a no-instance, since there cannot be a complete EF-init allocation.

    Otherwise, we define $t^*\coloneqq h^*-1$. Let $i\in \bigcup_{\ell \in \{h^*, \dots, t\}} L_{\ell}$ be an agent outside the first $t^*$ levels. For every level $L_{h'}$ with $h'\in [t^*]$, we know that an agent $j \in L_{h'}$ can get at most $\lfloor \frac{b_i - b_j}{v_i} \rfloor$ resources, as otherwise agent $i$ would envy $j$ under EF-init. By defining $k_{h'}$ as the minimum of these values for all $i\in \bigcup_{\ell \in \{h^*, \dots, t\}} L_{\ell}$, we get an upper-bound on the number of allocated resources for each agent in the first $t^*$ levels in any EF-init allocation.
    If there is no violating pair in the whole instance, we set $t^*\coloneqq t$ and $k_{h'}\coloneqq m$ for all $h' \in [t]$.

    Now, we can show the crucial lemma that we already stated above.
    \EFInitTransitive*
    \begin{proof}
      We prove the statement by proving the following two implications.
      \begin{enumerate}
        \item If $i$ does not envy $i^*$ and $i^*$ does not envy $j$, then $i$ does not envy $j$ (under EF-init).
        \item If $j$ does not envy $i^*$ and $i^*$ does not envy $i$, then $j$ does not envy $i$ (under EF-init).
      \end{enumerate}
      Firstly, assume that  $i$ does not envy $i^*$ and $i^*$ does not envy $j$ under EF-init.
      If $X_{i^*}=\emptyset$, then also $X_{j}=\emptyset$, as $i^*$ does not envy $j$ under EF-init and $b_j>b_{i^*}$, which implies that $i$ does not envy $j$ under EF-init. Thus, we assume that $X_{i^*}\neq \emptyset$ and $X_{j}\neq\emptyset$.
      As $i$ does not envy $i^*$ under EF-init, we have that
      \begin{align*}
        b_{i} + u_{i}(X_{i}) \geq b_{i^*} + u_{i}(X_{i^*}) &\iff b_{i} + v_{i} \cdot |X_{i}| \geq b_{i^*} + v_{i}\cdot |X_{i^*}|\\
        &\iff   |X_{i}| -|X_{i^*}| \geq \frac{b_{i^*} - b_{i}}{v_{i}}.
      \end{align*}

      Furthermore, agent $i^*$ does not envy $j$ under EF-init, so
      \begin{align*}
        &b_{i^*} + u_{i^*}(X_{i^*}) \geq b_{j} + u_{i^*}(X_{j})\\ \iff &b_{i^*} + v_{i^*} \cdot |X_{i^*}| \geq b_{j} + v_{i^*}\cdot |X_{j}|\\
        \Rightarrow   &|X_{i^*}| -|X_{j}| \geq \frac{b_{j} - b_{i^*}}{v_{i^*}} \geq \frac{b_{j} - b_{i^*}}{v_{i}},
      \end{align*}
      where the last inequality holds since $b_{j} - b_{i^*}>0$ and $v_{i^*} \leq v_{i}$ (as there is no violating pair in the first $t^*$ levels and $h'<h$).
      This implies that $i$ does not envy $j$ under EF-init, since
      \begin{align*}
        |X_{i}| -|X_{j}| = |X_{i}| -|X_{i^*}| + |X_{i^*}| -|X_{j}|\\\geq \frac{b_{j} - b_{i^*}}{v_{i}} + \frac{b_{i^*} - b_{i}}{v_{i}} = \frac{b_{j} - b_{i}}{v_{i}}\\
        \iff b_i + v_i \cdot |X_{i}| \geq b_j + v_i \cdot |X_{j}|.
      \end{align*}
      Secondly, assume symmetrically that $j$ does not envy $i^*$ and $i^*$ does not envy $i$ under EF-init.
      Since $j$ does not envy $i^*$ under EF-init, if $X_{i^*}\neq\emptyset$, then it holds that
      \begin{align*}
        b_{j} + u_{j}(X_{j}) \geq b_{i^*} + u_{j}(X_{i^*}) &\iff b_{j} + v_{j} \cdot |X_{j}| \geq b_{i^*} + v_{j}\cdot |X_{i^*}|\\
        &\iff   \frac{b_{j} - b_{i^*}}{v_{j}}\geq |X_{i^*}| -|X_{j}|.
      \end{align*}
      If $X_{i^*}=\emptyset$, then $|X_{i^*}| -|X_{j}| \leq 0 <  \frac{b_{j} - b_{i^*}}{v_{j}}$, so in any case, it holds that $\frac{b_{j} - b_{i^*}}{v_{j}}\geq |X_{i^*}| -|X_{j}|$.

      Furthermore, we have that $i^*$ does not envy $i$ under EF-init. If $X_i=\emptyset$, then this directly implies that $j$ does not envy $i$ under EF-init. Otherwise, it follows that
      \begin{align*}
        b_{i^*} + u_{i^*}(X_{i^*}) \geq b_{i} + u_{i^*}(X_{i}) &\iff b_{i^*} + v_{i^*} \cdot |X_{i^*}| \geq b_{i} + v_{i^*}\cdot |X_{i}|\\
        &\Rightarrow  |X_{i}| -|X_{i^*}| \leq \frac{b_{i^*} - b_{i}}{v_{i^*}} \leq \frac{b_{i^*} - b_{i}}{v_{j}},
      \end{align*}
      where the last inequality holds since $b_{i^*} - b_{i}>0$ and $v_{i^*} \geq v_{j}$ (since there is no violating pair in the first $t^*$ levels and $h''>h$).
      This implies that $j$ does not envy $i$ under EF-init, since
      \begin{align*}
        |X_{i}| -|X_{j}| = |X_{i}| -|X_{i^*}| + |X_{i^*}| -|X_{j}|\\\leq \frac{b_{i^*} - b_{i}}{v_{j}} + \frac{b_{j} - b_{i^*}}{v_{j}} = \frac{b_{j} - b_{i}}{v_{j}}\\
        \iff b_i + v_j \cdot |X_i| \leq b_j + v_j \cdot |X_j|.
      \end{align*}
    \end{proof}
    Now, we are ready to prove that \EFEX{} for identical resources and additive utility functions can be decided in time polynomial in $n$ and $m$.
    \begin{proof}[Proof of \Cref{thm:EFInit_identical}]
      We prove the statement by giving a dynamic-programming algorithm based on the following table $D$ of size $\mathcal{O}(n\cdot m^2)$. For $a\in [m]$, $b \in [t^*]$, and $c \in [m]$, we store in entry $D[a][b][c]$ whether there is an allocation $\mathcal{X}$ that satisfies the following requirements, which we call \emph{fitting} for this entry. The table entry may be omitted when it is clear from context.
      \begin{itemize}
        \item Allocation $\mathcal{X}$ allocates exactly $a$ resources,
        \item  the set of agents that get at least one resource is the union of the first $b$ levels $\bigcup_{\ell \in [b]} L_\ell$,
        \item every agent $i\in L_b$ gets exactly $c=|X_i|$ resources,
        \item there is no pair of agents $i,j\in\bigcup_{\ell \in [b]} L_\ell$ in the first $b$ levels such that $i$ envies $j$ in $\mathcal{X}$ under EF-init,
        \item and it holds that $|X_i|\leq k_{h}$ for any $h \in [t^*]$ and agent $i \in L_h$ (recall that we defined these upper-bounds above when defining $t^*$).
      \end{itemize}
      Next, we give \Cref{alg:efInit} to decide \EFEX{}. The algorithm first fills table $D$ and then checks if there is a "yes"-entry for which a fitting allocation is EF-init. Note that for a fitting allocation for an entry for level $b\in [t^*]$, we only require that there is no envy under EF-init between agents  $i,j\in\bigcup_{\ell \in [b]} L_\ell$ in the first $b$ levels, so we have to check if agents in higher levels are envious.
      \begin{algorithm}[t!]
        \caption{Decide \EFEX{}}\label{alg:efInit}
        \begin{algorithmic}[1]
          \State \textbf{Global Table:} $D[a][b][c]$, for $a \in [m]$, $b \in [t^*]$, and $c \in [m]$.

          \For{$a \in [m]$, $b \in [t^*]$, and $c \in [m]$} \label{alg:efInit:initA}
          \State $D[a][b][c] \gets \text{"no"}$ \Comment{Initialize Table $D$ entries to "no".}
          \EndFor

          \For{$a \in [m]$ and $c \in [k_1]$} \label{alg:efInit:initB}
          \If{$|L_1| \cdot c = a$}
          \State $D[a][1][c] \gets \text{"yes"}$ \Comment{Initialize  first level.} \label{alg:efInit:base}
          \EndIf
          \EndFor

          \For{$b \in \{2, \dots, t^*\}$} \label{alg:efInit:loop}
          \For{$a \in [m]$ and $c \in [k_b]$}
          \If{$a - |L_b| \cdot c \geq 0$}
          \State $a' \gets a - |L_b| \cdot c$
          \State $b' \gets b - 1$
          \LongState{ Choose $i \in L_{b'}$ with minimal $v_i$ and $j \in L_b$ with maximal $v_j$}
          \State $c'_{min} \gets c + \lceil \frac{b_j - b_i}{v_i} \rceil$ \Comment{Ensures $i$ does not envy $j$.}
          \State $c'_{max} \gets c + \lfloor \frac{b_j - b_i}{v_j} \rfloor$ \Comment{Ensures $j$ does not envy $i$.}

          \If{$c'_{min} \leq c'_{max}$}
          \For{$c' \in \{c'_{min}, \dots, c'_{max}\}$}
          \If{$D[a'][b'][c'] = \text{"yes"}$}
          \State $D[a][b][c] \gets \text{"yes"}$ \label{alg:efInit:combine}

          \EndIf
          \EndFor
          \EndIf
          \EndIf
          \EndFor
          \EndFor
          \For{$b\in [t^*]$ and $c \in [k_b]$}\label{alg:efInit:check}
          \If{$D[m][b][c]=\text{"yes"}$ and CheckEFInit($m, b, c$)="yes"}
          \State \textbf{Accept} \label{alg:efInit:accept} \Comment{Accept if fitting allocation is EF-init.}
          \EndIf
          \EndFor
          \State \textbf{Reject}
          \vspace{0.5em}
          \Function{CheckEFInit}{$a, b, c$}
          \State Choose arbitrary $j\in L_b$.
          \If{$b<t^*$}
          \Statex \Comment{%
                \parbox[t]{0.855\linewidth}{\hangindent=0.25em
                     Check if an agent in levels $L_{b+1},\dots, L_{t^*}$ (that receives no resources) envies $j$.}%
        }

          \For{$i\in \bigcup_{h \in \{b+1, \dots, t^*\}} L_h$ }
          \If{$c > \frac{b_{j}-b_i}{v_i}$}
          \Return "no"
          \EndIf
          \EndFor
          \EndIf
          \State \Return "yes"
          \EndFunction
        \end{algorithmic}
      \end{algorithm}
      To prove the correctness of \Cref{alg:efInit}, we first prove that table $D$ has been filled correctly when we check if there is a fitting EF-init allocation for a "yes"-entry after \alglineref{alg:efInit:check}.
      \begin{claim}\label{claim:D_correct}
        After \alglineref{alg:efInit:check}, for any $a\in [m]$, $b\in [t^*]$, and $c\in [m]$, it holds that entry $D[a][b][c]=\text{"yes"}$ if and only if there is a fitting allocation for entry $D[a][b][c]$.
      \end{claim}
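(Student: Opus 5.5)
We prove \Cref{claim:D_correct} by induction on the level index $b\in[t^*]$. Each direction is handled separately, and the transitivity property of \Cref{lemma:EFInit_transitive} lets us reduce envy-freeness among the first $b$ levels to envy-freeness between adjacent levels $L_{b-1}$ and $L_b$.

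\emph{Base case $b=1$.} An allocation fitting $D[a][1][c]$ gives exactly $c\le k_1$ resources to every agent of $L_1$ and none to anyone else. By \Cref{EFinit:ordered}(2) this uniform assignment is forced, and agents within a level with equal bundle sizes never envy each other under EF-init. Hence a fitting allocation exists iff $|L_1|\cdot c=a$ and $c\le k_1$, which is exactly the initialization in \alglineref{alg:efInit:base}. (Entries with $c=0$ are never "yes"; this is consistent, since the second requirement forces every agent of $L_1$ to receive at least one resource.)

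\emph{Inductive step, "if" direction.} Let $\mathcal{X}$ be fitting for $D[a][b][c]$ with $b\ge2$. By \Cref{EFinit:ordered}(2) all agents of $L_{b-1}$ receive the same number $c'\ge1$ of resources. Removing the bundles of $L_b$ gives an allocation of $a'=a-|L_b|c$ resources that is fitting for $D[a'][b-1][c']$, so by induction $D[a'][b-1][c']=$"yes". It remains to check $c'\in\{c'_{\min},\dots,c'_{\max}\}$. Take $i\in L_{b-1}$ with minimal $v_i$ and $j\in L_b$ with maximal $v_j$. Non-envy of $i$ towards $j$ gives $c'-c\ge (b_j-b_i)/v_i$; integrality of $c'-c$ turns this into $c'\ge c'_{\min}$. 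Non-envy of $j$ towards $i$ gives $c'\le c'_{\max}$ in the same way. The same computation shows $c\le k_b$, so the algorithm sets the entry to "yes".

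\emph{Inductive step, "only if" direction.} This is the main obstacle. Suppose the entry was set to "yes" via some $c'\in[c'_{\min},c'_{\max}]$ with $D[a'][b-1][c']=$"yes". By induction there is an allocation $\mathcal{X}'$ fitting for $D[a'][b-1][c']$. Extend it by giving $c$ fresh resources to each agent of $L_b$; resources are identical and $a\le m$, so enough resources are available. The bound $c\le k_b$ holds because of the loop range, and the set of agents receiving resources is now exactly the first $b$ levels.

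We must then verify that no two agents in the first $b$ levels envy each other.
\begin{itemize}
\item Pairs inside $L_b$ receive equal bundles, so there is no envy.
\item Pairs in $L_{b-1}\times L_b$: the chosen $i$ (minimal $v_i$) is the hardest case for "lower does not envy higher", because the required inequality $c'-c\ge (b_j-b_i)/v_{i''}$ is monotone in $v_{i''}$; the bound for the extreme agent therefore covers every $i''\in L_{b-1}$. Symmetrically, maximal $v_j$ is the worst case for "higher does not envy lower". Since all agents in a level share the same $b$-value, both inequalities hold for every pair, using $c'_{\min}\le c'\le c'_{\max}$ and integrality.
\item Pairs in $L_{h'}\times L_b$ with $h'<b-1$: apply \Cref{lemma:EFInit_transitive} with the middle agent $i^*\in L_{b-1}$. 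Non-envy between $L_{h'}$ and $L_{b-1}$ comes from the induction hypothesis, and non-envy between $L_{b-1}$ and $L_b$ comes from the previous item. The lemma requires all levels to lie within $[t^*]$, where there are no violating pairs, and this holds since $b\le t^*$.
\end{itemize}
The careful point is that the lemma needs non-envy in both directions for each adjacent pair, which we have established. With all pairs covered, the extended allocation is fitting for $D[a][b][c]$, completing the induction.
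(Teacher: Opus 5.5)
Your proof is correct and follows essentially the same route as the paper. It inducts on $b$. In one direction you strip off level $L_b$ and use non-envy between the extremal agents (minimal $v_i$ in $L_{b-1}$, maximal $v_j$ in $L_b$), plus integrality, to place $c'$ in $[c'_{\min},c'_{\max}]$. In the other direction you extend a fitting allocation, get the adjacent-level pairs from monotonicity in $v$, and handle non-adjacent levels via \Cref{lemma:EFInit_transitive}.

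The only slip is cosmetic: in the ``if'' direction, $c\le k_b$ holds directly by the fifth fitting requirement, not by ``the same computation''.
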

      \begin{claimproof}
        Note that for any $a \in  [m]$ and $b\in [t^*]$,  a fitting allocation can only exist if $c \leq k_b$. Moreover, in \Cref{alg:efInit}, for any $b\in [t^*]$, we only introduce $\text{"yes"}$-entries for $c \leq k_b$.
        Thus, for all $b\in [t^*]$, entries with $c>k_b$ are correctly set to "no", and it suffices to consider entries with $c \in [k_b]$.
        We prove this claim by induction over $b\in [t^*]$.

        First, we consider the base case of $b=1$. Note that for any $a\in [m]$ and $c \in [k_1]$, a fitting allocation $\mathcal{X}$ for $D[a][1][c]$ allocates exactly $a$ resources  only to agents in $L_1$, $|X_i|=c$ for every agent $i\in L_1$ and $\mathcal{X}$ satisfies that there is no envy between any pair $i,j \in L_1$ under EF-init.
        Clearly, there is such an allocation if and only if $|L_1|\cdot c =a$. As we set $D[a][1][c]=\text{"yes"}$  if and only if $|L_1|\cdot c =a$ in \alglineref{alg:efInit:base}, the base case holds.

        For the induction step, assume that for some $1\leq h<t^*$, the claim holds for all table entries $D[a][b][c]$ with $b \leq h$, $a \in [m]$ and $c \leq k_b$. We need to show that for any $a \in [m]$ and $c \in [k_{h+1}]$, entry $D[a][h+1][c]=\text{"yes"}$ if and only if there is a fitting allocation $\mathcal{X}$ for this entry.

        \paragraph{($\Leftarrow$):}
        Recall that we set $D[a][h+1][c]=\text{"yes"}$ in \alglineref{alg:efInit:combine} if we find a suitable $\text{"yes"}$-entry for the preceding level $L_h$.
        Suppose that there is a fitting allocation $\mathcal{X}$ for $D[a][h+1][c]$. In $\mathcal{X}$, every agent in level $h+1$ gets exactly $c$ resources and agents  in $A \setminus \bigcup_{\ell \in [h+1]} L_\ell$ do not receive any resources.
        Next, we will construct a fitting allocation $\mathcal{X'}$ for an entry in the previous level $h$.
        Clearly, by setting $X'_i=\emptyset$ for all $i \in  L_{h+1}$ and $X'_i=X_i$ for all $i \notin  L_{h+1}$, allocation $\mathcal{X}$ induces an allocation $\mathcal{X}'$ that allocates $a'=a-|L_{h+1}|\cdot c\geq 0$ resources and where the set of agents that get a resource is $\bigcup_{\ell \in [h]} L_\ell$. Furthermore, in $\mathcal{X}'$, there cannot be any envy under EF-init between agents $i,j \in \bigcup_{\ell \in [h]} L_\ell$, as these agents would also be envious in $\mathcal{X}$.
        Finally, by \Cref{EFinit:ordered}, in $\mathcal{X}'$, all agents in $L_h$ need to get the same number $c'$ of resources. It needs to hold that $c'\leq k_h$, since $\mathcal{X}$ is fitting.
        Thus, $\mathcal{X}'$ is fitting for $D[a'][h][c']$. By the induction assumption, this implies that $D[a'][h][c']=\text{"yes"}$.
        As in the algorithm, choose an agent $i^* \in L_{h}$ with minimal $v_{i^*}$ and $j^* \in L_{h+1}$ with maximal $v_{j^*}$, and let $c'_{min} = c + \lceil \frac{b_{j^*}-b_{i^*}}{v_{i^*}} \rceil$ and $c'_{max} = c + \lfloor \frac{b_{j^*}-b_{i^*}}{v_{j^*}} \rfloor$. Note that since ${i^*}$ does not envy ${j^*}$ under EF-init in $\mathcal{X}$, it needs to hold that
        \begin{align*}
          &b_{i^*} + u_{i^*}(X_{i^*}) \geq b_{j^*} + u_{i^*}(X_{j^*})\\
          \iff& b_{i^*} + v_{i^*} \cdot |X'_{i^*}| \geq b_{j^*} + v_{i^*} \cdot c\\
          \iff& |X'_{i^*}| -c \geq \frac{b_{j^*} - b_{i^*}}{v_{i^*}}\iff |X'_{i^*}| \geq c'_{min}.
        \end{align*}
        Furthermore, ${j^*}$ does not envy ${i^*}$ under EF-init in $\mathcal{X}$, so we have that
        \begin{align*}
          &b_{j^*} + u_{j^*}(X_{j^*}) \geq b_{i^*} + u_{j^*}(X_{i^*})\\
          \iff& b_{j^*} + v_{j^*} \cdot c \geq b_{i^*} + v_{j^*} \cdot |X'_{i^*}|\\
          \iff& \frac{b_{j^*}-b_{i^*}}{v_{j^*}} \geq |X'_{i^*}| -c \iff |X'_{i^*}| \leq c'_{max}.
        \end{align*}

        Note that as shown above, $c'_{min}\leq c' = |X'_{i^*}| \leq  c'_{max}$. Thus, we consider the "yes"-entry for $\mathcal{X}'$ in the algorithm and set entry $D[a][h+1][c]=\text{"yes"}$ in \alglineref{alg:efInit:combine}.

        \paragraph{($\Rightarrow$):}
        Conversely, assume that $D[a][h+1][c]=\text{"yes"}$. Let $a'=a-|L_{h+1}|\cdot c$.
        Again, choose $i^* \in L_{h}$ with minimal $v_{i^*}$ and $j^* \in L_{h+1}$ with maximal $v_{j^*}$, and  let $c'_{min} = c + \lceil \frac{b_{j^*}-b_{i^*}}{v_{i^*}} \rceil$ and $c'_{max} = c + \lfloor \frac{b_{j^*}-b_{i^*}}{v_{j^*}} \rfloor$.
        As we only set the entry to $\text{"yes"}$ in \alglineref{alg:efInit:combine}, there needs to exist some $c' \in \{c'_{min}, \dots, c'_{max}\}$ such that $D[a'][h][c']=\text{"yes"}$. By the induction assumption, this implies that there is a fitting allocation $\mathcal{X}$ for $D[a'][h][c']$. Extending $\mathcal{X}$ by allocating $c$ resources to every agent in level $h+1$ yields an allocation $\mathcal{X}'$ that allocates exactly $a'+c\cdot |L_{h+1}|=a$ resources and where the set of agents that receive a resource is $\bigcup_{\ell \in [h+1]} L_\ell$.
        It remains to prove that there is no envy under EF-init between any agents $i,j\in \bigcup_{\ell \in [h+1]} L_\ell$ in $\mathcal{X}'$. Clearly, there cannot be a pair of agents $i,j \in \bigcup_{\ell \in [h]} L_\ell$ where $i$ envies $j$, as $i$ would also envy $j$  (under EF-init) in $\mathcal{X}$. Next, consider two agents $i\in L_h$ and $j \in L_{h+1}$. Note that
        \begin{align*}
          &|X'_{i}|= c'\geq c'_{min}=c + \lceil \frac{b_{j^*}-b_{i^*}}{v_{i^*}} \rceil \geq |X'_{j}|+ \lceil \frac{b_{j}-b_{i}}{v_{i}} \rceil\\
          \Rightarrow& |X'_{i}| -|X'_{j}| \geq  \frac{b_{j}-b_{i}}{v_{i}} \\
          \iff& b_{i} + |X'_{i}| \cdot v_{i} \geq b_{j} + |X'_{j}|\cdot v_{i},
        \end{align*}
        so agent $i$ does not envy $j$ under EF-init. Furthermore, we have that
        \begin{align*}
          &|X'_{i}|= c'\leq c'_{max} = c + \lfloor \frac{b_{j^*}-b_{i^*}}{v_{j^*}} \rfloor \leq|X'_{j}| + \lfloor \frac{b_{j}-b_{i}}{v_{j}} \rfloor \\
          \Rightarrow & \frac{b_{j}-b_{i}}{v_{j}} \geq |X'_{i}| -|X'_{j}| \\
          \iff & b_{j} + |X'_{j}| \cdot v_{j} \geq b_{i} + |X'_{i}|\cdot v_{j},
        \end{align*}
        so agent $j$ does not envy $i$ under EF-init.
        It remains to prove that there is no envy under EF-init between an agent $i \in L_{h'}$ with $h'< h$ and an agent $j \in L_{h+1}$. Consider an agent $j' \in L_h$. We have that there is no envy between agents $i$ and $j'$, as there is no envy between any agents in $\bigcup_{\ell \in [h]} L_\ell$ (under EF-init), since $\mathcal{X}$ is fitting. Moreover, as argued above, there is no envy between agent $j' \in L_h$ and $j \in L_{h+1}$. With \Cref{lemma:EFInit_transitive}, it follows that there is no envy between agent $i \in L_{h'}$ with $h'< h$ and $j \in L_{h+1}$.
        Thus, $\mathcal{X}'$ is fitting for $D[a][h+1][c]$.
      \end{claimproof}

      Now, we are ready to prove the correctness of the algorithm in the following claim.
      \begin{claim}
        \Cref{alg:efInit} accepts if and only if there is a complete EF-init allocation.
      \end{claim}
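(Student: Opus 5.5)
Both directions rest on \Cref{claim:D_correct}. What remains is to show that CheckEFInit$(m,b,c)$ returns ``yes'' exactly when a fitting allocation for $D[m][b][c]$ is also EF-init. Note that CheckEFInit depends only on $(b,c)$, not on the fitting allocation itself.

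\emph{($\Rightarrow$).} Suppose the algorithm accepts in \alglineref{alg:efInit:accept} for some $b\in[t^*]$ and $c\in[k_b]$. By \Cref{claim:D_correct} there is a fitting allocation $\mathcal{X}$ for $D[m][b][c]$. It is complete because it allocates $m$ resources, and exactly the agents in $\bigcup_{\ell\in[b]}L_\ell$ receive resources. We check every ordered pair $(i,j)$.
\begin{itemize}
\item If $X_j=\emptyset$, there is nothing to show. This covers every $j$ outside the first $b$ levels.
\item If $i,j$ both lie in the first $b$ levels, there is no envy by fittingness.
\item If $i\in L_{h}$ with $b<h\le t^*$ and $j\in L_{h'}$ with $h'\le b$, then $X_i=\emptyset$. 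For $j\in L_b$, CheckEFInit gives $c\le (b_j-b_i)/v_i$, which rearranges to $b_i\ge b_j+v_ic$, so $i$ does not envy $j$. (All agents in $L_b$ share $b_j$ and get $c$ resources, so one representative suffices.) For $h'<b$, pick any $j'\in L_b$; there is no envy between $j$ and $j'$ (fitting) and none between $j'$ and $i$ (just shown, plus $j'$ trivially does not envy $i$ since $X_i=\emptyset$). \Cref{lemma:EFInit_transitive} with middle level $b$ then gives that $i$ does not envy $j$. This is legitimate because all of $h',b,h$ lie in $[t^*]$.
\item If $i$ lies in a level $\ge h^*$, then $|X_j|\le k_{h'}\le\lfloor (b_i-b_j)/v_i\rfloor$ by fittingness and the definition of $k_{h'}$, so $b_i\ge b_j+v_i|X_j|$.
\end{itemize}
Hence $\mathcal{X}$ is a complete EF-init allocation.

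\emph{($\Leftarrow$).} Let $\mathcal{X}$ be a complete EF-init allocation. By \Cref{EFinit:ordered} applied with $h=t$, agents in levels $\ge h^*$ receive nothing, and the set of agents receiving resources is downward closed in levels, i.e.\ equal to $\bigcup_{\ell\in[b]}L_\ell$ for some $b\le t^*$. Here $b\ge1$ since $m\ge1$. All agents of $L_b$ receive the same number $c\ge1$ of resources.

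Every $|X_i|$ is bounded by $k_h$, since agents above level $t^*$ would otherwise envy. So $\mathcal{X}$ is fitting for $D[m][b][c]$, and \Cref{claim:D_correct} gives $D[m][b][c]=$``yes''. Finally, EF-init for $i$ in levels $b+1,\dots,t^*$ versus $j\in L_b$ with $X_i=\emptyset$ says exactly $c\le (b_j-b_i)/v_i$. So CheckEFInit returns ``yes'' and the algorithm accepts.

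The main obstacle is the case in ($\Rightarrow$) where $i$ is a resourceless agent in a level above $b$ (but at most $t^*$) and $j$ is in a level strictly below $b$. CheckEFInit tests only representatives of $L_b$, and it is \Cref{lemma:EFInit_transitive} that extends this to the lower levels.
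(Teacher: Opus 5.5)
Your proof is correct and follows the paper's argument essentially step by step. For ($\Rightarrow$) you use \Cref{claim:D_correct}, the bounds $k_{h}$ for agents above level $t^*$, and CheckEFInit combined with \Cref{lemma:EFInit_transitive} through a representative of $L_b$ for agents in levels $b+1,\dots,t^*$. For ($\Leftarrow$) you use \Cref{EFinit:ordered} to show that a complete EF-init allocation is fitting for some $D[m][b][c]$ and passes CheckEFInit.

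One small correction concerns the inequality $c\le (b_j-b_i)/v_i$ that you take from CheckEFInit; the paper has the same sign slip in the algorithm and in its proof. Since $b_i>b_j$ and $c\ge 1$, this inequality is never satisfiable, and it does not rearrange to $b_i\ge b_j+v_i c$. The intended condition is $c\le (b_i-b_j)/v_i$, and with it both of your directions go through.
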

      \begin{claimproof}
        We prove the claim by proving both directions of the equivalence.
        \paragraph{($\Rightarrow$):}
        First, suppose that \Cref{alg:efInit} accepts. Note that we only accept in \alglineref{alg:efInit:accept} if there is an entry $D[m][b][c]=\text{"yes"}$ for some $b\in [t^*]$ and $c \in [m]$. By \Cref{claim:D_correct}, there exists a fitting allocation $\mathcal{X}$ for $D[m][b][c]$. Clearly, this allocation is complete.

        Note that no agent in $A \setminus \bigcup_{\ell \in [b]} L_\ell$ gets a resource in $\mathcal{X}$. Thus, only agents in $\bigcup_{\ell \in [b]} L_\ell$ can be envied under EF-init by any agent. Since $\mathcal{X}$ is fitting for $D[m][b][c]$, no agent $i \in \bigcup_{\ell \in [b]} L_\ell$ can be envious of another agent under EF-init. Therefore, we only need to check if there is an agent $i\in A \setminus \bigcup_{\ell \in [b]} L_\ell$ that envies an agent $j\in \bigcup_{\ell \in [b]} L_\ell$ under EF-init.
        First, observe that no agent $i\in \bigcup_{\ell \in \{t^*,\dots, t\}} L_\ell$ envies an agent $j\in \bigcup_{\ell \in [b]} L_\ell$ under EF-init:
        Consider some agent $i\in \bigcup_{\ell \in \{t^*,\dots, t\}} L_\ell$ and an agent $j\in L_\ell$ for some $\ell \in [b]$. Since $\mathcal{X}$ is fitting, the allocation needs to satisfy the upper-bounds on the number of resources for agents in the first $t^*$ levels that we defined in the beginning. Formally, it needs to hold that $|X_j|\leq k_\ell$.
        Thus, we have that
        \begin{align*}
          |X_j|\leq k_\ell \leq \frac{b_i-b_j}{v_i} \iff b_i \geq b_j + v_i \cdot |X_j| \\\iff  b_i + u_i(X_i) \geq b_j + u_i(X_j),
        \end{align*}
        so agent $i$ does not envy $j$ under EF-init.
        Note that if $b=t^*$, we already have shown that there is no envy between any pair of agents under EF-init.
        Otherwise, if $b<t^*$, it remains to consider an agent $i\in \bigcup_{\ell \in \{b+1,\dots, t^*\}} L_\ell$. Let $j^*\in L_b$ be an agent in $L_b$.  Recall that $j^*$ does not envy any agent $j\in \bigcup_{\ell \in [b]} L_\ell$ under EF-init, since $\mathcal{X}$ is fitting. Thus, if $i$ does not envy $j^*$, by \Cref{lemma:EFInit_transitive}, $i$ does not envy $j$ (under EF-init). Note that $i$ does not envy $j^*$ under EF-init if and only if $c=|X_{j^*}|\leq \frac{b_{j^*}-b_i}{v_i}$, as
        \begin{align*}
             |X_{j^*}|\leq \frac{b_{j^*}-b_i}{v_i} \iff b_{j^*} + v_i \cdot |X_{j^*}| \leq b_i \\\iff b_{j^*} + u_i(X_{j^*})\leq b_i + u_i(X_i).
        \end{align*}
                 Since we accept in \alglineref{alg:efInit:accept} only if this inequality holds for all agents $i\in \bigcup_{\ell \in \{b+1,\dots, t^*\}} L_\ell$, allocation $\mathcal{X}$ is EF-init.

        \paragraph{($\Leftarrow$):}
        Suppose there is a complete EF-init allocation $\mathcal{X}$.
        If the algorithm accepts, we are done. In the following, assume for the sake of contradiction that the algorithm does not accept.
        Note that it needs to hold for all $h \in [t^*]$ and $i \in L_h$ that $|X_i|\leq k_{h}$ as argued in the beginning: If $t^*<t$, all agents in $\bigcup_{\ell \in \{t^* + 1,\dots,t\}} L_\ell$ cannot get a resource in  $\mathcal{X}$. Therefore, if $|X_i|> k_{h}$ for some $h \in [t^*]$ and $i \in L_h$, then there is some agent $j \in \bigcup_{\ell \in \{t^* + 1,\dots,t\}} L_\ell$ that envies $i$ under EF-init. If $t^*=t$, we set $k_{h}=m$ (recall that this is the case if there is no violating pair in the instance), so $|X_i|\leq k_{h}=m$ holds trivially.

        Let $h$ be the maximal level $h \in [t]$ such that an agent $i \in L_h$ receives a resource in $\mathcal{X}$. Note that $h\leq t^*$, since only agents in $\bigcup_{\ell \in [t^*]} L_\ell$ can get a resource in an EF-init allocation. Moreover, by \Cref{EFinit:ordered}, all agents in $\bigcup_{\ell \in [t]} L_\ell$ need to get at least one resource and every agent in $L_h$ needs to get exactly $|X_i|$ resources in $\mathcal{X}$. Clearly, since $\mathcal{X}$ is EF-init, there is no envy under EF-init between any pair of agents in $\mathcal{X}$. It follows that $\mathcal{X}$ is fitting for $D[m][h][|X_i|]$ and thus $D[m][h][|X_i|]=\text{"yes"}$ by \Cref{claim:D_correct}.

        If $h=t^*$, CheckEFInit($m, h, |X_i|$) returns "yes" and we accept in \alglineref{alg:efInit:accept}, which is a contradiction. Thus, assume that $h<t^*$ and consider an agent $i' \in \bigcup_{\ell \in \{h+1,\dots, t^*\}} L_\ell$ and an agent $j^*\in L_h$. As argued above, $i'$ does not envy $j^*$ under EF-init if and only if $|X_{j^*}|\leq \frac{b_{j^*}-b_{i'}}{v_{i'}}$. Thus, this inequality needs to hold for all agents $i' \in \bigcup_{\ell \in \{h+1,\dots, t^*\}} L_\ell$ in allocation $\mathcal{X}$. Therefore, CheckEFInit($m, h, |X_i|$) = "yes". Again, this implies that we accept in \alglineref{alg:efInit:accept}, which is a contradiction. Therefore, the algorithm needs to accept either for this entry or for an entry that was checked earlier.
      \end{claimproof}

      Finally, it remains to show that \Cref{alg:efInit} runs in $\mathcal{O}(n^2\cdot m^3)$ time and $\mathcal{O}(n\cdot m^2)$ space.
      Recall that we proceed in two steps: Firstly, we partition the agents into the levels, compute $t^*\in [t]$ and the upper-bounds $k_h$ for all $h\in [t^*]$. Secondly, we decide \EFEX{} using \Cref{alg:efInit}.

      Partitioning the agents into the levels and computing $t^*$ and the upper-bounds can be done in time $\mathcal{O}(n^2\cdot m^3)$. Moreover, we need to store at most $t^*\leq n$ upper-bounds.
      Next, we consider \Cref{alg:efInit}. First, since $t\leq n$, the table $D$ has size $\mathcal{O}(n\cdot m^2)$. Clearly, the initialization of the table within \alglineref{alg:efInit:initA} to \alglineref{alg:efInit:initB} can be done in running time $\mathcal{O}(n\cdot m^3)$.
      To compute a table entry for level $L_b$ with $b\in \{2,\dots, t\}$, we need to look at at most $m$ entries for level $L_{b-1}$.
      Note that we also need to find the agents $i \in L_{b-1}$ with minimal value $v_i$ and $j \in L_b$ with maximal value $v_j$. However, for every one of the at most $n$ levels, we can store the agents with minimal and maximal value when partitioning the agents into the levels.
      Thus, each one of the $\mathcal{O}(n\cdot m^2)$ entries can be computed in $\mathcal{O}(m)$, so the algorithm fills table $D$ in time $\mathcal{O}(n\cdot m^3)$.
      Lastly, we analyze the running time of checking if there is a "yes"-entry with a fitting EF-init allocation within the loop in \alglineref{alg:efInit:check}. First, observe that we check at most $\mathcal{O}(n\cdot m)$ entries. For every "yes"-entry, we check whether a fitting allocation is EF-init using function CheckEFInit, which takes time at most $\mathcal{O}(n\cdot m)$.
    \end{proof}
    Note that the running time and space consumption from above may be improved by a more careful analysis. For example, in \Cref{alg:efInit}, the space consumption of table $D$ can be improved by storing only the $\mathcal{O}(m^2)$ entries of the previous level. Whenever we introduce a $\text{"yes"}$-entry, we can directly check if a fitting allocation is also EF-init, in which case we can directly accept.

    Moreover, the algorithm can also be modified to compute an EF-init allocation by storing a fitting allocation for $\text{"yes"}$-entries. Note that it suffices to store an arbitrary fitting allocation for an entry: Extending a fitting allocation for an entry for level $L_b$ with $b \in [t]$ to level $L_{b+1}$ and checking if the allocation is EF-init only depends on the number of resources allocated to an agent in $L_b$, which is the same for all fitting allocations for this entry.

  A natural follow-up question to this result is whether it can be extended to decide \EFOEX{} for identical resources. However, directly adapting our dynamic programming approach seems challenging, since the key ``transitivity'' property from \Cref{lemma:EFInit_transitive} does not hold any more for EF1-init: For three agents $i,i^*,j \in A$, under EF1-init, we are allowed to disregard a resource each when checking whether $i$ envies $i^*$ and whether $i^*$ envies $j$. As a result,  even when there is no envy between $i$ and $i^*$ and between $i^*$ and $j$ under EF1-init, a single resource may not be sufficient to eliminate envy between $i$ and $j$. Moreover, it does not hold that agents in level $h^*$ (the first level containing an agent $j$ from a violating pair) and above cannot get a resource, if the differences in the utility agents from a violating pair derive for a resource are only small.

  \section{A Satisfiable Envy Notion}\label{sec:satisfiable}
  In \Cref{prop:EF1_init:nonex}, we showed that a complete EF1-init allocation may not exist, even in simple instances. This limits the practical applicability of EF1-init, since it does not provide guidance on which solution to select in many situations. Moreover, we proved that finding such an allocation is computationally intractable.
  Motivated by this, we propose a relaxation of EF1-init that is always satisfiable (see \Cref{sub:minEF1definition}) and show that an allocation satisfying this relaxed notion can be computed efficiently (see \Cref{sub:computation}). Furthermore, we identify a natural special case in which this relaxation coincides with EF1-init (see \Cref{pr:implication}).

  \subsection{Minimum EF1 with Initial Utilities}\label{sub:minEF1definition}

  To derive a meaningful fairness notion that is always satisfiable, we revisit \Cref{prop:EF1_init:nonex}. This counterexample illustrates that EF1-init allocations fail to exist when there is an inherent conflict between resolving the disadvantage of initially worse-off agents and treating initially better-off agents in a way that, from their perspective, is fair: agent $j$ with lower initial utility also derives a lower utility from each resource than agent $i$, so the agents fundamentally disagree on how many resources are needed to bridge their initial utility gap. Thus, there will always be an agent who is envious under EF1-init.
  To resolve this conflict, we deviate from the comparison approach taken by EF1-init where we compare initial utilities plus bundle values. Instead, when checking whether an agent $i$ with higher initial utility envies another agent $j$ with lower initial utility,  we first allow agent $j$ to receive a subset of resources $X^* \subseteq X_j$ bounded relative to their initial utility gap $b_i-b_j$. Intuitively,  following the principle of \emph{equality of outcome}, the purpose of the set $X^*$ is to allow a differentiated treatment of initially worse-off agents to resolve their initial disadvantage. We then disregard $X^*$ when evaluating if the higher-utility agent $i$ is envious of $j$'s bundle.
  In the following, we start by presenting a first implementation of this idea, which, however, turns out to be not always satisfiable, and then define minimum-EF1-init, our satisfiable envy notion. 
  
  \paragraph{A first relaxation attempt}
  Formally, one could adjust the EF1-init definition for the case $b_i > b_j$ as follows:
  \begin{align*}
    &\text{If }b_i> b_j, \text{ there is } X^* \subseteq X_j \text{ with } u_j(X^*)< b_i-b_j \text{ and } r\in X_j \\
    &\text{so that }u_i(X_i) \geq u_i(X_j \setminus (X^* \cup \{r\}))\text{.} \tag{1} \label{eq:1}
\end{align*}

  However, this relaxation is not satisfiable if there are multiple agents with low initial utility values that have very different utility functions:
  \begin{example}\label{ef1iw:nonex}
    Consider an instance with agents $A = \{1,2,3\}$ and $m=10$ identical resources. The agents have initial utilities $b_1 = b_2 = 0$ and $b_3 = 20$. Agents $1$ and $3$ have $u_1({\{r\}}) = u_3(\{r\}) = 20$ for every $r \in R$, while agent $2$ has $u_2(\{r\}) = 5$.  Since agents $1$ and $2$ have the same initial utility, we need to have $|X_2|-|X_1|\leq 1$. Moreover, \Cref{eq:1} requires that $|X_1|-|X_3|\leq 1$, since only the set $X^*=\emptyset$ satisfies $u_1(X^*) < 20 - 0$. However, this implies that $2$ envies $3$.
  \end{example}

  \paragraph{A satisfiable relaxation}
  Intuitively, \Cref{ef1iw:nonex} shows that it is insufficient to only consider each pair of agents individually when restricting the set $X^*$ in case the difference in initial utility can be offset more easily for some agents than others.
  To overcome this, when checking whether an agent $i$ with higher initial utility envies an agent $j$ with lower initial utility, we relax the restriction on $X^*$ by considering, for each resource in $X^*$, the minimum utility it gives to any agent with a lower initial utility than $i$.
  This yields the following notion:

  \begin{definition}\label{def:ef1iwmin}
    An allocation $\mathcal{X}$ is \emph{minimum-EF1-init} (min-EF1-init) if for every pair $i,j \in A$, we have $X_j=\emptyset$\footnote{As in \Cref{def:EFinit}, this special case is necessary to ensure that min-EF1-init can be satisfied on instances with large initial utility disparities (by an allocation where high-initial-utility agents receive no resources). Intuitively, perceiving such an allocation as desirable is justified since it is impossible to make the allocation fairer from the perspective of the agents with lower initial utility by taking resources away from the agents with higher initial utility.} or it holds that:
    \begin{itemize}
      \item[(C1)] If $b_i \leq b_j$, then  $b_i + u_i(X_i) \geq b_j + u_i(X_j\setminus \{r\})$ for some $r\in X_j$.
      \item[(C2)] If $b_i> b_j$, then there exists a resource $r\in X_j$ and a subset $X^* \subseteq X_j$ with
        \[
          \sum_{r' \in X^*} \min_{j' \in A \colon b_{j'} < b_i} u_{j'}(\{r'\}) < b_i-b_j,
        \]
        such that $u_i(X_i) \geq u_i(X_j \setminus (X^* \cup \{r\}))$ holds.
    \end{itemize}
  \end{definition}
  Intuitively, min-EF1-init ensures fairness in the following way.  When comparing an agent to another agent who initially is at least as well off, condition (C1) ensures that the disadvantage of the initially worse-off agent is compensated through the resources allocated to them.
  Conversely, when considering whether some initially better-off agent $i$ envies an initially worse-off agent $j$, we require condition (C2), which allows $j$ to receive a carefully bounded set $X^*$ of additional resources to counter $j$'s initial disadvantage compared to $i$. After the initial utility difference has been accounted for by the set $X^*$, condition (C2) ensures that the remaining resources are distributed in a way that is perceived as fair by agent $i$.
  
  Let us now take a closer look at the definition of $X^*$ in (C2), whose definition follows the principle of equality of outcome in the following way.
We restrict the set $X^*$ by considering the minimum utility of an agent with initial utility less than $b_i$ for each resource. Recalling \Cref{ef1iw:nonex}, this is necessary since considering simply the utility of $j$ as in \Cref{eq:1} may not be satisfiable: in \Cref{ef1iw:nonex}, it implies $|X_2|-|X_3|\leq 2$, violating (C1) as agent $2$ does not reach $3$'s initial utility. On the other hand, (C2) allows for an allocation where $|X_2|-|X_3|= 4$, compensating $2$'s initial disadvantage.
Note that the alternative way to overcome the nonexistence in \Cref{ef1iw:nonex}  would be to relax condition (C1) such that it is satisfied by an allocation with $|X_2|-|X_3|\leq 2$, which would then allow for stricter restrictions on the set $X^*$ as in \Cref{eq:1}. However, in such an allocation, even the post-allocation utility of agent $2$ would be smaller than the initial utility of $3$, violating equality of outcome.

    Revisiting \Cref{prop:EF1_init:nonex}, where no complete EF1-init allocation exists, one can observe that in this instance an allocation $\mathcal{X}$ is min-EF1-init if and only if $|X_i|=1$ and $|X_j|=3$.
    This allocation is desirable since it resolves agent $j$'s initial disadvantage, but does not overcompensate (i.e., does not give $j$ "too many" resources) due to the bound on the set $X^*$: Note that assigning all four resources to agent $j$ violates min-EF1-init, since only a set $X^*$ with $|X^*|=2$ fulfills the condition in (C2), so $|X_j|-|X_i|\leq3$ in any min-EF1-init allocation.
    
    While we believe no substantially stronger notions than min-EF1-init are satisfiable in this setting,\footnote{\label{fn:suggestion}A potential strengthening is to replace the restriction on $X^*$ in (C2) with $\min_{j' \in A : b_{j'} < b_i} u_{j'} (X^*) < b_i - b_j$; whether this notion is always satisfiable remains an open question.} it is also clearly a weaker requirement than \Cref{eq:1} and may permit allocations that agents with high initial utilities could perceive as unfair.\footnote{We remark that the allocation discussed in \Cref{ex:shortcomings} is not the unique min-EF1-init allocation for this instance, and there exist more balanced min-EF1-init allocations.}
    \begin{example}\label{ex:shortcomings}
      Consider three agents $A = \{1,2,3\}$ with $b_1 = b_2 = 0$, $b_3 = 10$, and $m = 100$ resources. Agent $1$ has $u_1(\{r^*\}) = 500$ for a particular $r^* \in R$ and $u_1(\{r\}) = 0$ for all other $r\in R$. Agents $2$ and $3$ have $u_2(\{r\}) = u_3(\{r\}) = 50$ for all $r\in R$. The allocation $X_1 = \{r^*\}$, $X_2 = R \setminus \{r^*\}$, $X_3 = \emptyset$ is min-EF1-init. Agents $1$ and $2$ do not envy agent $3$ since $X_3=\emptyset$. Agent $3$ does not envy agent $1$ since $|X_1|=1$, and $3$ does not envy agent $2$ since $u_1(X_2)=0$ allows to choose $X^*=X_2$ in (C2). Further,  it is easy to see that agent $1$ does not envy agent $2$ and vice versa. However, agent $3$ may regard such an allocation as unfair.
    \end{example}
    However, as discussed above, alternative ways of resolving the conflict in \Cref{ef1iw:nonex} lead to outcomes that are unacceptable from the perspective of equality of outcome, since they may fail to resolve the disadvantage of initially worse-off individuals.  We therefore believe that this tradeoff is unavoidable to achieve a satisfiable envy-based fairness notion. 
    On the positive side, we show that min-EF1-init implies EF1-init if agents' utility for a resource diminishes in their initial utility.
    \newcommand{\diminishing}{diminishing}
    \newcommand{\dimin}{dim}
    \begin{definition}\label{def:dimin}
      We call utilities \emph{\diminishing}, if $b_i<b_j$ implies  $u_i(\{r\})\geq u_j(\{r\})$ for all $i,j\in A$ and $r\in R$.
    \end{definition}

    While diminishing utilities can be a strong assumption when there are many non-identical resources and when the agents have heterogeneous, subjective utilities over them, they capture a meaningful and practically relevant class of scenarios where utility is determined primarily by an agent’s initial condition, rather than by individual preferences. For instance, in aid-based applications,  an agent’s initial utility may represent their current state or need. Agents in worse initial states typically benefit more from support resources: In the ARMMAN program featured in the introduction, a service call to a disengaged beneficiary can substantially increase engagement, whereas such a call has a limited effect on someone already very actively engaged.

    \begin{proposition}\label{pr:implication}
      Let $\mathcal{X}$ be an allocation that is min-EF1-init. If utilities are diminishing, then allocation $\mathcal{X}$ is EF1-init.
    \end{proposition}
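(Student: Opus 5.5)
Fix a pair $i,j\in A$ with $X_j\neq\emptyset$ and split according to the sign of $b_i-b_j$, mirroring the two conditions of \Cref{def:ef1iwmin}. If $b_i\le b_j$, condition (C1) is literally the EF1-init inequality of \Cref{def:EF1init}, so nothing is to be shown. The whole work lies in the case $b_i>b_j$, where (C2) gives a resource $r\in X_j$ and a set $X^*\subseteq X_j$ with
\[
\sum_{r'\in X^*}\min_{j'\in A\colon b_{j'}<b_i}u_{j'}(\{r'\})<b_i-b_j
\quad\text{and}\quad
u_i(X_i)\ge u_i\bigl(X_j\setminus(X^*\cup\{r\})\bigr).
\]

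The key step is to bound $u_i(X^*)$ by $b_i-b_j$ using diminishing utilities (\Cref{def:dimin}). For every $r'\in X^*$ and every $j'$ with $b_{j'}<b_i$, diminishing utilities give $u_{j'}(\{r'\})\ge u_i(\{r'\})$, so $\min_{j'\colon b_{j'}<b_i}u_{j'}(\{r'\})\ge u_i(\{r'\})$. This minimum ranges over a nonempty set, since $j$ itself qualifies. Summing over $X^*$ and using additivity yields $u_i(X^*)<b_i-b_j$. The conclusion is also fine when $X^*=\emptyset$.

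It then remains to combine the two bounds. We may assume $r\notin X^*$: otherwise replace $X^*$ by $X^*\setminus\{r\}$, which only decreases the sum and leaves $X_j\setminus(X^*\cup\{r\})$ unchanged. By additivity,
\[
u_i(X_j\setminus\{r\})=u_i\bigl(X_j\setminus(X^*\cup\{r\})\bigr)+u_i(X^*)\le u_i(X_i)+b_i-b_j,
\]
which rearranges to $b_i+u_i(X_i)\ge b_j+u_i(X_j\setminus\{r\})$. This is exactly EF1-init for the pair $i,j$ with the same resource $r$.

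I expect no real obstacle. The only delicate point is checking that the minimum in (C2) dominates $u_i$ resource by resource, which needs both that the index set is nonempty and that every index has initial utility strictly below $b_i$, so that \Cref{def:dimin} applies. Both hold by construction.
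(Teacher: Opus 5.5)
Your proof is correct and follows the paper's argument: (C1) settles the case $b_i\le b_j$, and for $b_i>b_j$ you use diminishing utilities to get $u_i(X^*)<b_i-b_j$, then combine this with the (C2) inequality via additivity. The only cosmetic difference is that you first arrange $r\notin X^*$ so as to obtain an equality, whereas the paper directly uses $u_i(X_j\setminus\{r\})\le u_i(X_j\setminus(X^*\cup\{r\}))+u_i(X^*)$.
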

    \begin{proof}
      In the following, to highlight that an inequality follows from \Cref{def:dimin}, we will write $\geq_{\text{\dimin}}$ and $\leq_{\text{\dimin}}$, and analogously $\leq_{\text{(C2)}}$ to highlight that an inequality follows from (C2) in \Cref{def:ef1iwmin}.
      We need to verify that for every pair of agents $i,j \in A$, either $X_j = \emptyset$ or there exists a resource $r\in X_j$ such that
      \[
        b_i + u_i(X_i) \geq b_j + u_i(X_j \setminus \{r\}). \label{eq:ef1} \tag{$*$}
      \]
      Since $\mathcal{X}$ is min-EF1-init,  conditions (C1) and (C2) from \Cref{def:ef1iwmin} hold.
      If $b_i \leq b_j$, (C1) implies \eqref{eq:ef1}. If $b_i > b_j$, we have that (C2) holds for some set $X^* \subseteq X_j$ and resource $r\in X_j$. Note that
      $
      u_i(X^*) = \sum_{r' \in X^*} u_i(\{r'\}) \leq_{\text{\dimin}}  \sum_{r' \in X^*} \min_{j' \in A \colon b_{j'} < b_i} u_{j'}(\{r'\}) <_{\text{(C2)}} b_i-b_j.
      $
      Since  $u_i(X_i) \geq u_i(X_j \setminus (X^* \cup \{r\}))$ by (C2), it follows that
      $
      u_i(X_j \setminus \{r\}) \leq u_i(X_j \setminus (X^* \cup \{r\})) + u_i(X^*) \leq_{\text{(C2), \dimin}} u_i(X_i) +  b_i - b_j,
      $
      which implies \eqref{eq:ef1}.
    \end{proof}

    \subsection{A Round-Robin Algorithm Guaranteeing Minimum EF1 with Initial Utilities}\label{sub:computation}

    In this section, we propose an algorithm that computes a complete min-EF1-init allocation. For this, we extend the well-known ``round-robin'' algorithm for the setting with initial utilities.\footnote{In the standard round-robin algorithm, we fix an ordering of the agents. In each round of the algorithm, following this ordering, agents pick their favorite so far unallocated resource. Once there are no more resources left, the algorithm terminates. In the classic setting without initial utilities, the resulting allocation is guaranteed to satisfy EF1.}

      \begin{algorithm}[t]
    \begin{algorithmic}[1]
      \setlength{\itemsep}{0.2em} 
      \State Partition set of agents $A$ into levels $L_1, \dots, L_t$ and initialize $\mathcal{X}$ as the empty allocation.
      \State Initialize set of \emph{active} agents $\mathcal{L} \subseteq A$ to $L_1$.
      \State Initialize linear order $\preceq$ over $\mathcal{L}$ to some arbitrary linear order over $L_1$.\protect\footnotemark
      \While{there are unallocated resources}
      \State{\textbf{Start a new \emph{round}:}}
      \While{an agent in $\mathcal{L}$ has not picked in current round}
      \LongState{Pick first agent $i\in \mathcal{L}$ according to $\preceq$ that has not picked in this round.}
      \LongState{Agent $i$ picks unallocated resource $r\in R$ with maximum utility $u_i(\{r\})$: $X_i \gets X_i \cup \{r\}$.}
      \If{all resources are allocated} \Return $\mathcal{X}$\EndIf
      \LongElsIfState{all $i' \in \mathcal{L}$ reached the agents in some\\ level $L_\ell$ with $\ell \in [t]$ and $L_\ell \cap \mathcal{L}=\emptyset$}
      \LongState[1em]{Let $L_\ell$ be the level with minimum $\ell \in [t]$\\ that satisfies the stated condition.}
      \LongState[1em]{$\mathcal{L} \gets \mathcal{L} \cup L_\ell$}
      \LongState[1em]{Extend $\preceq$ (maintaining the order of agents\\ in $\mathcal{L} \setminus L_\ell$) to the updated $\mathcal{L}$ such that:}
      \LongStatex[5.5em]{$\triangleright$ all agents that have already picked in this round are before the agents in $L_\ell$,}
      \LongStatex[5.5em]{$\triangleright$ all agents in $\mathcal{L} \setminus L_\ell$ that have not yet picked are after the agents in $L_\ell$.}
      
      \EndWhile
      \EndWhile
    \end{algorithmic}
    \caption{Round-Robin with Initial Utilities}
    \label{alg:rr_init}
  \end{algorithm}

    \paragraph{Idea}
    In our algorithm, we use the partitioning of the agents into $t$ levels by their initial utility, as introduced in \Cref{def:levels}.
    Our extension is based around the idea that at any point of our round-robin algorithm, only a subset of the agents is \emph{active} and can be assigned a pick. We maintain a picking order over all currently active agents in the algorithm. In the beginning, only the agents with the lowest initial utility in level $L_1$ are active. After each pick, we check whether all currently active agents have \emph{reached} the agents in some (not yet active) level $L_\ell$ with initial utility value $b$. By this, we mean that for each currently active agent $i$, the sum of $i$'s initial utility and $i$'s utility for their bundle is at least $b$ (all active agents believe that they have ``reached the initial situation'' of all agents from $L_\ell$).
    \begin{definition}\label{def:algo:rr_init:reached}
      We say that an agent $i\in A$ with bundle $X_i \subseteq R$ has \emph{reached} an agent $j \in A$ if $b_i + u_i(X_i) \geq b_j$.
    \end{definition}
    If all active agents have reached the agents in $L_\ell$, the agents in $L_\ell$ become active and are inserted into the picking order after the agents that have already picked in this round, meaning that they get to pick directly after being activated. This ensures that each agent from $L_\ell$ prefers the resources picked by them over the resources picked by any other agent that was active before them after $L_\ell$'s activation.
    Another observation crucial to the inner workings of the algorithm is that if agent $j$ is activated after another agent $i$ picked some resource $r$, then $i$ had not yet reached $j$ in the round before, implying that the value of $i$ for their current bundle without $r$ is less than the initial utility difference $b_j -b_i$.
    A full description is given in \Cref{alg:rr_init}.
    We show that \Cref{alg:rr_init} returns a complete allocation satisfying min-EF1-init, implying that a min-EF1-init allocation always exists.
    \begin{restatable}{theorem}{MainTheorem}\label{main-theorem}
      The allocation computed by \Cref{alg:rr_init} is complete and satisfies min-EF1-init. Further,  \Cref{alg:rr_init} runs in polynomial time.
    \end{restatable}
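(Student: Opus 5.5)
We prove the three assertions separately; the min-EF1-init part, and specifically condition (C2), carries essentially all of the difficulty.

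\emph{Completeness and running time.} $L_1 \neq \emptyset$ is active from the start, and no agent is ever deactivated. So every iteration of the inner loop assigns exactly one unallocated resource, and the algorithm stops exactly when $R$ is exhausted. Hence the output is complete and there are $m$ picks. After each pick, the activation test compares the at most $n$ active agents with the at most $t\le n$ levels, and extending $\preceq$ costs $\bigO(n)$. This gives a polynomial bound such as $\bigO(m n^2 + mn\log m)$.

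\emph{Structural invariants.} Before treating (C1) and (C2), we record two facts about the order $\preceq$.
\begin{enumerate*}[label=(\roman*)]
\item Once two agents are both active, their relative order in $\preceq$ never changes.
\item If $j$ is activated in the middle of a round, then every agent $i$ already active either picked before $j$ in that round (and then precedes $j$ forever), or had not yet picked (and then follows $j$ forever).
\end{enumerate*}
This lets us reuse the classic round-robin comparison for any pair restricted to the period after the later one's activation. If $i$ precedes $j$, then $u_i$ of $i$'s picks dominates $u_i$ of $j$'s picks, pairing them round by round. If $j$ precedes $i$, the same holds after discarding $j$'s first pick after activation. A newly activated agent picks immediately, so its first pick is made before the earlier agents' remaining picks of that round.

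\emph{Condition (C1), $b_i\le b_j$.} If $b_i=b_j$, both are activated at the same moment and the argument above gives plain EF1. If $b_i<b_j$, then $i$ is active strictly before $j$. At the moment $j$'s level is added, the activation rule guarantees $b_i+u_i(X_i^{\mathrm{pre}})\ge b_j$, where $X_i^{\mathrm{pre}}$ is $i$'s bundle at that moment. The post-activation comparison gives $u_i(X_i\setminus X_i^{\mathrm{pre}})\ge u_i(X_j\setminus\{r\})$, where $r$ is $j$'s first pick. Adding the two inequalities yields (C1).

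\emph{Condition (C2), $b_i>b_j$.} Now $j$ is active before $i$. Write $X_j=P\cup Q$, where $P$ consists of $j$'s picks before $i$'s activation and $Q$ of those after. For $Q$ we use the post-activation comparison. Since $i$ is inserted right after the agents that already picked in the activation round, $i$ either precedes $j$ or $j$ has already picked; so $u_i(X_i)\ge u_i(Q)$, possibly up to the single resource $r$. For $P$ we set $X^*=P$, possibly minus one resource that plays the role of $r$. We then need $\sum_{r'\in X^*}\min_{j':b_{j'}<b_i}u_{j'}(\{r'\})<b_i-b_j$.

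The key fact is the following. Just before the last pick preceding $i$'s activation, some active agent $k$ with $b_k<b_i$ had not reached $i$. We bound the sum by charging each resource of $X^*$ to an agent whose unreached gap it fits into. We split $X^*$ into the picks $j$ made before successive level activations $L_{h_1},L_{h_2},\dots$ between $b_j$ and $b_i$. The picks of $j$ made before level $h$ is activated, apart from the last one, are worth less than the gap to $L_h$ for some not-yet-reached agent. Via the round-robin pairing, that agent values its own picks at least as much as $j$'s corresponding picks, so its view of the resources is bounded too. Because we use the pointwise minimum over all $j'$ with $b_{j'}<b_i$, each resource can be charged to whichever agent is convenient, and the gaps telescope to $b_i-b_j$.

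\emph{Main obstacle.} The hardest step is the bookkeeping of the ``last pick before reaching'' resources: each telescoping stage seems to leave one uncharged resource, while (C2) allows us to drop only a single $r$. We will first try to show that these boundary resources are absorbed by strict inequality and by the fact that a newly activated agent picks immediately. Concretely, the resource that triggered an activation is precisely the one to be paired with the new agent's own first pick. If that fails, we would strengthen the invariant to state directly that, for every active $k$ and every level $L_\ell$ not yet active, $\sum_{r'\in X_k^{\mathrm{old}}}\min_{j'} u_{j'}(\{r'\})<b_\ell-b_k$, where $X_k^{\mathrm{old}}$ is $k$'s bundle minus its last pick. We would then prove this invariant by induction over the picks.
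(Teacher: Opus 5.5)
Your arguments for completeness, running time and (C1) are fine; the (C1) part is the paper's Cases~1 and~2. Your split $X_j=P\cup Q$, with $X^*$ equal to $P$ minus one resource, is exactly the paper's approach to (C2) when $i$ is activated. For $Q$ you do not need the hedge: $i$ picks immediately upon activation, so $u_i(X_i)\ge u_i(Q)$ holds with nothing removed, which leaves the single resource $r^*$ free for $P$.

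The genuine gap is the bound on $P$. That bound is the entire content of the paper's Activation Gap Lemma, and you leave it open. Worse, the concrete statements you propose for it are false, because they drop the wrong resource. Take the following instance:
\begin{itemize}
\item $L_1=\{j,a\}$ with $b_j=b_a=0$ and $j\preceq a$, and $L_2=\{i\}$ with $b_i=20$;
\item one resource $x$ with $u_j(\{x\})=u_a(\{x\})=19$;
\item plenty of other resources, each worth $1$ to both $j$ and $a$.
\end{itemize}
Then $j$ takes $x$ first, while $a$ needs $20$ picks to reach $i$. After $j$'s third pick, $i$ is still inactive, yet $j$'s bundle minus its last pick is $x$ plus one unit resource. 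So $\sum_{r'}\min_{j'\colon b_{j'}<b_i}u_{j'}(\{r'\})=20\not<20=b_i-b_j$. Hence your fallback invariant on $X_k^{\mathrm{old}}$ fails, and so does your single-stage claim about $j$'s picks ``apart from the last one''. At $i$'s activation the same quantity is $37$. What must be dropped here is $j$'s \emph{first} pick, the usual round-robin shift when $j$ precedes the bottleneck agent $a$; this leaves $19<20$. And, as you note yourself, charging level by level loses one resource per intermediate level, which (C2) cannot afford.

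The missing idea is how the paper chains the levels. It inducts on the level difference between $i$ and $j$ and routes the argument through the agent $j^*$ whose pick $r$ activated $i$. Then $u_{j^*}(X^*_{j^*}\setminus\{r\})<b_i-b_{j^*}$, and $u_{j^*}$ bounds the pointwise minimum from above since $b_{j^*}<b_i$. There are three cases.
\begin{itemize}
\item If $b_{j^*}=b_j$, same-level round robin gives $u_{j^*}(X^*_j\setminus\{r'\})\le u_{j^*}(X^*_{j^*}\setminus\{r\})$ for some $r'$.
\item If $b_{j^*}<b_j$, no telescoping is needed. $j^*$ had already reached $b_j$ when $j$ was activated. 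So the picks of $j^*$ from then until just before $r$ are worth less than $b_i-b_j$ to $j^*$, and they dominate $j$'s picks in that window up to one resource.
\item If $b_{j^*}>b_j$, the induction hypothesis for $(j^*,j)$ at $j^*$'s activation removes one resource and bounds the rest of $j$'s bundle at that time by $b_{j^*}-b_j$. Since $j^*$ picks immediately after its activation, $X^*_{j^*}\setminus\{r\}$ dominates \emph{all} of $j$'s later picks under $u_{j^*}$, with no further loss. The two gaps then add up to $b_i-b_j$.
\end{itemize}
So your intuition that newly activated agents pick immediately is the right one, but it must be applied to the intermediate agent $j^*$, not to $j$'s stage-wise picks.

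Finally, your split does not cover the case that $i$ is never activated. There $X_i=\emptyset$, so all of $X_j$ but one resource must satisfy the bound. The paper treats this case separately: it uses an agent $j^*$ with $b_{j^*}<b_i$ that ends below $b_i$, together with the same three-way case split. One edge case needs care. If the very last pick is what makes all active agents reach the next inactive level, the algorithm stops without activating it, and that pick then has to play the role of the activating pick.
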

    
    \footnotetext{We write $i \preceq j$ to denote that $i$ picks before $j$.}
    \subsection{Proof of \texorpdfstring{\Cref{main-theorem}}{Theorem~\ref{main-theorem}}}
    In this subsection, we prove \Cref{main-theorem}. We start with defining notation and formalizing two observations (\Cref{subsub:1}) before we present the crucial activation gap lemma (\Cref{subsub:2}) and finally the proof of  \Cref{main-theorem} (\Cref{subsub:3}).

    \subsubsection{Notation and Initial Observations}\label{subsub:1}
    An agent $i \in A$ is \emph{active} in a round if $i \in \mathcal{L}$ at the end of the round. An agent $i \in A$ is \emph{activated} in a round (after/by a pick) if $i$ is added to  $\mathcal{L}$ during the round (after/by the pick).
    A \emph{picking sequence} starting with/after some pick is the contiguous subsequence of all picks until the end after this pick.\footnote{We will also consider picking sequences starting with a given pick up to another pick, by which we mean the contiguous subsequence of all picks in the algorithm between the two given picks.}  
    We first observe the following properties of \Cref{alg:rr_init}.
\begin{observation}\label{alg:rr_init:active}
  Consider \Cref{alg:rr_init} and let $\preceq$ be the final linear order when the algorithm terminates.
  Let $i,j \in A$ be two distinct agents. For $k \in \{i,j\}$, let $X^*_k$ be the bundle picked by $k$ in a given subset $\mathcal{R}$ of rounds in which both $i$ and $j$ are active and where $j$ picks in at least one round in $\mathcal{R}$.
  Then it holds that:
  \begin{enumerate}
    \item  Assume that $i$ picks in the last round\footnote{The ordering of the rounds in  $\mathcal{R}$ that we refer to here is given by the order in which they occur in \Cref{alg:rr_init}.} in $\mathcal{R}$. For the resource $r \in X^*_i$ picked by agent $i$ in the last round in $\mathcal{R}$ and some $r' \in X^*_j$, we have that $u_i(X^*_i \setminus \{r\}) \geq u_i(X^*_j \setminus \{r'\})$.
    \item If $i$ does not pick in the last round in $\mathcal{R}$, we have that $u_i(X^*_i) \geq u_i(X^*_j \setminus \{r'\})$ for some $r' \in X^*_j$.
    \item If $i \preceq j$, then $u_i(X^*_i) \geq u_i(X^*_j)$.
  \end{enumerate}
\end{observation}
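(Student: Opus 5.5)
The plan is to reduce all three statements to the classical round-robin argument restricted to the rounds in $\mathcal{R}$. The key structural fact is that the linear order $\preceq$ is only ever extended: agents already in $\mathcal{L}$ keep their relative order. Newly activated agents are inserted after everyone who has already picked in the current round. So in every round of $\mathcal{R}$, both $i$ and $j$ are active and the relative order of their picks is the one given by the final order $\preceq$.

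One subtlety concerns the round in which one of them is activated. If $j$ is activated mid-round, it is inserted after all agents that already picked, in particular after $i$ if $i$ picked. The clause "$i$ active at the end of the round" then still yields a consistent order within that round. I would state this as a small preliminary claim: in each round of $\mathcal{R}$, whenever both pick, $i$ picks before $j$ if and only if $i \preceq j$. A round may also be cut short because resources run out, and this is the reason for the special treatment of the last round.

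\textbf{Item 3.} Suppose $i \preceq j$. Pair each pick of $j$ in a round of $\mathcal{R}$ with $i$'s pick in the same round. In that round $i$ picks earlier, so when $i$ picked, $j$'s later resource was still available. By greedy choice, $u_i$ of $i$'s pick is at least $u_i$ of $j$'s pick.

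We must check that $i$ indeed picks in every round in which $j$ picks. Since $i$ is active and precedes $j$, $i$ picks before $j$ in that round. The case where $i$ is activated in the round itself needs care: $i$ then picks right after activation, still before $j$ if $i \preceq j$. Summing over the pairs gives $u_i(X^*_i) \ge u_i(X^*_j)$.

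\textbf{Items 1 and 2.} Now $j \preceq i$ is the relevant case; if $i \preceq j$, both items follow from item 3 by discarding an arbitrary or the last resource. Use the shifted pairing: pair $i$'s pick in round $k$ of $\mathcal{R}$ with $j$'s pick in the next round $k+1$ of $\mathcal{R}$. Because $i$ picks before $j$'s next-round pick, greedy choice gives the termwise inequality. Let $r'$ be $j$'s pick in the first round of $\mathcal{R}$ in which $j$ picks; it is left unpaired.

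If $i$ picks in the last round of $\mathcal{R}$, that pick $r$ has no partner, which yields item 1. If $i$ does not pick in the last round, the termination step shows $j$'s last pick is still matched by an earlier pick of $i$, which yields item 2.

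\textbf{Main obstacle.} The hardest part is the bookkeeping when $\mathcal{R}$ is not contiguous, or when its rounds include activation rounds or the truncated final round. There, $i$ or $j$ may miss a pick, so the pairing must be shown injective on $j$'s picks. I would handle this by indexing only the rounds of $\mathcal{R}$ in which $j$ picks. Since both agents are active there, the claim above shows that $i$ picks in each such round except possibly the final truncated one. That exceptional round is exactly what distinguishes items 1 and 2.
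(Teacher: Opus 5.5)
Your proposal is correct, and it is essentially the standard round-robin pairing argument that the paper relies on; the paper states this as an observation without proof. The argument pairs picks within the same round when $i \preceq j$ and shifts the pairing by one round when $j \preceq i$. It rests on two facts: the relative order of active agents is never changed, and only the algorithm's final round can be cut short. One small precision: for item~1 with $i \preceq j$, the summed inequality of item~3 is not enough. You need the same-round pairing on all rounds of $\mathcal{R}$ except the last, dropping $j$'s last-round pick if it exists. Your item~3 argument does supply this.
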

Secondly, we formalize an observation from above:
\begin{observation}\label{alg:rr_init:notreached}
  Consider \Cref{alg:rr_init} and let $i,j \in A$ be two agents with $b_i > b_j$ such that $i$ is activated after agent $j$ picked some resource $r\in R$. Let $X^*_j$ be the bundle of $j$ after this pick.
  Then, it holds that $u_j(X^*_j \setminus \{r\}) < b_i - b_j$.
\end{observation}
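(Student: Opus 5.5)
The plan is to argue directly from the activation rule of \Cref{alg:rr_init}, using the fact that the bundle $X^*_j$ after $j$'s pick of $r$ equals $j$'s bundle just before $i$ is activated.

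\textbf{Step 1 (setup).} Let $i$ belong to level $L_\ell$. Since $b_i > b_j$ and $j$ has already picked, $j$ was active before $L_\ell$ was activated. Consider the moment immediately before $j$ picked $r$; at that point $j$'s bundle was $X^*_j \setminus \{r\}$.

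\textbf{Step 2 (the activation check after the previous pick).} The activation check in the inner loop is performed after every pick. In particular, it was performed right after the last pick preceding $j$'s pick of $r$, or at initialization if no pick preceded it. If no such pick exists, then $X^*_j \setminus \{r\} = \emptyset$. In that case $u_j(\emptyset) = 0 < b_i - b_j$, and we are done.

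\textbf{Step 3 (main argument).} Otherwise, suppose for contradiction that $b_j + u_j(X^*_j \setminus \{r\}) \geq b_i$, i.e., $j$ had already reached $L_\ell$ before picking $r$. We must rule out the possibility that $L_\ell$ was not activated at that moment even though $j$ had reached it. The claim is that the activation condition concerns all active agents. Utilities only increase as agents pick, and newly activated agents $k$ satisfy $b_k < b_i$; otherwise they would have been activated after $L_\ell$ or simultaneously, since levels are activated in increasing order of $\ell$ and the minimal reached level is chosen. This monotonicity does not immediately give a contradiction, because other active agents may not yet have reached $L_\ell$. So a purely pairwise argument fails, and I must use the precise statement: $i$ is activated \emph{after} $j$'s pick of $r$, meaning the activation happens immediately following that pick. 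Under this reading, just before $j$'s pick every active agent other than $j$ already had the bundle it holds at activation. The activation did not occur at the previous check (after the preceding pick), but it does occur now, and the only bundle that changed is $j$'s. Hence $j$ is the agent that had not reached $L_\ell$ before picking $r$, which gives $b_j + u_j(X^*_j\setminus\{r\}) < b_i$. One subtlety is whether a lower level $L_{\ell'}$ was activated in between, which would change the set of active agents. Activations happen one level per check, but a newly activated level $L_{\ell'}$ has picked nothing yet, and its agents must also reach $L_\ell$, which fails for them at zero utility unless $b_{\ell'} \geq b_i$. That is impossible, so the situation is consistent.

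\textbf{Step 4 (the main obstacle).} The hard part is pinning down the meaning of ``activated after agent $j$ picked'' so that the pick is the one immediately triggering activation. If the statement is meant more loosely, as any earlier pick, then monotonicity of $j$'s bundle combined with the immediate case still yields the bound: a later trigger pick means $j$ had still not reached $L_\ell$ at that time, and $X^*_j\setminus\{r\}$ is contained in that later bundle. This requires that $j$ itself was among the non-reached agents, which I would justify by noting that $j$'s own picks are the only changes between checks affecting $j$. I expect this careful case analysis on which agent blocks activation to be the crux.
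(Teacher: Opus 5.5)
Your Step 3 is correct and matches the paper's own reasoning; the paper gives no formal proof, only the remark that the picking agent had not yet reached the newly activated level before the triggering pick. Your argument makes that remark rigorous in the right way.

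\begin{itemize}
\item The active set is the same at the check preceding $j$'s pick (after any addition made there) and at the check after it.
\item At the former check the condition for $L_\ell$ fails. Either no level was added there, or a lower level $L_{\ell'}$ was added, whose agents have empty bundles and so have not reached $L_\ell$.
\item Only $j$'s bundle changes in between, so $j$ must be a non-reaching agent, i.e.\ $b_j+u_j(X^*_j\setminus\{r\})<b_i$.
\end{itemize}

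Two small points. The algorithm performs no check ``at initialization'', but the first pick is covered anyway because the bundle before it is empty. Also, your ``suppose for contradiction'' framing is never used; the argument is direct.

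Step 4, however, is wrong and should be dropped. Read loosely, so that $r$ may be any pick of $j$ preceding $i$'s activation, the statement is false. Take identical resources with $m\geq 21$, agents $j,k$ with $b_j=b_k=0$, $u_j(\{r\})=10$ and $u_k(\{r\})=1$ for all $r$, and agent $i$ with $b_i=10$ and, say, $u_i(\{r\})=1$. Here $j$ reaches $i$ after a single pick, but $i$ is activated only after $k$'s tenth pick. For $j$'s second pick $r$ we therefore get
\[
u_j(X^*_j\setminus\{r\})=10=b_i-b_j,
\]
so the strict bound fails, and it fails more badly for $j$'s later picks.

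The flaw in your sketch is the claim that $j$ must be among the non-reached agents at the later trigger pick. When that pick is made by another agent $j^*$, only $j^*$ is forced to have been a blocker; $j$ may have reached $L_\ell$ long before. Your own Step 3 reasoning pins down the blocker only as the agent whose bundle changed.

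This does not affect the paper. It only applies the observation to the pick that immediately triggers the activation, namely the agent $j^*$ in the Activation Gap Lemma.
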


\subsubsection{The Activation Gap Lemma}\label{subsub:2}
    The following key lemma gives a bound on the value of the bundle assigned to an active agent $j$ at the point when some agent $i$ with a higher initial utility is activated. More concretely, we show that the sum of the minimum utility of an agent with lower initial utility than $i$ for each resource in $j$'s bundle is at most the initial utility gap between $i$ and $j$ after removing some resource $r$.  This is crucial in our proof of \Cref{main-theorem}, since it implies that $j$'s current bundle without $r$ satisfies the restriction on $X^*$ in min-EF1-init (C2), allowing us to disregard the resources picked by $j$ in rounds when $i$ was not active when checking whether $i$ envies $j$.

    \begin{restatable}[Activation Gap Lemma]{lemma}{ActiveMinLemma}\label{alg:rr_init:activeMin}
      Consider \Cref{alg:rr_init} and let $i,j\in A$ be two agents that were active in some round with $b_i \geq b_j$ and $b_i > \min_{i'\in A} b_{i'}$.
      For $k \in A$, let $X^*_k$ be the bundle of $k$ after the pick that activated $i$. Then, either $X^*_j=\emptyset$ or there exists $r \in X^*_j$ such that
      \[
        \sum_{r' \in X^*_j \setminus \{r\}} \min_{j' \in A \colon b_{j'} < b_i} u_{j'}(\{r'\}) < b_{i} - b_j.
      \]
    \end{restatable}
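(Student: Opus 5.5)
The plan is to fix the pick that activates $i$, made by some agent $k$ with resource $r_k$, and to bound $j$'s bundle against $k$'s bundle. The proof is an induction over activation events, i.e., over the levels in the order in which \Cref{alg:rr_init} activates them.

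\textbf{Preliminary facts.} First, reaching a level implies reaching every lower level, and the algorithm always activates the minimum qualifying level. Hence levels are activated in increasing order of initial utility, and $j$ (with $b_j\le b_i$) is active when $i$ is activated.

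Second, if $b_j=b_i$, then $j$ is activated together with $i$, so $X^*_j=\emptyset$. We may therefore assume $b_j<b_i$.

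Third, just before $k$'s pick the activation condition for $i$'s level failed, and only $k$'s bundle changes with that pick. So $k$ is the unique agent that had not yet reached $i$. By \Cref{alg:rr_init:notreached}, this gives
\[
u_k(X^*_k\setminus\{r_k\})<b_i-b_k .
\]
Since $b_k<b_i$, the summand $\min_{j'\colon b_{j'}<b_i}u_{j'}(\{r'\})$ is at most $u_k(\{r'\})$. The whole task is thus to bound $\sum_{r'\in X^*_j\setminus\{r\}}\min_{j'}u_{j'}(\{r'\})$ by $u_k(X^*_k\setminus\{r_k\})+(b_k-b_j)$, which then yields $<b_i-b_j$.

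\textbf{Splitting $X^*_j$.} I split $X^*_j=Y\cup Z$. Here $Y$ consists of the resources $j$ picked before $k$ was activated, and $Z$ of those picked in rounds in which both $j$ and $k$ are active, up to and including $k$'s activating pick.

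For $Z$, I apply \Cref{alg:rr_init:active} (parts 1 and 2) to the set of rounds in which both are active. This gives $u_k(Z\setminus\{r''\})\le u_k(X^{*,\mathrm{late}}_k\setminus\{r_k\})$ or a variant without the removal, for a suitable $r''\in Z$.

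For $Y$ there are two cases:
\begin{itemize}
\item If $b_k=b_j$, or $k=j$, then $Y=\emptyset$. For $k=j$ the claim is immediate from \Cref{alg:rr_init:notreached} with $r=r_k$, by bounding the min by $u_j$.
\item If $b_k>b_j$, the induction hypothesis applied to the pair $(k,j)$ at $k$'s activation gives some $r_Y\in Y$ with $\sum_{Y\setminus\{r_Y\}}\min_{j'\colon b_{j'}<b_k}u_{j'}<b_k-b_j$. The set $\{j'\colon b_{j'}<b_i\}$ contains $\{j'\colon b_{j'}<b_k\}$, so the min over the larger set is no larger, and the bound transfers to the minimum in our statement.
\end{itemize}

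\textbf{Combining, and the main obstacle.} Adding the two bounds gives a total below $(b_k-b_j)+(b_i-b_k)=b_i-b_j$. However, two resources ($r_Y$ and $r''$) have been discarded, while the lemma allows only one. This is the main obstacle.

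My first attempt is to exploit the placement rule for newly activated agents: $k$ is inserted right after the agents who already picked in the activation round and picks before everyone else thereafter. Then \Cref{alg:rr_init:active} part 3 (with $k\preceq j$ from $k$'s activation onward) gives $u_k(Z)\le u_k(X^{*,\mathrm{late}}_k)$ with no removal on $Z$, except for the picks in $k$'s activation round that $j$ made before $k$. Those early picks in the activation round can be absorbed into $Y$'s part, i.e., treated as picked "before $k$ was active", since the inductive bound is stated after the activating pick.

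The remaining subtlety is to make sure the single removed resource $r_k$ on $k$'s side is not double-counted. The case $X^{*,\mathrm{late}}_k=\{r_k\}$ must also be handled separately, and there the strict inequality from \Cref{alg:rr_init:notreached} carries the argument.
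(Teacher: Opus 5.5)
\textbf{Missing case $b_k<b_j$.} Your proposal never handles the case where the agent $k$ whose pick activated $i$ has strictly smaller initial utility than $j$. This case does occur: take $L_1=\{k\}$, $L_2=\{j\}$, $L_3=\{i\}$, with $k$ valuing every resource very little, so $k$ is the last active agent to reach $i$. Your case list for $Y$ covers only $k=j$, $b_k=b_j$ and $b_k>b_j$.

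When $b_k<b_j$, we have $Y=\emptyset$, and your target bound $u_k(X^*_k\setminus\{r_k\})+(b_k-b_j)$ lies strictly below $u_k(X^*_k\setminus\{r_k\})$. So comparing $X^*_j$ with $k$'s picks and invoking \Cref{alg:rr_init:notreached} only yields something $<b_i-b_k$, which is too weak.

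The missing ingredient is the activation condition of $j$. Let $\tilde X_k$ be $k$'s bundle after the pick that activated $j$. Then $u_k(\tilde X_k)\ge b_j-b_k$, hence $u_k(X^*_k\setminus(\tilde X_k\cup\{r_k\}))<b_i-b_j$. Consequently, $j$'s bundle must be compared only with $k$'s picks made after $j$'s activation.

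Applying \Cref{alg:rr_init:active} to all rounds in which both are active does not achieve this. The reason is that $k$'s pick in $j$'s activation round may precede (or even be) the pick that activated $j$, and then it already belongs to $\tilde X_k$. The paper handles this by splitting on the order:
\begin{itemize}
\item If $j\preceq k$, then $k$ has not yet picked in $j$'s activation round when $j$ is activated. Part~(1) over the rounds from $j$'s activation round through $i$'s activation round then suffices.
\item If $k\preceq j$, then $j$ has not yet picked in $i$'s activation round when $i$ is activated. Part~(3) over the rounds strictly between the two activation rounds then suffices, with $j$'s first pick as the removed resource.
\end{itemize}

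\textbf{Justification in the case $b_k>b_j$.} Your treatment of this case is essentially the paper's Case~3: the induction hypothesis for $(k,j)$ bounds $j$'s bundle at $k$'s activation, and the remaining picks $Z$ are compared without removal. But the justification you give is wrong. The relation $k\preceq j$ fails whenever $j$ picked in $k$'s activation round before $k$ was activated, because the insertion rule then places $k$ after $j$. So part~(3) does not apply.

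Instead, put into $Y$ all of $j$'s picks up to $k$'s activation (as you suggest), and use the picking sequence from $k$'s first pick up to, but excluding, $r_k$. In this sequence $k$ picks first, and $j$ picks at most once between consecutive picks of $k$. All of $Z$ lies inside this window, since $j$ cannot pick between $k$'s activation and $k$'s first pick, and does not pick $r_k$. This gives $u_k(Z)\le u_k(X^*_k\setminus\{r_k\})$ directly. That settles both your double-counting concern and the special case $X^*_k=\{r_k\}$, where simply $Z=\emptyset$.
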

   As before, we define a picking sequence starting with/after some pick as the contiguous subsequence of all picks until the end of the algorithm after this pick. In the following, we will also consider picking sequences starting with a given pick up to another pick, by which we mean the contiguous subsequence of all picks in the algorithm between these two given picks. We first observe the following property.
    \begin{observation}\label{alg:rr_init:upToOneThenBefore}
      Consider \Cref{alg:rr_init} and let $\mathcal{X}$ be the allocation returned by the algorithm. Let $i, j \in A$ be two distinct agents, and for each $k \in \{i, j\}$, let $X^*_k$ denote the bundle of $k$ after a given pick in the algorithm.
      If either $X^*_j = \emptyset$ or there exists a resource $r \in X^*_j$ such that $b_i + u_i(X^*_i)\geq  b_j + u_i(X^*_j \setminus \{r\})$, and after this pick agent $i$ picks before $j$'s next pick, then it holds that $b_i + u_i(X_i)\geq  b_j + u_i(X_j \setminus \{r\})$ in the final allocation $\mathcal{X}$.
    \end{observation}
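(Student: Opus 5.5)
The plan is a pairing argument over the picks made after the given pick. Throughout, I take $X^*_i, X^*_j$ as in the statement and write $Y_i \coloneqq X_i \setminus X^*_i$ and $Y_j \coloneqq X_j \setminus X^*_j$ for the resources picked by $i$ and $j$ after the given pick. If $Y_j = \emptyset$, the claim is immediate. Indeed, $X_j = X^*_j$, and since bundles only grow and utilities are nonnegative, $b_i + u_i(X_i) \geq b_i + u_i(X^*_i) \geq b_j + u_i(X^*_j\setminus\{r\}) = b_j + u_i(X_j \setminus \{r\})$. So assume $j$ picks again, and let $c_1, c_2, \dots, c_q$ be $j$'s subsequent picks in chronological order.

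The structural step, which I expect to be the main obstacle, is to show that $i$ makes at least $q$ subsequent picks $a_1, \dots, a_q$ with $a_k$ picked before $c_k$ for every $k \in [q]$. By assumption $i$ picks before $c_1$, so $i$ is active from that pick onward. Since $j$ picks later, $j$ is active by then as well.
\begin{itemize}
\item Active agents stay active, and each active agent picks exactly once per round.
\item When new agents are activated, \Cref{alg:rr_init} extends $\preceq$ while maintaining the relative order of previously active agents. Hence the relative order of $i$ and $j$ never changes once both are active.
\item I will argue that from $i$'s first subsequent pick on, the picks of $i$ and $j$ strictly alternate, beginning with $i$. Take the round containing $a_1$. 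If $j$ has not yet picked in that round, its pick in this round is $c_1$, and in every later round $i$ picks before $j$ under the fixed relative order. If $j$ already picked in that round, then $j$'s pick was before $a_1$. But $c_1$ is $j$'s first pick after the given pick and comes after $a_1$, so $j$'s pick in this round came no later than the given pick. Then $c_1$ lies in a later round, in which $i$ precedes $j$, and the same order persists afterwards.
\end{itemize}
I will need some care with the round in which $j$ (or $i$) itself is activated, and with activations that happen mid-round. The relevant facts are that an agent activated mid-round is placed after the agents that already picked in this round, and that the algorithm may terminate mid-round, which can only remove trailing picks of $j$.

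Given the alternation, I conclude greedily. When $i$ picks $a_k$, the resource $c_k$ is still unallocated, because it is picked later. Since $i$ picks a resource of maximum utility, $u_i(\{a_k\}) \geq u_i(\{c_k\})$. Summing over $k \in [q]$ gives $u_i(Y_i) \geq u_i(Y_j)$. If $X^*_j \neq \emptyset$, adding this to the hypothesis $b_i + u_i(X^*_i) \geq b_j + u_i(X^*_j \setminus\{r\})$ yields $b_i + u_i(X_i) \geq b_j + u_i(X_j\setminus\{r\})$ with the same $r$. If $X^*_j = \emptyset$, the hypothesis holds trivially because $b_i \ge b_j$ is not given. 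In that case I would choose $r \coloneqq c_1$ and pair $a_k$ with $c_{k+1}$ instead; the alternation still guarantees that $c_{k+1}$ is unallocated when $a_k$ is picked. This gives $u_i(Y_i) \geq u_i(Y_j \setminus \{c_1\})$, which is the analogous conclusion. Here I would note that the statement's $r$ is then understood as such a resource, and I would additionally need $b_i \geq b_j$ from the surrounding context.
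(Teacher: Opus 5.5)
Your proposal is correct and is essentially the argument the paper relies on: the paper states this observation without proof, and where it uses it (Case~2 of the proof of \Cref{main-theorem}) the justification is exactly that $i$ picks first in the remaining picking sequence, so $u_i(X_i\setminus X^*_i)\ge u_i(X_j\setminus X^*_j)$, which is then added to the hypothesis. One slip to fix in your second case: $j$ picked before $i$ in the round of $a_1$, so the fixed order is $j\preceq i$, not $i\preceq j$, and later rounds start with $j$. The sequence $a_1,c_1,a_2,c_2,\dots$ still results because each $c_k$ lies exactly one round after $a_k$; also note the paper only ever uses the case $X^*_j\neq\emptyset$, where $j$ is indeed already active at $a_1$.
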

    Now, we are ready to prove the crucial Activation Gap Lemma.
    \begin{proof}
      For agents $a,b \in A$ with $a\in L_h$ and $b\in L_{h'}$ for $1\leq h'\leq h \leq t$, we define $\Delta_{a,b} \coloneqq h -h'$.
      We prove the claim by induction over $\Delta_{i,j}$. For $\Delta_{i,j}=0$, the claim holds trivially, as both agents are activated in the same round and thus $|X^*_j|\leq 1$.

      \paragraph{Induction Step.} Assume that the claim holds for all $i,j \in A$ with $b_i\geq b_j$ and $b_i > \min_{i'\in A} b_{i'}$,  and with $\Delta_{i,j} \leq \Delta^*$ for some $\Delta^* \in  [t-1]$. We show that the claim then also holds for $i,j \in A$ with $b_i\geq b_j$ and $b_i > \min_{i'\in A} b_{i'}$,  and with $\Delta_{i,j} = \Delta^*+1$.
      If $X^*_j=\emptyset$, then the claim holds trivially, so we assume in the following that $X^*_j\neq\emptyset$.
      Note that we have that $b_{i} > b_j$ since $\Delta_{i,j} \geq 1$. Let $j^* \in A$ be the agent after whose pick agent $i$ was activated and let $r\in X^*_{j^*}$ be the resource picked by $j^*$ in this pick.
      Let $\preceq$ be the final linear order when the algorithm terminates.
      If $j = j^*$, the claim holds, since $b_{i} - b_j > u_{j}(X^*_j \setminus \{r\})$  for the resource $r \in X^*_j$ picked by $j$ in the round when $i$ was activated by \Cref{alg:rr_init:notreached}
      and $u_{j}(X^*_j \setminus \{r\}) \geq  \sum_{r' \in X^*_j \setminus \{r\}} \min_{j' \in A \colon b_{j'} < b_i} u_{j'}(\{r'\})$ since $b_j<b_i$.

      Thus, assume that $j^* \neq j$. Note that we have that $j^* \preceq i$. We make a case distinction based on the initial utility values $b_j$ and $b_{j^*}$.

      \paragraph{Case 1 ($b_j=b_{j^*}$).}
      First, consider that $b_j=b_{j^*}$. Note that this implies that $j$ and $j^*$ are active in the same rounds.
      Let $\mathcal{R}$ be the set of rounds in which $j$ and $j^*$ are active up to (and including) the round in which $i$ is activated. In these rounds, agent $j^*$ picks the resources in $X^*_{j^*}$ and agent $j$ picks a superset of the resources $X^*_{j}$ (since agent $j$ may pick after $i$ is added in the last round in $\mathcal{R}$).
      \Cref{alg:rr_init:active}~(1) implies that $u_{j^*}(X^*_{j^*} \setminus \{r\})\geq u_{j^*}(X^*_{j} \setminus \{r'\})$ for $r \in X^*_{j^*}$ picked last by $j^*$ and some $r' \in X^*_{j}$.
      Moreover, it holds that $u_{j^*}(X^*_{j^*} \setminus \{r\}) < b_{i} - b_j$ by \Cref{alg:rr_init:notreached}. Thus, the claim holds in this case since $b_i -b_j > u_{j^*}(X^*_{j} \setminus \{r'\}) \geq  \sum_{r'' \in X^*_j \setminus \{r'\}} \min_{j' \in A \colon b_{j'} < b_i} u_{j'}(\{r''\})$.

      \paragraph{Case 2 ($b_j > b_ {j^*}$).}
      Next, consider that $b_j > b_ {j^*}$ and let $\tilde{X^*_{j^*}}$ be the bundle of $j^*$ after the pick that activated $j$. Since $j$ was activated, it needs to hold that
      \begin{align}\label{app:claimUpToOne:jjp}
        u_{j^*}(\tilde{X^*_{j^*}}) \geq b_j - b_{j^*}.
      \end{align}
      Recall that agent $j^*$ picked the resource $r \in X^*_{j^*}$ in the round in which $i$ is activated. Thus, the bundle of agent $j^*$ in the round prior is $X^*_{j^*} \setminus \{r\}$.
      By \Cref{alg:rr_init:notreached}, we get that $u_{j^*}(X^*_{j^*} \setminus \{r\})< b_i-b_{j^*}$.
      Note that $r \notin \tilde{X^*_{j^*}}$, as $\tilde{X^*_{j^*}}$ is the bundle of $j^*$ when $j$ was activated and $r$ is picked by $j^*$ when $i$ is activated, and it holds that $b_i>b_j$.
      Using \Cref{app:claimUpToOne:jjp}, this implies that
      \begin{align}\label{app:claimUpToOne:ij}                                                                          u_{j^*}(X^*_{j^*} \setminus (\tilde{X^*_{j^*}} \cup \{r\}))< b_i-b_{j},
      \end{align}
      Note that it suffices to show that
      \begin{align}\label{app:eq:minInd:goal}
        b_i - b_{j} > u_{j^*}(X^*_j \setminus \{r'\})
      \end{align}
      for some $r' \in X^*_j$, since  $$u_{j^*}(X^*_j \setminus \{r'\}) \geq \sum_{r'' \in X^*_j \setminus \{r'\}} \min_{j' \in A \colon b_{j'} < b_i} u_{j'}(\{r''\}),$$ as $b_{j*}<b_i$.
      To show that there exists a resource such that \Cref{app:eq:minInd:goal} holds, we now make a case distinction between $j \preceq j^*$ and $j^* \preceq j$.
      Note that $j^* \preceq j$ implies that $j^* \preceq i \preceq j$, since $b_j < b_i$ and there can only be agents with initial utility at least $b_i$ between $j^*$ and $i$ in $\preceq$: Since $i$ was activated directly after $j^*$'s pick, only agents that became active after or by the same pick as $i$ can be between $j^*$ and $i$.

      If $j^* \preceq i \preceq j$, consider the set $\mathcal{R}$ of all rounds after the round in which $j$ is activated up to (excluding) the round in which $i$ is activated. The resources picked by agent $j^*$ in the rounds in $\mathcal{R}$ is the set $X^*_{j^*} \setminus (\tilde{X^*_{j^*}} \cup \{r\})$.
      Note that agent $j$ has not yet picked in the round when $i$ is activated, since $j^* \preceq i \preceq j$. Thus, the resources picked by $j$ in the rounds in $\mathcal{R}$ is exactly the set $X^*_j \setminus \{r'\}$ for the resource $r' \in X^*_j$ picked first by $j$.
      Since $j^* \preceq j$, we can apply \Cref{alg:rr_init:active}~(3) to show that \Cref{app:eq:minInd:goal} holds for resource $r'$, as \[
        b_i-b_{j}>_{\eqref{app:claimUpToOne:ij}} u_{j^*}(X^*_{j^*} \setminus (\tilde{X^*_{j^*}} \cup \{r\}))\geq_{\eqref{alg:rr_init:active}} u_{j^*}(X^*_j \setminus \{r'\}).
      \]

      If $j \preceq j^*$, we consider the set $\mathcal{R}$ of all rounds in which $j$ is active up to (now including) the round in which $i$ is activated. Note that as $j \preceq j^*$, agent $j^*$ has not yet picked in the round when $j$ is activated, so $\tilde{X^*_{j^*}}$ only contains resources picked by $j^*$ in rounds in which $j$ is not active. Thus, the resources picked by agent $j^*$ in the rounds in $\mathcal{R}$ is the set $X^*_{j^*} \setminus \tilde{X^*_{j^*}}$ and agent $j$ picks the resources in $X^*_j$.
      Using \Cref{alg:rr_init:active}~(1), it follows there exists a resource $r' \in X^*_j$ such that \Cref{app:eq:minInd:goal} holds, since
      \[
        b_i-b_{j}>_{\eqref{app:claimUpToOne:ij}} u_{j^*}(X^*_{j^*} \setminus (\tilde{X^*_{j^*}} \cup \{r\}))\geq_{\eqref{alg:rr_init:active}} u_{j^*}(X^*_j \setminus \{r'\}),
      \]
      for resource $r' \in X^*_j$ from \Cref{alg:rr_init:active}~(1).
      Thus, the claim holds if $b_j > b_{j^*}$.

      \paragraph{Case 3 ($b_j < b_{j^*}$).}
      It remains to consider that $b_j < b_{j^*}$. Note that this implies that $b_{j^*}$ is strictly greater than the minimum initial utility of all agents. Since $b_{j^*} < b_{i}$, we have that $\Delta_{j^*, j}\leq \Delta^*$, so we can apply the induction hypothesis for $j^*$ and $j$. Let $\tilde{X^*_{j}}$ be the bundle of $j$ after the pick that activated $j^*$.
      Note that it cannot be the case that $\tilde{X^*_{j}}= \emptyset$, since  $b_j < b_{j^*}$.
      Thus, there exists some resource $r' \in \tilde{X^*_{j}}$ such that
      \begin{align}\label{app:claimUpToOne:ih}
        b_{j^*} - b_j > \sum_{r'' \in \tilde{X^*_{j}} \setminus \{r'\}} \min_{j' \in A \colon b_{j'} < b_{j^*}} u_{j'}(\{r''\}).
      \end{align}

      Moreover, by \Cref{alg:rr_init:notreached}, we have that $u_{j^*}(X^*_{j^*} \setminus \{r\})< b_i - b_{j^*}$ for the resource $r \in X^*_{j^*}$ picked last by $j^*$.
      Consider the picking sequence starting with agent $j^*$'s first pick up to (excluding) the pick by $j^*$ that activates $i$. Agent $j$ picks the resources in $X^*_{j} \setminus \tilde{X^*_{j}}$, and agent $j^*$ the resources in $X^*_{j^*} \setminus \{r\}$.
      Since agent $j^*$ picks first in the picking sequence, it follows that $u_{j^*}(X^*_{j^*} \setminus \{r\}) \geq u_{j^*}(X^*_{j} \setminus \tilde{X^*_{j}})$, so we have that
      \begin{align}\label{app:claimUpToOne:ijph}
        b_{i}-b_{j^*} >_{\eqref{alg:rr_init:notreached}} u_{j^*}(X^*_{j^*} \setminus \{r\}) \geq u_{j^*}(X^*_{j} \setminus \tilde{X^*_{j}}).
      \end{align}
      This implies the claim, since
      \begin{align*}
        &b_i - b_j = (b_i - b_{j^*}) + (b_{j^*} - b_j) \\
        >_{\eqref{app:claimUpToOne:ijph},\eqref{app:claimUpToOne:ih}} &u_{j^*}(X^*_{j} \setminus \tilde{X^*_{j}}) + \sum_{r'' \in \tilde{X^*_{j}} \setminus \{r'\}} \min_{j' \in A \colon b_{j'} < b_{j^*}} u_{j'}(\{r''\}) \\
        \geq &\sum_{r'' \in X^*_{j} \setminus \{r'\}} \min_{j' \in A \colon b_{j'} < b_{i}} u_{j'}(\{r''\}).
      \end{align*}
    \end{proof}

\subsubsection{Putting the Pieces Together}\label{subsub:3}
    We are now ready to  prove that \Cref{alg:rr_init} computes a complete min-EF1-init allocation. Intuitively, an agent $i$ does not envy another agent $j$ with higher initial utility, since $i$ is allowed to pick resources until $i$ has reached $j$ before $j$ gets activated and makes their first pick. Moreover, $i$ does not envy an agent with lower initial utility, since the bound from \Cref{alg:rr_init:activeMin} allows us to disregard the resource picked by this agent before $i$ was activated in min-EF1-init, and agent $i$ picks before the other agent in the picking sequence after this pick.
   \begin{proof}[Proof of \Cref{main-theorem}]
      Let $\mathcal{X}$ be the allocation computed by the algorithm and let $\preceq$ be the final linear order. We show that for any two agents $i,j \in A$, agent $i$ does not envy agent $j$ under min-EF1-init in $\mathcal{X}$. Thus, we need to show that $X_j = \emptyset$ or condition (C1) holds if $b_i\leq b_j$ or (C2) holds if $b_i>b_j$. Therefore, in the following, assume that $X_j\neq \emptyset$.

      \paragraph{Case 1 ($b_i=b_j$).} In this case, we need to argue that there is a resource $r\in X_j$ such that condition (C1) holds. Note that $i$ and $j$ are active in the same rounds. Thus, condition (C1) holds by applying \Cref{alg:rr_init:active} on the set of all rounds: At least one of the cases is applicable, and every case implies that $u_i(X_i) \geq u_i(X_j \setminus \{r\})$ for some $r\in X_j$.

      \paragraph{Case 2 ($b_i < b_j$).} For $k \in A$, let $X^*_k$ be the bundle of $k$ after $j$'s first pick and let $r\in R$ be the first resource picked by $j$, i.e.,  $X^*_j= \{r\}$. Since $i$ had reached $j$ when agent $j$ was activated, it holds that
      \[
        b_i + u_i(X^*_i) \geq b_j + \underbrace{u_i(X^*_j \setminus \{r\})}_{=0}.
      \]
      Moreover, since $i$ picks before $j$ in the picking sequence starting with the pick after $j$'s first pick, agent $i$ prefers the resources they pick in this picking sequence (i.e., $X_i\setminus X^*_i$) over those picked by $j$ (i.e., $X_j\setminus X^*_j$). Thus, $u_i(X_i\setminus X^*_i)\geq u_i(X_j\setminus X^*_j)$. With \Cref{alg:rr_init:upToOneThenBefore}, condition (C1) follows for resource~$r$.

      \paragraph{Case 3 ($b_i > b_j$).} In this case, we need to argue that condition (C2) holds for some resource $r^* \in X_j$, that is, that there exists a subset $X^* \subseteq X_j$ with
      \begin{align}\label{eq:cond:Xs}
        \sum_{r \in X^*} \min_{j' \in A \colon b_{j'} < b_i} u_{j'}(\{r\}) < b_i-b_j,
      \end{align}

      such that it holds that  $u_i(X_i) \geq u_i(X_j \setminus (X^* \cup \{r^*\}))$.

      We make a case distinction whether agent $i$ was activated at some point during the algorithm.
      First, suppose  that agent $i$ was activated.
      Let $X^*_k$ be the bundle of agent $k \in A$ after the pick that activated $i$. Note that $b_i > b_j$ implies that $b_i > \min_{i'\in A} b_{i'}$, so by \Cref{alg:rr_init:activeMin}, we have that either $X^*_j=\emptyset$ or it holds that
      \[
        \sum_{r' \in X^*_j \setminus \{r\}} \min_{j' \in A \colon b_{j'} < b_i} u_{j'}(\{r'\}) < b_{i} - b_j
      \]
      for some resource $r \in X^*_j$. Note that it needs to hold that $ X^*_j \neq \emptyset$ since $b_i > b_j$. Thus, let $X^*= X^*_j\setminus \{r\}$ and $r^* = r$. Note that $X^*$ satisfies \Cref{eq:cond:Xs}.
      Next, consider the picking sequence of all picks starting with (including) agent $i$'s first pick. In this sequence, agent $i$ picks the resources in $X_i$ and agent $j$ picks the resource in $X_j \setminus X^*_j$. Since agent $i$ picks first, we have that $u_i(X_i) \geq u_i(X_j \setminus X^*_j) = u_i(X_j \setminus (X^* \cup \{r^*\}))$, so $X^*$ and $r^*$ satisfy (C2).

      Finally, suppose that agent  $i$ was not activated.
      In this case, we will argue that it is always possible to choose $X^*$ and resource $r^*$ in (C2) such that $X^*\cup \{r^*\}=X_j$, which implies that $u_i(X_i) \geq 0=u_i(X_j \setminus (X^* \cup \{r^*\}))$ holds trivially.
      If agent $i$ was not activated, there needs to be an agent $j^* \in A$ with $b_{j^*} < b_i$ such that
      \begin{align}\label{eq:jstar}
        u_{j^*}(X_{j*}) <  b_i - b_{j^*}.
      \end{align}
      If $j=j^*$, this implies that (C2) holds for $X^* = X_{j*}$ and any arbitrary resource $r^* \in X_{j*}$, since
      \[
        b_{i} - b_j  >_{\eqref{eq:jstar}} u_{j^*}(X_{j*}) \geq \sum_{r' \in X_{j^*}} \min_{j' \in A \colon b_{j'} < b_i} u_{j'}(\{r'\}).
      \]
      Thus, it remains to consider the case that $j\neq j^*$. First, we assume that $b_j = b_{j^*}$. Note that in this case, agents $j$ and $j^*$ are active in the same set of rounds $\mathcal{R}$. Thus, using \Cref{alg:rr_init:active}, we get that
      \[
        u_{j^*}(X_{j} \setminus \{r\}) \leq_{\eqref{alg:rr_init:active}} u_{j^*}(X_{j^*}) <_{\eqref{eq:jstar}} b_i -  b_{j^*}=  b_i - b_{j}
      \]
      for some resource $r\in X_j$. Thus, condition (C2) holds for $X^* = X_{j} \setminus \{r\}$ and resource $r^* = r$, since $X^* \cup \{r\} = X_{j}$ and $X^*$ satisfies \Cref{eq:cond:Xs}.
      We consider the remaining case that $b_j \neq b_{j^*}$.

      First, assume that $b_j >  b_{j^*}$. Let $X^*_{j^*}$ be the bundle of agent $j^*$ when agent $j$ was activated. It needs to hold that
      \begin{align}\label{eq:js_reached_j}
        u_{j^*}(X^*_{j^*}) \geq b_j - b_{j^*}.
      \end{align}
      From \Cref{eq:js_reached_j,eq:jstar}, it follows that
      \begin{align}\label{eq:js_after}
        u_{j^*}(X_{j^*} \setminus X^*_{j^*}) < b_i - b_{j}.
      \end{align}
      Let $r \in X_j$ be the first resource picked by agent $j$.
      Consider the picking sequence of all picks starting with the pick after agent $j$'s first pick. In this picking sequence, agent $j$ picks the resources in $X_j \setminus \{r\}$ and agent $j^*$ picks the resources in $X_{j^*} \setminus X^*_{j^*}$.
      Since agent $j^*$ picks before $j$ in this picking sequence, we have that
      \begin{align}\label{eq:jpicks_first}
        u_{j^*}(X_{j^*} \setminus X^*_{j^*}) \geq_{\eqref{alg:rr_init:active}} u_{j^*}(X_j \setminus \{r\}).
      \end{align}
      With this, it follows that
      \begin{align*}
                  b_i - b_{j} >_{\eqref{eq:js_after}}  u_{j^*}(X_{j^*} \setminus X^*_{j^*}) \geq_{\eqref{eq:jpicks_first}} u_{j^*}(X_j \setminus \{r\})\\ \geq \sum_{r' \in X_j \setminus \{r\}} \min_{j' \in A \colon b_{j'} < b_i} u_{j'}(\{r'\}).
      \end{align*}
      This implies that condition (C2) holds for $X^* = X_{j} \setminus \{r\}$ and resource $r^* = r$.

      Finally, assume that  $b_j <  b_{j^*}$. Let $X^*_{j}$ be the bundle of agent $j$ when agent $j^*$ was activated. Note that it cannot be the case that $X^*_{j} = \emptyset$, since $b_j <  b_{j^*}$. Moreover, $b_j <  b_{j^*}$ implies that $b_{j^*} > \min_{i'\in A} b_{i'}$. By \Cref{alg:rr_init:activeMin}, we thus have that there exists a resource $r \in X^*_{j}$ such that
      \begin{align}\label{eq:activeInd}
        b_{j^*} - b_j >  \sum_{r' \in X^*_{j} \setminus \{r\}} \min_{j' \in A \colon b_{j'} < b_{j^*}} u_{j'}(\{r'\}).
      \end{align}
      Consider the picking sequence starting with agent $j^*$'s first pick. In this sequence, agent $j^*$ picks the resources in $X_{j^*}$ and agent $j$ picks the resources in $X_j \setminus X^*_{j}$. Since agent $j^*$ picks first, we get that
      \begin{align}\label{eq:picksJsAfter}
        u_{j^*}(X_j \setminus X^*_{j}) \leq u_{j^*}(X_{j^*}) <_{\eqref{eq:jstar}}  b_i - b_{j^*}.
      \end{align}
      It follows from \Cref{eq:activeInd,eq:picksJsAfter} that
      \begin{align*}
        &b_i - b_j = (b_i - b_{j^*}) + (b_{j^*} - b_j)\\
        >_{\eqref{eq:picksJsAfter},\eqref{eq:activeInd} } \, &u_{j^*}(X_j \setminus X^*_{j}) + \sum_{r' \in X^*_{j} \setminus \{r\}} \min_{j' \in A \colon b_{j'} < b_{j^*}} u_{j'}(\{r'\})\\
        \geq  &\sum_{r' \in X_{j} \setminus \{r\}} \min_{j' \in A \colon b_{j'} < b_{i}} u_{j'}(\{r'\}).
      \end{align*}
      Thus, condition (C2) holds for $X^*=X_j \setminus \{r\}$ and resource $r^* = r$. This completes the proof.
    \end{proof}

    \section{Discussion}
    We initiated the study of fair allocation with initial utilities, focusing on fairness notions that implement the principle of equality of outcome. Our results show that incorporating initial utilities fundamentally alters and complicates the nature of fair allocation problems. Nevertheless, positive algorithmic and axiomatic results can be recovered by tailoring fairness notions to this setting, most notably through our always-satisfiable notion of min-EF1-init, and by focusing on special cases, such as our polynomial-time algorithm for \EFEX{} for identical resources.

    \paragraph{Limitations}
    Our proposed envy notions introduce two key assumptions. First, by including the initial utility of both agents in the comparison to check if a given agent envies another agent, we assume that the initial utilities of all agents are known. Second, differing from standard envy-freeness, this comparison -- and restricting $X^*$ based on the minimum utility in condition (C2) in min-EF1-init -- necessitate interpersonal comparison of utility.

    However, we believe that these assumptions are well-justified in many practically relevant allocation tasks involving initial disparities, where a central decision maker aims to fairly allocate available resources. In such cases, envy-based fairness notions should rather be understood as central solution criteria checked by such a central decision maker rather than as a criterion used by each agent to evaluate the allocation individually. For instance, in the example featured in the introduction, the NGO ARMMAN needs to decide on how to distribute a limited number of available calls among participants to boost program engagement. Here, the NGO, as the central decision maker, has access to (estimates of) the initial engagement of participants from previous data and predictions. Moreover, note that the utilities are \emph{comparable}, since they are quantified by the shared, objective measure of engagement (e.g., time engaged in the program per week) for all agents.
    In such scenarios, our notions offer useful guidance to a central decision maker tasked with achieving fair outcomes.

  \paragraph{Directions for Future Work}
    Our work opens several directions for future research.
    First, it would be interesting to obtain a more fine-grained understanding of the complexity of \EFOEX{}, for instance,
    in the case when resources are identical  (see discussion at the end of \Cref{first-attempt}) or agents value each resource as $0$ or $1$.
    Second, it is worth exploring further relaxations of envy-freeness  for initial utilities.
    Conceptually, min-EF1-init relaxes EF1 in a way that allows more flexibility to assign resources to agents with lower initial utility without creating envy, potentially at the expense of agents with higher initial utility (see \Cref{ex:shortcomings}). It would be worthwhile to study whether always-satisfiable envy notions can also incorporate favorable treatment of agents with higher initial utility or interpolate between both directions. In a similar vein, it would be interesting to investigate the computation of min-EF1-init allocations that are optimal with respect to their utilitarian welfare or the \emph{degree of envy} \citep{DBLP:conf/ijcai/ChevaleyreEEM07}, e.g.,  one could consider an allocation that minimizes the sum of the envy between all pairs of agents according to EF-init or EF1-init. However, computing an optimal solution that is min-EF1-init is NP-hard for both: For the utilitarian welfare, the problem is already NP-hard for EF1 without initial utilities \citep{DBLP:conf/aamas/BarmanG0JN19}; for the degree of envy, finding an optimal solution generalizes the problem of deciding whether an EF(1)-init allocation exists. 
    
    Moreover, studying utility functions beyond additive utilities presents an interesting direction for future work. However, it is easy to show that a complete min-EF1-init allocation may fail to exist for submodular utilities\footnote{Consider an example with two agents $i$ and $j$ with $b_i=0$  and $b_j=10$, and five identical resources. Agent $j$ derives a utility of $10$ from each resource, while agent $i$ has a utility of $10$ only for the first resource allocated to them and a utility of $\epsilon$ for each additional allocated resource for some small $\epsilon>0$. In this instance, according to min-EF1-init, agent $i$ can get at most one resource more than agent $j$. However, this implies that agent $i$ envies $j$ under min-EF1-init in every complete allocation.}, necessitating a further weakening of the solution concept to achieve an always satisfiable notion. Exploring always satisfiable relaxations for submodular utility functions, possibly along the lines of MEF1 \citep{DBLP:journals/teco/CaragiannisKMPS19}, warrants future work.
    
    Furthermore, adapting fairness notions beyond envy-freeness, such as the maximin share (MMS), to equality of outcome deserves attention.\footnote{Note that in Appendix~\ref{app:RW}  we use the work of \citet{DBLP:conf/aaai/HV0V25} to show that  we cannot guarantee the existence of an $\alpha$-approximation of an adapted version of MMS for any $\alpha>0$ in the presence of initial utilities.}
    Lastly, the idea of initial utilities and equality of outcome also deserve attention in other allocation domains, including chore division \citep{DBLP:conf/ijcai/0001CIW19}, online settings \citep{DBLP:journals/jair/KashPS14,DBLP:conf/nips/Benade0022}, and the allocation of divisible resources \citep{brams1996fair}.

    \section*{Acknowledgments}
    We thank an anonymous AAMAS '26 reviewer for the suggestion in Footnote~\ref{fn:suggestion}.

\appendix 

\clearpage

\section*{Appendix}
    Given an instance $I$, we denote the set of all allocations by $$\Pi(I) \coloneqq \left\{ (X_1, \dots, X_n) \in (2^R)^n \,\middle|\, \forall i, j \in A, \, i \neq j \implies X_i \cap X_j = \emptyset  \right\}.$$

    \section{Additional Related Work}\label{app:RW}
    As mentioned in \Cref{sub:rel}, the setting with initial utilities is formally closely related to the completion (or extension) setting considered by \citet{DBLP:conf/aaai/HV0V25} and \citet{DBLP:conf/aaai/DeligkasEGGI25}.
    In their setting, in addition to the resources $R$ and agents $A$ with utility functions, an instance of the completion problem they consider contains a \emph{frozen} (or partial) allocation $\bar{\mathcal{X}}$ of a subset $F\subseteq R$ of the resources (to which they refer as frozen resources). All other resources are called \emph{open}.
    A straightforward way to translate an instance in our setting to an instance in their  setting is the following construction: Given an instance with agents $A$ with utility function $u_i$ and initial utility $b_i$ for $i\in A$ and resources $R$, we construct the following instance in the completion setting, which we call the \emph{derived} instance. The derived instance has the same set of agents $A$. The set of resources is $R'=R\cup \{r^*_{i} \mid i\in A\}$. Agent $i\in A$ has utility function $u'_i : 2^{R'} \to \RRR$ with $u'_i(\{r\})=u_i(\{r\})$ for all $r \in R$ and $u'_i(\{r^*_j\})=b_j$ for all $j \in A$. In the frozen allocation $\bar{\mathcal{X}}$, each agent $i \in A$ is assigned the resource $r^*_i$ and the resources in $R$ are unassigned.

    First, we discuss the work by \citet{DBLP:conf/aaai/DeligkasEGGI25}, who focus on the problem of deciding whether the frozen allocation can be completed to a complete allocation without envy under (standard) EF. As this problem is known to be NP-hard even without any frozen resources, they study the parameterized complexity of this problem with the number of open resources (i.e., $|R\setminus F|$) as parameter. They find that the problem is W[1]-hard for this parameter.
    To obtain a positive result, they consider \emph{agent types} (two agents are of the same type if they have the same utility for each resource), showing that the completion problem is fixed-parameter tractable when parameterized by the number of open resources plus the number of agent types. Moreover, they consider variants of the completion problem where the open resources can be allocated to at most $p$ different agents, and study the parameterized complexity of this problem for parameter $p$.
    While their positive parameterized complexity results could technically be translated to our setting, the parameter regarding the number of open resources becomes vacant, since there is only one frozen resource per agent in the derived instance and all resources from the original instance are open.

    We continue by discussing how our work relates to the work by \citet{DBLP:conf/aaai/HV0V25}. They also consider the completion problem for standard fairness notions, but, in addition to EF, consider EF1 and the maximin share (MMS) \citep{DBLP:conf/nips/CohenEEV24,DBLP:journals/jacm/KurokawaPW18}. Their work too focuses on the computational complexity of determining whether an allocation completing the given frozen allocation exists that satisfies these fairness notions, and they prove that, in contrast to the standard setting, many such problems become computationally intractable. We discuss how their results relate to our setting in the following.
    First, we remark that while the setting with initial utilities can be seen as a special case of the completion setting as described above, an EF1 allocation in the completion setting does not correspond to an EF1-init allocation in the initial utility setting: Since EF1 does not distinguish between open and frozen resources, it is possible to simply ``disregard'' the initial disparities by removing the corresponding frozen resources when comparing two agents bundles to check for envy. This is not possible in our adapted EF1-init notion, which requires that the difference in initial utility is equalized and that the removed resource needs to come from the resources to be distributed (or, that the agent with higher initial utility does not get any resources).
    Therefore, our new envy-based fairness notions for initial utilities are better suited to our specialized setting. Moreover, since our setting is a special case of their completion setting, their negative results do not immediately translate to our setting.

    In addition to EF1, \citet{DBLP:conf/aaai/HV0V25} consider MMS. As a first step towards studying MMS in the initial utility setting, we show that their result that even an arbitrarily bad approximation of MMS cannot be guaranteed extends to our initial utility setting. An intuitive adaptation of MMS to our setting with initial utilities is the following.
    \begin{definition}
      The \emph{max-min-share} of an agent $i\in A$ in instance $I$ with initial utilities is
      \[
        \mu_i \coloneqq \max_{\mathcal{X} \in \Pi(I)} \min_{j \in A} b_j + u_i(X_j).
      \]
      We say that an allocation $\mathcal{X}$ is \emph{max-min-share fair} (MMS-init) if it holds that $b_i + u_i(X_i)\geq \mu_i$ for all $i\in A$.
      An allocation $\mathcal{X}$ is $\alpha$-MMS-init for $0<\alpha\leq 1$, if it holds that $b_i + u_i(X_i)\geq \alpha \cdot \mu_i$ for all $i\in A$.
    \end{definition}
    This definition matches the idea behind the extension of MMS to the completion setting by \citet{DBLP:conf/aaai/HV0V25}, who define the max-min-share as
    \[
      \mu^*_i \coloneqq \max_{\mathcal{X} \in \Pi^*(I)} \min_{j \in A}  u_i(X_j),
    \]
    where $ \Pi^*(I)$ denotes the set of all possible allocations that complete the frozen allocation $\bar{\mathcal{X}}$, that is, $\bar{X}_i\subseteq X_i$ for all $\mathcal{X} \in \Pi^*(I)$ and $i\in A$. If we reduce an allocation instance with initial utilities to the completion setting as described above, then the max-min-share $\mu_i$ in the original instance is equal to the max-min-share $\mu^*_i$ in the derived instance for any $i\in A$: It holds that $u_i(\bar{X}_j)=b_j$ and $\bar{X}_i\subseteq X_i$ for all $\mathcal{X} \in \Pi^*(I)$ and $i,j \in A$. Moreover, let $\mathcal{X}$ be an allocation in the original instance and define allocation $\mathcal{X}'$ with $X'_i = X_i \cup \{r^*_i\}$ for all $i\in A$ in the derived instance. Then, $\mathcal{X}$ is MMS-init if and only if  $\mathcal{X}'$ is MMS, since $b_i + u_i(X_i) = u'_i(X'_i)$ for all $i\in A$.
    Using this fact, it follows from Proposition 1 by \citet{DBLP:conf/aaai/HV0V25} that a Pareto-optimal (PO) and MMS-init allocation always exists for binary additive valuations and initial utilities $b_i\in\{0,1\}$ for all $i\in A$: they show that in their setting such an allocation always exists in this case when the frozen allocation is PO, which is the case in any derived instance.
    \begin{proposition}
      For binary additive valuations and initial utilities $b_i\in\{0,1\}$ for all $i\in A$, an
      MMS-init and PO allocation always exists and
      can be computed in polynomial time.
    \end{proposition}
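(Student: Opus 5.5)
The plan is to reduce to the completion setting of \citet{DBLP:conf/aaai/HV0V25} via the derived instance described above. Given an instance $I$ with binary additive valuations and $b_i\in\{0,1\}$, construct the derived instance $I'$ with resources $R'=R\cup\{r^*_i\mid i\in A\}$, frozen allocation $\bar{X}_i=\{r^*_i\}$, and $u'_i(\{r^*_j\})=b_j$. Because $b_j\in\{0,1\}$ and $u_i(\{r\})\in\{0,1\}$, all utilities in $I'$ are binary additive, so $I'$ lies in the class covered by their Proposition 1. This construction is clearly polynomial.

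The first step is to check that the frozen allocation $\bar{\mathcal{X}}$ is Pareto-optimal, which is the hypothesis of their result. Every agent values every frozen resource $r^*_j$ equally, namely at $b_j$. Hence the frozen resources are valued identically by all agents, and the utilitarian welfare of any allocation of $F$ is the constant $\sum_j b_j$. Suppose another allocation of $F$ Pareto-dominated $\bar{\mathcal{X}}$. It would then have strictly larger welfare, which is impossible. So $\bar{\mathcal{X}}$ is PO.

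By their Proposition 1, there is then a completion $\mathcal{X}'\in\Pi^*(I')$ that is MMS and PO in $I'$, and it is computable in polynomial time. Define $X_i:=X'_i\setminus\{r^*_i\}$; since $\mathcal{X}'$ completes $\bar{\mathcal{X}}$, we have $X'_i=X_i\cup\{r^*_i\}$.

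\textbf{MMS-init.} The argument above gives $\mu_i=\mu^*_i$ and $b_i+u_i(X_i)=u'_i(X'_i)\ge\mu^*_i=\mu_i$.

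\textbf{PO in $I$.} I expect this to be the main point needing care, since PO of $\mathcal{X}'$ is a property among all allocations of $R'$, not only completions. Suppose some $\mathcal{Y}$ Pareto-dominates $\mathcal{X}$ in $I$, with respect to final utilities $b_i+u_i(Y_i)$. Then $\mathcal{Y}'$ with $Y'_i=Y_i\cup\{r^*_i\}$ satisfies $u'_i(Y'_i)=b_i+u_i(Y_i)$. So $\mathcal{Y}'$ dominates $\mathcal{X}'$ in $I'$, contradicting PO of $\mathcal{X}'$.

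The same computation shows that PO with respect to $u_i$ alone is equivalent, because $b_i$ is a constant shift per agent. This gives the proposition.
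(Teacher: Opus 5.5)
Your proposal is correct and follows the paper's route. You pass to the derived completion instance, which is binary because $b_j\in\{0,1\}$, observe that its frozen allocation is PO, invoke Proposition 1 of \citet{DBLP:conf/aaai/HV0V25}, and transfer MMS back via $\mu_i=\mu^*_i$ and $b_i+u_i(X_i)=u'_i(X'_i)$. You also spell out two details the paper leaves implicit: why the frozen allocation is PO (all agents value the frozen resources identically, so no reallocation of them can raise total welfare) and why PO carries back from $I'$ to $I$.
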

    On the negative side, \citet{DBLP:conf/aaai/HV0V25} show that for every $0<\alpha\leq 1$, there exists an instance in their completion setting that does not admit an $\alpha$-MMS allocation. Since their construction introduces an asymmetry in the utility of the agents for the frozen resources, we cannot immediately translate their result to our setting, since all agents agree on the initial utilities. However, using an adapted construction, we can prove the result in our setting, using their key idea that each agent $i \in [n]$ believes that the first $i$ agents should share the resources equally, since the remaining agents are already ``rich'' enough. We note that in both constructions, the number of agents grows exponentially in $\alpha^{-1}$.
    \begin{proposition}
      For every $0<\alpha\leq 1$, there exists an instance that does not admit an $\alpha$-MMS-init allocation, even when all resources are identical.
    \end{proposition}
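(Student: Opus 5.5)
We construct the instance explicitly, following the idea of \citet{DBLP:conf/aaai/HV0V25}: agent $i$ should believe that only the first $i$ agents need to share the resources, because every agent $j>i$ already has initial utility exceeding what agent $i$ could hope to obtain. Since resources are identical, each agent $i$ is described by a per-resource value $v_i$ and an initial utility $b_i$. Fix $\alpha$ and choose $n$ with $\alpha H_n > 2$, where $H_n=\sum_{i=1}^n 1/i$. Then choose $m$ much larger than $n$ and a base $K$ much larger than $n$. We set
\[
v_i = K^{i}, \qquad b_1 = 0, \qquad b_i = K^{i-1} m \quad (i\geq 2).
\]
Values grow geometrically with the level, so the initial utilities of higher agents dominate the full-bundle values of lower agents, while each agent's own $b_i$ is negligible relative to its own per-resource value times $m/i$.

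\emph{Step 1 (lower bound on $\mu_i$).} Take an allocation giving $\lfloor m/i\rfloor$ resources to each of agents $1,\dots,i$ and the rest arbitrarily. For $j\leq i$ we get $b_j + v_i|X_j| \geq K^i\lfloor m/i\rfloor$. For $j>i$ we get $b_j\geq K^{i}m\geq K^i\lfloor m/i\rfloor$. Hence
\[
\mu_i \geq K^i \lfloor m/i\rfloor .
\]
Only this lower bound is needed.

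\emph{Step 2 (each agent needs many resources).} In any $\alpha$-MMS-init allocation, $b_i + K^i|X_i| \geq \alpha K^i\lfloor m/i\rfloor$. Since $b_i/K^i \leq m/K$, this gives
\[
|X_i| \geq \alpha(m/i - 1) - m/K .
\]

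\emph{Step 3 (counting).} Summing over $i$ gives
\[
\sum_i |X_i| \geq \alpha m H_n - \alpha n - nm/K > m ,
\]
using $\alpha H_n>2$, $m\gg n$ and $K\geq 2n$ (so that $nm/K\leq m/2$ and $\alpha n < m/2$). This contradicts $\sum_i|X_i|\leq m$, so no $\alpha$-MMS-init allocation exists, even non-complete ones.

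The only delicate point is the choice of parameters: $b_{i+1}$ must be large enough to make the agents above $i$ irrelevant for $\mu_i$, yet $b_i$ must be small compared to $\alpha\mu_i$. The geometric choice above separates the two scales, with the order of choices $n$, then $K$, then $m$. As in their construction, $n$ grows exponentially in $\alpha^{-1}$.
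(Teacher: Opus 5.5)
Your proof is correct and follows essentially the same route as the paper. Both build a geometrically scaled instance of identical resources in which agent $i$'s share is witnessed by splitting the resources among the first $i$ agents, and both finish with a harmonic-sum counting contradiction. The differences are only in parameters: the paper sets $m=2n$, uses $m$ itself as the base with $b_i=u_i(\{r\})=m^{i-1}$ (so $b_i/v_i=1$), and needs $\frac{\alpha}{3}H_n>1$, whereas you decouple the base $K$ from $m$ and work with the threshold $\alpha H_n>2$.
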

    \begin{proof}
      Let $0<\alpha\leq 1$. For every $\ell\in \NN$, we denote by $H_\ell \coloneqq \sum_{i\in [\ell]} \frac{1}{i}$ the $\ell$-th harmonic number. We choose the number of agents $n$ such that $1<\frac{\alpha}{3}\cdot H_n$. The set $R$ contains $m=2n$ identical resources. Agent $i\in [n]$ has an initial utility and a utility for a resource of $b_i=u_i(\{r\})=m^{(i-1)}$ for all $r\in R$. Observe that for every agent $i\in [n]$, it holds that $\mu_i\geq \floor{\frac{m}{i}} \cdot m^{(i-1)}$, which can be achieved by distributing the resources equally among the first $i$ agents. Note that all remaining agents $j \in A \setminus [i]$ already have at least initial utility $b_j \geq m^i \geq \mu_i$.

      Now, suppose for the sake of contradiction that there exists an $\alpha$-MMS-init allocation $\mathcal{X}$. Then, it needs to hold that $b_i + u_i(X_i)\geq  \alpha \cdot \mu_i \geq \alpha \cdot\floor{\frac{m}{i}} \cdot m^{(i-1)}\geq \frac{\alpha \cdot m}{2i} \cdot m^{(i-1)}$ for all $i\in A$.
      It follows that for all $i\in A$,
      \begin{align*}
        &b_i + u_i(X_i)\geq \frac{\alpha \cdot m}{2i} \cdot m^{(i-1)}\\
        \implies  &u_i(X_i)\geq \frac{\alpha \cdot m}{2i} \cdot m^{(i-1)} - m^{(i-1)}\\
        \implies &|X_i| \geq \frac{\alpha \cdot m}{2i} -1.
      \end{align*}

      Since $\mathcal{X}$ can allocate at most all $m$ resources, it follows that
      \begin{align*}
        & m \geq \sum_{i\in A} (\frac{\alpha \cdot m}{2i} -1) = m(\sum_{i\in A}\frac{\alpha}{2i}) -n\\
        \iff& 1 \geq (\sum_{i\in A}\frac{\alpha}{2i}) - \frac{n}{m} = (\sum_{i\in A}\frac{\alpha}{2i}) - \frac{1}{2}\\
        \iff& \frac{3}{2} \geq \frac{\alpha}{2} \sum_{i\in A}\frac{1}{i} \iff 1 \geq \frac{\alpha}{3} \cdot H_n.
      \end{align*}
      However, this contradicts our choice of $n$, which completes the proof.
    \end{proof}
    

\clearpage


\begin{thebibliography}{49}

\ifx \showCODEN    \undefined \def \showCODEN     #1{\unskip}     \fi
\ifx \showDOI      \undefined \def \showDOI       #1{#1}\fi
\ifx \showISBNx    \undefined \def \showISBNx     #1{\unskip}     \fi
\ifx \showISBNxiii \undefined \def \showISBNxiii  #1{\unskip}     \fi
\ifx \showISSN     \undefined \def \showISSN      #1{\unskip}     \fi
\ifx \showLCCN     \undefined \def \showLCCN      #1{\unskip}     \fi
\ifx \shownote     \undefined \def \shownote      #1{#1}          \fi
\ifx \showarticletitle \undefined \def \showarticletitle #1{#1}   \fi
\ifx \showURL      \undefined \def \showURL       {\relax}        \fi
\providecommand\bibfield[2]{#2}
\providecommand\bibinfo[2]{#2}
\providecommand\natexlab[1]{#1}
\providecommand\showeprint[2][]{arXiv:#2}

\bibitem[\protect\citeauthoryear{Aleksandrov and Walsh}{Aleksandrov and Walsh}{2020}]%
        {DBLP:conf/aaai/AleksandrovW20}
\bibfield{author}{\bibinfo{person}{Martin Aleksandrov} {and} \bibinfo{person}{Toby Walsh}.} \bibinfo{year}{2020}\natexlab{}.
\newblock \showarticletitle{Online fair division: {A} survey}. In \bibinfo{booktitle}{\emph{Proceedings of the 34th {AAAI} Conference on Artificial Intelligence ({AAAI} '20)}}. \bibinfo{publisher}{{AAAI} Press}, \bibinfo{pages}{13557--13562}.
\newblock

\bibitem[\protect\citeauthoryear{Amanatidis, Aziz, Birmpas, Filos{-}Ratsikas, Li, Moulin, Voudouris, and Wu}{Amanatidis et~al\mbox{.}}{2023}]%
        {DBLP:journals/ai/AmanatidisABFLMVW23}
\bibfield{author}{\bibinfo{person}{Georgios Amanatidis}, \bibinfo{person}{Haris Aziz}, \bibinfo{person}{Georgios Birmpas}, \bibinfo{person}{Aris Filos{-}Ratsikas}, \bibinfo{person}{Bo Li}, \bibinfo{person}{Herv{\'{e}} Moulin}, \bibinfo{person}{Alexandros~A. Voudouris}, {and} \bibinfo{person}{Xiaowei Wu}.} \bibinfo{year}{2023}\natexlab{}.
\newblock \showarticletitle{Fair division of indivisible goods: {R}ecent progress and open questions}.
\newblock \bibinfo{journal}{\emph{Artif. Intell.}}  \bibinfo{volume}{322} (\bibinfo{year}{2023}), \bibinfo{pages}{103965}.
\newblock

\bibitem[\protect\citeauthoryear{{ARMMAN}}{{ARMMAN}}{2025}]%
        {armman2025}
\bibfield{author}{\bibinfo{person}{{ARMMAN}}.} \bibinfo{year}{2025}\natexlab{}.
\newblock \bibinfo{title}{Helping mothers and children}.
\newblock \bibinfo{howpublished}{\url{https://armman.org}}.
\newblock
\newblock
\shownote{Accessed: 2025-05-01.}

\bibitem[\protect\citeauthoryear{Aziz, Bir{\'{o}}, Fleiner, Gaspers, de~Haan, Mattei, and Rastegari}{Aziz et~al\mbox{.}}{2022a}]%
        {DBLP:journals/tcs/0001BFGHMR22}
\bibfield{author}{\bibinfo{person}{Haris Aziz}, \bibinfo{person}{P{\'{e}}ter Bir{\'{o}}}, \bibinfo{person}{Tam{\'{a}}s Fleiner}, \bibinfo{person}{Serge Gaspers}, \bibinfo{person}{Ronald de Haan}, \bibinfo{person}{Nicholas Mattei}, {and} \bibinfo{person}{Baharak Rastegari}.} \bibinfo{year}{2022}\natexlab{a}.
\newblock \showarticletitle{Stable matching with uncertain pairwise preferences}.
\newblock \bibinfo{journal}{\emph{Theor. Comput. Sci.}}  \bibinfo{volume}{909} (\bibinfo{year}{2022}), \bibinfo{pages}{1--11}.
\newblock

\bibitem[\protect\citeauthoryear{Aziz, Caragiannis, Igarashi, and Walsh}{Aziz et~al\mbox{.}}{2019}]%
        {DBLP:conf/ijcai/0001CIW19}
\bibfield{author}{\bibinfo{person}{Haris Aziz}, \bibinfo{person}{Ioannis Caragiannis}, \bibinfo{person}{Ayumi Igarashi}, {and} \bibinfo{person}{Toby Walsh}.} \bibinfo{year}{2019}\natexlab{}.
\newblock \showarticletitle{Fair allocation of indivisible goods and chores}. In \bibinfo{booktitle}{\emph{Proceedings of the 28th International Joint Conference on Artificial Intelligence ({IJCAI} '19)}}. \bibinfo{publisher}{ijcai.org}, \bibinfo{pages}{53--59}.
\newblock

\bibitem[\protect\citeauthoryear{Aziz, Li, Moulin, and Wu}{Aziz et~al\mbox{.}}{2022b}]%
        {DBLP:journals/sigecom/AzizLMW22}
\bibfield{author}{\bibinfo{person}{Haris Aziz}, \bibinfo{person}{Bo Li}, \bibinfo{person}{Herv{\'{e}} Moulin}, {and} \bibinfo{person}{Xiaowei Wu}.} \bibinfo{year}{2022}\natexlab{b}.
\newblock \showarticletitle{Algorithmic fair allocation of indivisible items: {A} survey and new questions}.
\newblock \bibinfo{journal}{\emph{SIGecom Exch.}} \bibinfo{volume}{20}, \bibinfo{number}{1} (\bibinfo{year}{2022}), \bibinfo{pages}{24--40}.
\newblock

\bibitem[\protect\citeauthoryear{Barman, Ghalme, Jain, Kulkarni, and Narang}{Barman et~al\mbox{.}}{2019}]%
        {DBLP:conf/aamas/BarmanG0JN19}
\bibfield{author}{\bibinfo{person}{Siddharth Barman}, \bibinfo{person}{Ganesh Ghalme}, \bibinfo{person}{Shweta Jain}, \bibinfo{person}{Pooja Kulkarni}, {and} \bibinfo{person}{Shivika Narang}.} \bibinfo{year}{2019}\natexlab{}.
\newblock \showarticletitle{Fair division of indivisible goods among strategic agents}. In \bibinfo{booktitle}{\emph{Proceedings of the 18th International Conference on Autonomous Agents and MultiAgent Systems ({AAMAS} ’19)}}. \bibinfo{publisher}{IFAAMAS}, \bibinfo{pages}{1811--1813}.
\newblock

\bibitem[\protect\citeauthoryear{Barman, Khan, Shyam, and Sreenivas}{Barman et~al\mbox{.}}{2023}]%
        {DBLP:conf/aaai/Barman0SS23}
\bibfield{author}{\bibinfo{person}{Siddharth Barman}, \bibinfo{person}{Arindam Khan}, \bibinfo{person}{Sudarshan Shyam}, {and} \bibinfo{person}{K.~V.~N. Sreenivas}.} \bibinfo{year}{2023}\natexlab{}.
\newblock \showarticletitle{Finding fair allocations under budget constraints}. In \bibinfo{booktitle}{\emph{Proceedings of the 37th {AAAI} Conference on Artificial Intelligence ({AAAI} ’23)}}. \bibinfo{publisher}{{AAAI} Press}, \bibinfo{pages}{5481--5489}.
\newblock

\bibitem[\protect\citeauthoryear{Barnard and Hepple}{Barnard and Hepple}{2000}]%
        {barnard2000substantive}
\bibfield{author}{\bibinfo{person}{Catherine Barnard} {and} \bibinfo{person}{Bob Hepple}.} \bibinfo{year}{2000}\natexlab{}.
\newblock \showarticletitle{Substantive equality}.
\newblock \bibinfo{journal}{\emph{The Cambridge Law Journal}} \bibinfo{volume}{59}, \bibinfo{number}{3} (\bibinfo{year}{2000}), \bibinfo{pages}{562--585}.
\newblock

\bibitem[\protect\citeauthoryear{Bastani, Drakopoulos, Gupta, Vlachogiannis, Hadjichristodoulou, Lagiou, Magiorkinis, Paraskevis, and Tsiodras}{Bastani et~al\mbox{.}}{2021}]%
        {bastani2021efficient}
\bibfield{author}{\bibinfo{person}{Hamsa Bastani}, \bibinfo{person}{Kimon Drakopoulos}, \bibinfo{person}{Vishal Gupta}, \bibinfo{person}{Ioannis Vlachogiannis}, \bibinfo{person}{Christos Hadjichristodoulou}, \bibinfo{person}{Pagona Lagiou}, \bibinfo{person}{Gkikas Magiorkinis}, \bibinfo{person}{Dimitrios Paraskevis}, {and} \bibinfo{person}{Sotirios Tsiodras}.} \bibinfo{year}{2021}\natexlab{}.
\newblock \showarticletitle{Efficient and targeted COVID-19 border testing via reinforcement learning}.
\newblock \bibinfo{journal}{\emph{Nature}} \bibinfo{volume}{599}, \bibinfo{number}{7883} (\bibinfo{year}{2021}), \bibinfo{pages}{108--113}.
\newblock

\bibitem[\protect\citeauthoryear{Benad{\`{e}}, Halpern, and Psomas}{Benad{\`{e}} et~al\mbox{.}}{2022}]%
        {DBLP:conf/nips/Benade0022}
\bibfield{author}{\bibinfo{person}{Gerdus Benad{\`{e}}}, \bibinfo{person}{Daniel Halpern}, {and} \bibinfo{person}{Alexandros Psomas}.} \bibinfo{year}{2022}\natexlab{}.
\newblock \showarticletitle{Dynamic fair division with partial information}. In \bibinfo{booktitle}{\emph{Proceedings of the 36th Annual Conference on Neural Information Processing Systems (NeurIPS '22)}}.
\newblock

\bibitem[\protect\citeauthoryear{Bentert, Bredereck, Deltl, Jain, and Kellerhals}{Bentert et~al\mbox{.}}{2025}]%
        {ijcai2025p417}
\bibfield{author}{\bibinfo{person}{Matthias Bentert}, \bibinfo{person}{Robert Bredereck}, \bibinfo{person}{Eva Deltl}, \bibinfo{person}{Pallavi Jain}, {and} \bibinfo{person}{Leon Kellerhals}.} \bibinfo{year}{2025}\natexlab{}.
\newblock \showarticletitle{How to Resolve Envy by Adding Goods}. In \bibinfo{booktitle}{\emph{Proceedings of the 34th International Joint Conference on Artificial Intelligence, ({IJCAI' 25})}}. \bibinfo{publisher}{ijcai.org}, \bibinfo{pages}{3753--3761}.
\newblock

\bibitem[\protect\citeauthoryear{Bouveret, Chevaleyre, and Maudet}{Bouveret et~al\mbox{.}}{2016}]%
        {DBLP:reference/choice/BouveretCM16}
\bibfield{author}{\bibinfo{person}{Sylvain Bouveret}, \bibinfo{person}{Yann Chevaleyre}, {and} \bibinfo{person}{Nicolas Maudet}.} \bibinfo{year}{2016}\natexlab{}.
\newblock \showarticletitle{Fair allocation of indivisible goods}.
\newblock In \bibinfo{booktitle}{\emph{Handbook of Computational Social Choice}}, \bibfield{editor}{\bibinfo{person}{Felix Brandt}, \bibinfo{person}{Vincent Conitzer}, \bibinfo{person}{Ulle Endriss}, \bibinfo{person}{J{\'{e}}r{\^{o}}me Lang}, {and} \bibinfo{person}{Ariel~D. Procaccia}} (Eds.). \bibinfo{publisher}{Cambridge University Press}, \bibinfo{pages}{284--310}.
\newblock

\bibitem[\protect\citeauthoryear{Bouveret and Lang}{Bouveret and Lang}{2008}]%
        {DBLP:journals/jair/BouveretL08}
\bibfield{author}{\bibinfo{person}{Sylvain Bouveret} {and} \bibinfo{person}{J{\'{e}}r{\^{o}}me Lang}.} \bibinfo{year}{2008}\natexlab{}.
\newblock \showarticletitle{Efficiency and envy-freeness in fair division of indivisible goods: {L}ogical representation and complexity}.
\newblock \bibinfo{journal}{\emph{J. Artif. Intell. Res.}}  \bibinfo{volume}{32} (\bibinfo{year}{2008}), \bibinfo{pages}{525--564}.
\newblock

\bibitem[\protect\citeauthoryear{Bouveret and Lema{\^{\i}}tre}{Bouveret and Lema{\^{\i}}tre}{2016}]%
        {DBLP:journals/aamas/BouveretL16}
\bibfield{author}{\bibinfo{person}{Sylvain Bouveret} {and} \bibinfo{person}{Michel Lema{\^{\i}}tre}.} \bibinfo{year}{2016}\natexlab{}.
\newblock \showarticletitle{Characterizing conflicts in fair division of indivisible goods using a scale of criteria}.
\newblock \bibinfo{journal}{\emph{Auton. Agents Multi Agent Syst.}} \bibinfo{volume}{30}, \bibinfo{number}{2} (\bibinfo{year}{2016}), \bibinfo{pages}{259--290}.
\newblock

\bibitem[\protect\citeauthoryear{Brams and Taylor}{Brams and Taylor}{1996}]%
        {brams1996fair}
\bibfield{author}{\bibinfo{person}{Steven~J Brams} {and} \bibinfo{person}{Alan~D Taylor}.} \bibinfo{year}{1996}\natexlab{}.
\newblock \bibinfo{booktitle}{\emph{Fair division: {F}rom cake-cutting to dispute resolution}}.
\newblock \bibinfo{publisher}{Cambridge University Press}.
\newblock

\bibitem[\protect\citeauthoryear{Caragiannis, Gravin, and Huang}{Caragiannis et~al\mbox{.}}{2019a}]%
        {DBLP:conf/ec/CaragiannisGH19}
\bibfield{author}{\bibinfo{person}{Ioannis Caragiannis}, \bibinfo{person}{Nick Gravin}, {and} \bibinfo{person}{Xin Huang}.} \bibinfo{year}{2019}\natexlab{a}.
\newblock \showarticletitle{Envy-freeness up to any item with high {N}ash welfare: {T}he virtue of donating items}. In \bibinfo{booktitle}{\emph{Proceedings of the 2019 {ACM} Conference on Economics and Computation ({EC} '19)}}. \bibinfo{publisher}{{ACM}}, \bibinfo{pages}{527--545}.
\newblock

\bibitem[\protect\citeauthoryear{Caragiannis, Kurokawa, Moulin, Procaccia, Shah, and Wang}{Caragiannis et~al\mbox{.}}{2019b}]%
        {DBLP:journals/teco/CaragiannisKMPS19}
\bibfield{author}{\bibinfo{person}{Ioannis Caragiannis}, \bibinfo{person}{David Kurokawa}, \bibinfo{person}{Herv{\'{e}} Moulin}, \bibinfo{person}{Ariel~D. Procaccia}, \bibinfo{person}{Nisarg Shah}, {and} \bibinfo{person}{Junxing Wang}.} \bibinfo{year}{2019}\natexlab{b}.
\newblock \showarticletitle{The unreasonable fairness of maximum {N}ash welfare}.
\newblock \bibinfo{journal}{\emph{{ACM} Trans. Economics and Comput.}} \bibinfo{volume}{7}, \bibinfo{number}{3} (\bibinfo{year}{2019}), \bibinfo{pages}{12:1--12:32}.
\newblock

\bibitem[\protect\citeauthoryear{Chakraborty, Igarashi, Suksompong, and Zick}{Chakraborty et~al\mbox{.}}{2021}]%
        {DBLP:journals/teco/ChakrabortyISZ21}
\bibfield{author}{\bibinfo{person}{Mithun Chakraborty}, \bibinfo{person}{Ayumi Igarashi}, \bibinfo{person}{Warut Suksompong}, {and} \bibinfo{person}{Yair Zick}.} \bibinfo{year}{2021}\natexlab{}.
\newblock \showarticletitle{Weighted envy-freeness in indivisible item allocation}.
\newblock \bibinfo{journal}{\emph{{ACM} Trans. Economics and Comput.}} \bibinfo{volume}{9}, \bibinfo{number}{3} (\bibinfo{year}{2021}), \bibinfo{pages}{18:1--18:39}.
\newblock

\bibitem[\protect\citeauthoryear{Chevaleyre, Endriss, Estivie, and Maudet}{Chevaleyre et~al\mbox{.}}{2007}]%
        {DBLP:conf/ijcai/ChevaleyreEEM07}
\bibfield{author}{\bibinfo{person}{Yann Chevaleyre}, \bibinfo{person}{Ulle Endriss}, \bibinfo{person}{Sylvia Estivie}, {and} \bibinfo{person}{Nicolas Maudet}.} \bibinfo{year}{2007}\natexlab{}.
\newblock \showarticletitle{Reaching envy-free states in distributed negotiation settings}. In \bibinfo{booktitle}{\emph{Proceedings of the 20th International Joint Conference on Artificial Intelligence ({IJCAI} ’07)}}. \bibinfo{publisher}{ijcai.org}, \bibinfo{pages}{1239--1244}.
\newblock

\bibitem[\protect\citeauthoryear{Cohen, Eden, Eden, and Vasilyan}{Cohen et~al\mbox{.}}{2024}]%
        {DBLP:conf/nips/CohenEEV24}
\bibfield{author}{\bibinfo{person}{Ilan~Reuven Cohen}, \bibinfo{person}{Alon Eden}, \bibinfo{person}{Talya Eden}, {and} \bibinfo{person}{Arsen Vasilyan}.} \bibinfo{year}{2024}\natexlab{}.
\newblock \showarticletitle{Plant-and-steal: {T}ruthful fair allocations via predictions}. In \bibinfo{booktitle}{\emph{Proceedings of the 38th Annual Conference on Neural Information Processing Systems (NeurIPS '24)}}.
\newblock

\bibitem[\protect\citeauthoryear{{Commission on Social Determinants of Health}}{{Commission on Social Determinants of Health}}{2008}]%
        {csdh2008}
\bibfield{author}{\bibinfo{person}{{Commission on Social Determinants of Health}}.} \bibinfo{year}{2008}\natexlab{}.
\newblock \bibinfo{booktitle}{\emph{Closing the gap in a generation: {H}ealth equity through action on the social determinants of health}}.
\newblock \bibinfo{publisher}{World Health Organization}, \bibinfo{address}{Geneva}.
\newblock

\bibitem[\protect\citeauthoryear{Deligkas, Eiben, Ganian, Goldsmith, and Ioannidis}{Deligkas et~al\mbox{.}}{2025}]%
        {DBLP:conf/aaai/DeligkasEGGI25}
\bibfield{author}{\bibinfo{person}{Argyrios Deligkas}, \bibinfo{person}{Eduard Eiben}, \bibinfo{person}{Robert Ganian}, \bibinfo{person}{Tiger{-}Lily Goldsmith}, {and} \bibinfo{person}{Stavros~D. Ioannidis}.} \bibinfo{year}{2025}\natexlab{}.
\newblock \showarticletitle{The complexity of extending fair allocations of indivisible goods}. In \bibinfo{booktitle}{\emph{Proceedings of the 39th {AAAI} Conference on Artificial Intelligence ({AAAI} ’25)}}. \bibinfo{publisher}{{AAAI} Press}, \bibinfo{pages}{13745--13753}.
\newblock

\bibitem[\protect\citeauthoryear{Dias, da~Fonseca, de~Figueiredo, and Szwarcfiter}{Dias et~al\mbox{.}}{2003}]%
        {DBLP:journals/tcs/DiasFFS03}
\bibfield{author}{\bibinfo{person}{V{\^{a}}nia M.~F{\'{e}}lix Dias}, \bibinfo{person}{Guilherme~Dias da Fonseca}, \bibinfo{person}{Celina M.~H. de Figueiredo}, {and} \bibinfo{person}{Jayme~Luiz Szwarcfiter}.} \bibinfo{year}{2003}\natexlab{}.
\newblock \showarticletitle{The stable marriage problem with restricted pairs}.
\newblock \bibinfo{journal}{\emph{Theor. Comput. Sci.}} \bibinfo{volume}{306}, \bibinfo{number}{1-3} (\bibinfo{year}{2003}), \bibinfo{pages}{391--405}.
\newblock

\bibitem[\protect\citeauthoryear{Dong, Jong, and King}{Dong et~al\mbox{.}}{2020}]%
        {dong2020does}
\bibfield{author}{\bibinfo{person}{Anmei Dong}, \bibinfo{person}{Morris Siu-Yung Jong}, {and} \bibinfo{person}{Ronnel~B King}.} \bibinfo{year}{2020}\natexlab{}.
\newblock \showarticletitle{How does prior knowledge influence learning engagement? The mediating roles of cognitive load and help-seeking}.
\newblock \bibinfo{journal}{\emph{Frontiers in psychology}}  \bibinfo{volume}{11} (\bibinfo{year}{2020}), \bibinfo{pages}{591203}.
\newblock

\bibitem[\protect\citeauthoryear{Elford}{Elford}{2023}]%
        {sep-equal-opportunity}
\bibfield{author}{\bibinfo{person}{Gideon Elford}.} \bibinfo{year}{2023}\natexlab{}.
\newblock \showarticletitle{Equality of Opportunity}.
\newblock In \bibinfo{booktitle}{\emph{The {Stanford} Encyclopedia of Philosophy} (\bibinfo{edition}{{F}all 2023} ed.)}, \bibfield{editor}{\bibinfo{person}{Edward~N. Zalta} {and} \bibinfo{person}{Uri Nodelman}} (Eds.). \bibinfo{publisher}{Metaphysics Research Lab, Stanford University}.
\newblock

\bibitem[\protect\citeauthoryear{Emanuel and Persad}{Emanuel and Persad}{2023}]%
        {emanuel2023shared}
\bibfield{author}{\bibinfo{person}{Ezekiel~J Emanuel} {and} \bibinfo{person}{Govind Persad}.} \bibinfo{year}{2023}\natexlab{}.
\newblock \showarticletitle{The shared ethical framework to allocate scarce medical resources: {A} lesson from COVID-19}.
\newblock \bibinfo{journal}{\emph{The Lancet}} \bibinfo{volume}{401}, \bibinfo{number}{10391} (\bibinfo{year}{2023}), \bibinfo{pages}{1892--1902}.
\newblock

\bibitem[\protect\citeauthoryear{Farhadi, Ghodsi, Hajiaghayi, Lahaie, Pennock, Seddighin, Seddighin, and Yami}{Farhadi et~al\mbox{.}}{2019}]%
        {DBLP:journals/jair/FarhadiGHLPSSY19}
\bibfield{author}{\bibinfo{person}{Alireza Farhadi}, \bibinfo{person}{Mohammad Ghodsi}, \bibinfo{person}{Mohammad~Taghi Hajiaghayi}, \bibinfo{person}{S{\'{e}}bastien Lahaie}, \bibinfo{person}{David~M. Pennock}, \bibinfo{person}{Masoud Seddighin}, \bibinfo{person}{Saeed Seddighin}, {and} \bibinfo{person}{Hadi Yami}.} \bibinfo{year}{2019}\natexlab{}.
\newblock \showarticletitle{Fair allocation of indivisible goods to asymmetric agents}.
\newblock \bibinfo{journal}{\emph{J. Artif. Intell. Res.}}  \bibinfo{volume}{64} (\bibinfo{year}{2019}), \bibinfo{pages}{1--20}.
\newblock

\bibitem[\protect\citeauthoryear{Garey and Johnson}{Garey and Johnson}{1979}]%
        {gareyJ79}
\bibfield{author}{\bibinfo{person}{Michael~R. Garey} {and} \bibinfo{person}{David~S. Johnson}.} \bibinfo{year}{1979}\natexlab{}.
\newblock \bibinfo{booktitle}{\emph{Computers and intractability: {A} guide to the theory of NP-completeness}}.
\newblock \bibinfo{publisher}{W. H. Freeman}.
\newblock

\bibitem[\protect\citeauthoryear{Goldman and Procaccia}{Goldman and Procaccia}{2014}]%
        {DBLP:journals/sigecom/GoldmanP14}
\bibfield{author}{\bibinfo{person}{Jonathan~R. Goldman} {and} \bibinfo{person}{Ariel~D. Procaccia}.} \bibinfo{year}{2014}\natexlab{}.
\newblock \showarticletitle{Spliddit: {U}nleashing fair division algorithms}.
\newblock \bibinfo{journal}{\emph{SIGecom Exch.}} \bibinfo{volume}{13}, \bibinfo{number}{2} (\bibinfo{year}{2014}), \bibinfo{pages}{41--46}.
\newblock

\bibitem[\protect\citeauthoryear{Halpern and Shah}{Halpern and Shah}{2019}]%
        {DBLP:conf/sagt/HalpernS19}
\bibfield{author}{\bibinfo{person}{Daniel Halpern} {and} \bibinfo{person}{Nisarg Shah}.} \bibinfo{year}{2019}\natexlab{}.
\newblock \showarticletitle{Fair division with subsidy}. In \bibinfo{booktitle}{\emph{Proceedings of the 12th International Symposium on Algorithmic Game Theory ({SAGT} '19)}}. \bibinfo{publisher}{Springer}, \bibinfo{pages}{374--389}.
\newblock

\bibitem[\protect\citeauthoryear{Hardt, Price, and Srebro}{Hardt et~al\mbox{.}}{2016}]%
        {DBLP:conf/nips/HardtPNS16}
\bibfield{author}{\bibinfo{person}{Moritz Hardt}, \bibinfo{person}{Eric Price}, {and} \bibinfo{person}{Nati Srebro}.} \bibinfo{year}{2016}\natexlab{}.
\newblock \showarticletitle{Equality of opportunity in supervised learning}. In \bibinfo{booktitle}{\emph{Proceedings of the 30th Annual Conference on Neural Information Processing Systems (NIPS '16)}}. \bibinfo{pages}{3315--3323}.
\newblock

\bibitem[\protect\citeauthoryear{Hosseini, Sikdar, Vaish, Wang, and Xia}{Hosseini et~al\mbox{.}}{2020}]%
        {hosseiniSVWX20}
\bibfield{author}{\bibinfo{person}{Hadi Hosseini}, \bibinfo{person}{Sujoy Sikdar}, \bibinfo{person}{Rohit Vaish}, \bibinfo{person}{Hejun Wang}, {and} \bibinfo{person}{Lirong Xia}.} \bibinfo{year}{2020}\natexlab{}.
\newblock \showarticletitle{Fair division through information withholding}. In \bibinfo{booktitle}{\emph{Proceedings of the 34th {AAAI} Conference on Artificial Intelligence ({AAAI} ’20)}}. \bibinfo{publisher}{{AAAI} Press}, \bibinfo{pages}{2014--2021}.
\newblock

\bibitem[\protect\citeauthoryear{Hosseini, Sikdar, Vaish, Wang, and Xia}{Hosseini et~al\mbox{.}}{2019}]%
        {DBLP:journals/corr/abs-1907-02583}
\bibfield{author}{\bibinfo{person}{Hadi Hosseini}, \bibinfo{person}{Sujoy Sikdar}, \bibinfo{person}{Rohit Vaish}, \bibinfo{person}{Jun Wang}, {and} \bibinfo{person}{Lirong Xia}.} \bibinfo{year}{2019}\natexlab{}.
\newblock \showarticletitle{Fair division through information withholding}.
\newblock \bibinfo{journal}{\emph{CoRR}}  \bibinfo{volume}{abs/1907.02583} (\bibinfo{year}{2019}).
\newblock
\showeprint[arXiv]{1907.02583}
\urldef\tempurl%
\url{http://arxiv.org/abs/1907.02583}
\showURL{%
\tempurl}

\bibitem[\protect\citeauthoryear{Kash, Procaccia, and Shah}{Kash et~al\mbox{.}}{2014}]%
        {DBLP:journals/jair/KashPS14}
\bibfield{author}{\bibinfo{person}{Ian~A. Kash}, \bibinfo{person}{Ariel~D. Procaccia}, {and} \bibinfo{person}{Nisarg Shah}.} \bibinfo{year}{2014}\natexlab{}.
\newblock \showarticletitle{No agent left behind: {D}ynamic fair division of multiple resources}.
\newblock \bibinfo{journal}{\emph{J. Artif. Intell. Res.}}  \bibinfo{volume}{51} (\bibinfo{year}{2014}), \bibinfo{pages}{579--603}.
\newblock

\bibitem[\protect\citeauthoryear{Kurokawa, Procaccia, and Wang}{Kurokawa et~al\mbox{.}}{2018}]%
        {DBLP:journals/jacm/KurokawaPW18}
\bibfield{author}{\bibinfo{person}{David Kurokawa}, \bibinfo{person}{Ariel~D. Procaccia}, {and} \bibinfo{person}{Junxing Wang}.} \bibinfo{year}{2018}\natexlab{}.
\newblock \showarticletitle{Fair enough: {G}uaranteeing approximate Maximin shares}.
\newblock \bibinfo{journal}{\emph{J. {ACM}}} \bibinfo{volume}{65}, \bibinfo{number}{2} (\bibinfo{year}{2018}), \bibinfo{pages}{8:1--8:27}.
\newblock

\bibitem[\protect\citeauthoryear{Lane, Sarkies, Martin, and Haines}{Lane et~al\mbox{.}}{2017}]%
        {LANE201711}
\bibfield{author}{\bibinfo{person}{Haylee Lane}, \bibinfo{person}{Mitchell Sarkies}, \bibinfo{person}{Jennifer Martin}, {and} \bibinfo{person}{Terry Haines}.} \bibinfo{year}{2017}\natexlab{}.
\newblock \showarticletitle{Equity in healthcare resource allocation decision making: {A} systematic review}.
\newblock \bibinfo{journal}{\emph{Social Science \& Medicine}}  \bibinfo{volume}{175} (\bibinfo{year}{2017}), \bibinfo{pages}{11--27}.
\newblock

\bibitem[\protect\citeauthoryear{Lipton, Markakis, Mossel, and Saberi}{Lipton et~al\mbox{.}}{2004}]%
        {DBLP:conf/sigecom/LiptonMMS04}
\bibfield{author}{\bibinfo{person}{Richard~J. Lipton}, \bibinfo{person}{Evangelos Markakis}, \bibinfo{person}{Elchanan Mossel}, {and} \bibinfo{person}{Amin Saberi}.} \bibinfo{year}{2004}\natexlab{}.
\newblock \showarticletitle{On approximately fair allocations of indivisible goods}. In \bibinfo{booktitle}{\emph{Proceedings of the 5th {ACM} Conference on Electronic Commerce (EC '04)}}. \bibinfo{publisher}{{ACM}}, \bibinfo{pages}{125--131}.
\newblock

\bibitem[\protect\citeauthoryear{Liu, Lu, Suzuki, and Walsh}{Liu et~al\mbox{.}}{2024}]%
        {DBLP:journals/jair/LiuLSW24}
\bibfield{author}{\bibinfo{person}{Shengxin Liu}, \bibinfo{person}{Xinhang Lu}, \bibinfo{person}{Mashbat Suzuki}, {and} \bibinfo{person}{Toby Walsh}.} \bibinfo{year}{2024}\natexlab{}.
\newblock \showarticletitle{Mixed fair division: {A} survey}.
\newblock \bibinfo{journal}{\emph{J. Artif. Intell. Res.}}  \bibinfo{volume}{80} (\bibinfo{year}{2024}), \bibinfo{pages}{1373--1406}.
\newblock

\bibitem[\protect\citeauthoryear{Moulin}{Moulin}{2004}]%
        {moulin2004fair}
\bibfield{author}{\bibinfo{person}{Herv{\'e} Moulin}.} \bibinfo{year}{2004}\natexlab{}.
\newblock \bibinfo{booktitle}{\emph{Fair division and collective welfare}}.
\newblock \bibinfo{publisher}{MIT press}.
\newblock

\bibitem[\protect\citeauthoryear{Moulin}{Moulin}{2019}]%
        {moulin2019fair}
\bibfield{author}{\bibinfo{person}{Herv{\'e} Moulin}.} \bibinfo{year}{2019}\natexlab{}.
\newblock \showarticletitle{Fair division in the internet age}.
\newblock \bibinfo{journal}{\emph{Annual Review of Economics}} \bibinfo{volume}{11}, \bibinfo{number}{1} (\bibinfo{year}{2019}), \bibinfo{pages}{407--441}.
\newblock

\bibitem[\protect\citeauthoryear{{Prakash HV}, Igarashi, and Vaish}{{Prakash HV} et~al\mbox{.}}{2025}]%
        {DBLP:conf/aaai/HV0V25}
\bibfield{author}{\bibinfo{person}{Vishwa {Prakash HV}}, \bibinfo{person}{Ayumi Igarashi}, {and} \bibinfo{person}{Rohit Vaish}.} \bibinfo{year}{2025}\natexlab{}.
\newblock \showarticletitle{Fair and efficient completion of indivisible goods}. In \bibinfo{booktitle}{\emph{Proceedings of the 39th {AAAI} Conference on Artificial Intelligence ({AAAI} ’25)}}. \bibinfo{publisher}{{AAAI} Press}, \bibinfo{pages}{14045--14053}.
\newblock

\bibitem[\protect\citeauthoryear{{Scottish Government}}{{Scottish Government}}{2021}]%
        {scottishgov2021fairer}
\bibfield{author}{\bibinfo{person}{{Scottish Government}}.} \bibinfo{year}{2021}\natexlab{}.
\newblock \bibinfo{title}{Fairer Scotland Duty: {G}uidance for public bodies}.
\newblock
\newblock

\bibitem[\protect\citeauthoryear{Trump}{Trump}{2025}]%
        {trump2025equality}
\bibfield{author}{\bibinfo{person}{Donald~J. Trump}.} \bibinfo{year}{2025}\natexlab{}.
\newblock \bibinfo{title}{Restoring equality of opportunity and meritocracy}.
\newblock
\newblock
\newblock
\shownote{Executive Order 14281.}

\bibitem[\protect\citeauthoryear{Verma, Singh, Mate, Verma, Gorantla, Madhiwalla, Hegde, Thakkar, Jain, Tambe, and Taneja}{Verma et~al\mbox{.}}{2023}]%
        {DBLP:journals/aim/VermaSMVGMHTJTT23}
\bibfield{author}{\bibinfo{person}{Shresth Verma}, \bibinfo{person}{Gargi Singh}, \bibinfo{person}{Aditya Mate}, \bibinfo{person}{Paritosh Verma}, \bibinfo{person}{Sruthi Gorantla}, \bibinfo{person}{Neha Madhiwalla}, \bibinfo{person}{Aparna Hegde}, \bibinfo{person}{Divy Thakkar}, \bibinfo{person}{Manish Jain}, \bibinfo{person}{Milind Tambe}, {and} \bibinfo{person}{Aparna Taneja}.} \bibinfo{year}{2023}\natexlab{}.
\newblock \showarticletitle{Expanding impact of mobile health programs: {SAHELI} for maternal and child care}.
\newblock \bibinfo{journal}{\emph{{AI} Mag.}} \bibinfo{volume}{44}, \bibinfo{number}{4} (\bibinfo{year}{2023}), \bibinfo{pages}{363--376}.
\newblock

\bibitem[\protect\citeauthoryear{Verma, Zhao, Shah, Boehmer, Taneja, and Tambe}{Verma et~al\mbox{.}}{2024}]%
        {DBLP:conf/uai/VermaZSBTT24}
\bibfield{author}{\bibinfo{person}{Shresth Verma}, \bibinfo{person}{Yunfan Zhao}, \bibinfo{person}{Sanket Shah}, \bibinfo{person}{Niclas Boehmer}, \bibinfo{person}{Aparna Taneja}, {and} \bibinfo{person}{Milind Tambe}.} \bibinfo{year}{2024}\natexlab{}.
\newblock \showarticletitle{Group Fairness in predict-then-optimize settings for restless bandits}. In \bibinfo{booktitle}{\emph{Proceedings of the 40th Conference on Uncertainty in Artificial Intelligence (UAI '24)}} \emph{(\bibinfo{series}{Proceedings of Machine Learning Research}, Vol.~\bibinfo{volume}{244})}. \bibinfo{publisher}{{PMLR}}, \bibinfo{pages}{3448--3469}.
\newblock

\bibitem[\protect\citeauthoryear{Wu, Li, and Gan}{Wu et~al\mbox{.}}{2021}]%
        {DBLP:journals/corr/abs-2012-03766}
\bibfield{author}{\bibinfo{person}{Xiaowei Wu}, \bibinfo{person}{Bo Li}, {and} \bibinfo{person}{Jiarui Gan}.} \bibinfo{year}{2021}\natexlab{}.
\newblock \showarticletitle{Budget-feasible maximum {N}ash social welfare is almost envy-free}. In \bibinfo{booktitle}{\emph{Proceedings of the 30th International Joint Conference on Artificial Intelligence ({IJCAI} ’21)}}. \bibinfo{publisher}{ijcai.org}, \bibinfo{pages}{465--471}.
\newblock

\bibitem[\protect\citeauthoryear{Xia and Conitzer}{Xia and Conitzer}{2011}]%
        {DBLP:journals/jair/XiaC11}
\bibfield{author}{\bibinfo{person}{Lirong Xia} {and} \bibinfo{person}{Vincent Conitzer}.} \bibinfo{year}{2011}\natexlab{}.
\newblock \showarticletitle{Determining possible and necessary winners given partial orders}.
\newblock \bibinfo{journal}{\emph{J. Artif. Intell. Res.}}  \bibinfo{volume}{41} (\bibinfo{year}{2011}), \bibinfo{pages}{25--67}.
\newblock

\bibitem[\protect\citeauthoryear{Zhou, Bai, and Wu}{Zhou et~al\mbox{.}}{2023}]%
        {DBLP:conf/icml/0002B023}
\bibfield{author}{\bibinfo{person}{Shengwei Zhou}, \bibinfo{person}{Rufan Bai}, {and} \bibinfo{person}{Xiaowei Wu}.} \bibinfo{year}{2023}\natexlab{}.
\newblock \showarticletitle{Multi-agent online scheduling: {MMS} allocations for indivisible items}. In \bibinfo{booktitle}{\emph{Proceedings of the 40th International Conference on Machine Learning ({ICML} '23)}} \emph{(\bibinfo{series}{Proceedings of Machine Learning Research}, Vol.~\bibinfo{volume}{202})}. \bibinfo{publisher}{{PMLR}}, \bibinfo{pages}{42506--42516}.
\newblock

\end{thebibliography}
\end{document}